\documentclass[aps,prd,notitlepage,nofootinbib,superscriptaddress,longbibliography]{revtex4-1}
\linespread{1}
\usepackage{amsmath,amssymb,amsthm,latexsym, mathtools,calc}
\usepackage[english]{babel}
\usepackage{graphicx,color}
\usepackage{xspace}
\usepackage{pifont,dsfont}
\usepackage{marvosym}
\usepackage{slashed}
\usepackage{subfigure}
\usepackage[dvipsnames]{xcolor}
\usepackage{hyperref}


\newcommand{\der}[3][]{\frac{\mathrm{d}^{#1}{#2}}{{\mathrm{d}{#3}}^{#1}}}

\newcommand{\be}{\begin{equation}}
\newcommand{\ee}{\end{equation}}
\newcommand{\beq}{\begin{equation}}
\newcommand{\eeq}{\end{equation}}
\newcommand{\bes}{\begin{eqnarray}}
\newcommand{\ees}{\end{eqnarray}}
\newcommand{\bqa}{\begin{eqnarray}}
\newcommand{\eqa}{\end{eqnarray}}
\newcommand{\bea}{\begin{eqnarray}}
\newcommand{\eea}{\end{eqnarray}}

\newcommand{\N}{\mathds{N}}

\newcommand{\R}{\mathds{R}}
\newcommand{\C}{\mathds{C}}

\newcommand{\nn}{\nonumber}

\renewcommand{\N}{\mathbb{N}}
\renewcommand{\R}{\mathbb{R}}
\renewcommand{\C}{\mathbb{C}}

\newcommand{\su}{\mathfrak{su}}
\newcommand{\SU}{\mathrm{SU}}
\renewcommand{\u}{\mathfrak{u}}

\newcommand{\Sp}{\mathrm{Sp}}
\newcommand{\Var}{\mathrm{Var}}
\newcommand{\Cov}{\mathrm{Cov}}
\newcommand{\rank}{\mathrm{rank}}

\newcommand{\cU}{{\mathcal U}}

\newtheorem{lemma}{Lemma}

\newtheorem{definition}{Definition}

\newtheorem{proposition}{Proposition}

\newcommand{\cC}{{\cal C}}

\newcommand{\cP}{{\cal P}}

\def\so{{\mathfrak{so}}}

\newcommand{\la}{\langle}
\newcommand{\JS}{Jordan-Schwinger}
\newcommand{\ra}{\rangle}

\def\one{{\bf 1}}
\def\1{{\bf 1}}
\def\0{{\bf 0}}

\def\dr{{\rightarrow}}

\def\ov{\overline}
\def\conj{\overline}

\def\ii{{i}}
\def\rd{{\textrm{d}}}
\def\demi{{\frac{1}{2}}}
\def\half{{\frac{1}{2}}}
\def\shalf{{\frac{1}{2}}}
\def\Star{{^*}}
\def\Dagger{^\dagger}
\def\Transpose{^{\textrm{t}}}
\def\transpose{{\textrm{t}}}

\newcommand{\act}{{\,\triangleright\,}}
\newcommand{\ract}{{\,\triangleright\,}}

\def\setst{{\,|\,}}

\DeclareMathOperator{\tr}{tr}

\DeclareMathOperator{\SO}{SO}
\DeclareMathOperator{\SL}{SL}
\DeclareMathOperator{\GL}{GL}
\DeclareMathOperator{\U}{U}

\def\poi#1{\{ #1 \}}
\def\set#1{\{ #1 \}}

\def\paren#1{\left( #1 \right)}
\def\ket#1{| #1 \ra}
\def\ketb#1{| #1 ]}
\def\ketp#1{| #1 )}

\def\bra#1{\la #1 |}

\def\braket#1{\la #1 \ra}
\def\praket#1{( #1 )}
\def\braketlb#1{[ #1 \ra}
\def\braketrb#1{\la #1 ]}
\def\bracks#1{\left[ #1 \right]}
\def\abs#1{| #1 |}



\begin{document}
\title{\Large \bf notes}

\title{ $\SO^*(2N)$ coherent states for loop quantum gravity}

\author{{\bf Florian Girelli}}\email{fgirelli@uwaterloo.ca}
\affiliation{Department of Applied Mathematics, University of Waterloo, Waterloo, Ontario, Canada}
\author{{\bf Giuseppe Sellaroli}}\email{gsellaroli@uwaterloo.ca}
\affiliation{Department of Applied Mathematics, University of Waterloo, Waterloo, Ontario, Canada}
\date{\small\today}

\begin{abstract}
\noindent 
A $\SU(2)$ intertwiner  with $N$ legs can be interpreted as the quantum state of a convex polyhedron  with 
$N$ faces (when working in 3d). We show that the intertwiner Hilbert space  carries a representation of the non-compact group $\SO^*(2N)$. This group can be viewed as the subgroup of the symplectic group $\Sp(4N,\R)$ which preserves the $\SU(2)$ invariance. We construct the associated  Perelomov coherent states and  discuss the notion of semi-classical limit, which is more subtle that we could expect. Our work completes the work by Freidel and Livine \cite{freidel_fine_2010, FreidelLivine2011} which focused on the $\U(N)$ subgroup of $\SO^*(2N)$. 

\end{abstract}

\medskip

\keywords{}

\maketitle
  
\tableofcontents  
  
 \section*{Introduction} 
 The spinorial formalism for loop quantum gravity (LQG) \cite{Girelli:2005ii} provides a different way to parameterize the LQG Hilbert space and as such provides promising avenues to address some problems encountered in  the field, such as\footnote{For more references see \cite{Livine:2011gp}.}: how to construct the intertwiner observables when dealing with a quantum group (to introduce  a non-zero cosmological constant) \cite{Dupuis:2013lka}, how to implement the simplicity constraints in a natural way  \cite{Dupuis:2011fz}, how to calculate various types of entropies \cite{freidel_fine_2010,Bianchi:2016tmw}.

\medskip

One of the key results of this formalism is that it provides a closed algebra, spanned by $E_{ab}, F_{ab}, \tilde F_{ab}$, to express any intertwiner observables\footnote{The operators $E_{ab}, F_{ab}, \tilde F_{ab}$ are invariant under the global $\SU(2)$ transformations but are not self-adjoint operators. So strictly speaking there are not observables. However we can construct polynomial functions of these operators which will be self-adjoint.}. This algebra, in fact a Lie algebra, contains $ \u(N)$ as a subalgebra (for a $N$ legged intertwiner) which is generated by the   $E_{ab}$. As a consequence, Freidel and Livine have shown  that the space of $N$-legged intertwiners with fixed total area carries a specific representation of $\U(N)$ \cite{freidel_fine_2010}.  They showed furthermore that a $\U(N)$ coherent states (\`a la Perelomov) could be interpreted as a semi-classical polyhedron with $N$ faces and fixed area \cite{FreidelLivine2011}.     The rest of the algebra has not been  fully studied yet and this is what we intend to do here. 

\medskip

Provided we redefine the observables $E_{ab}$ with respect to the usual convention, the full algebra of observables  is isomorphic as a complex algebra to $\so(2N,\C)$. We look then for the real algebra which would have $\u(N)$ as its compact sub-algebra and such that the  $F_{ab}, \tilde F_{ab}$ are antisymmetric under the permutation $a\leftrightarrow b$.  There is an unique choice \cite{boothby_symmetric_1972}, given by $\so^*(2N)$ which spans a non-compact group $\SO^*(2N)$. This group has not been studied much, in particular its representation theory is not completely known (for $N>1$). However, applications of \( \SO^*(2N) \) in physics have been already considered in the past. For example, it has been suggested to use \( \SO^*(2N) \) as a generalized space-time symmetry or as a dynamical algebra containing \( \SO(3,1) \) \cite{so*_applications}. In our work, we  show that the intertwiner Hilbert space provides an infinite-dimensional representation of the $\so^*(2N)$ Lie algebra, parametrized in terms of the total area. Indeed, if the  $ \u(N)$ observables can be understood as transformations between intertwiners with fixed areas, the left-over of the algebra, spanned by $\tilde F_{ab},  F_{ab}$,  can be interpreted as maps between intertwiners which create or annihilate quanta of area. As such given a   $N$ legged intertwiner with a given total area, any other $N$ legged intertwiner can be obtained from it by a suitable $\SO^*(2N)$ transformation. Said otherwise,  if we think of a $N$-legged intertwiner as parametrized in terms of the states of $2N$ harmonic oscillators, invariant under a global $\SU(2)$ transformation, then any other  $N$ legged intertwiner can be obtained by a symplectomorphism (or Bogoliubov transformations) which preserves the $\SU(2)$ invariance. Hence, we will show that $\SO^*(2N)$ can be seen as the subgroup of $\Sp(4N,\R)$ preserving the $\SU(2)$ invariance.

\medskip

Once we have identified the Lie algebra/group, we can construct a new intertwiner coherent state (for a thorough review on intertwiner coherent states see \cite{Livine:2013tsa}). Note that there are different options to generalize the standard concept of coherent state for an harmonic oscillator. Indeed  the harmonic oscillator coherent state satisfies two key properties: the creation operator acts diagonally on the coherent state and the Heisenberg group acts coherently on the state. It is typically only when dealing with  Heisenberg group like structures that we can have both of these properties at once. To generalize the notion of coherent state to the $\SO^*(2N)$ case, we therefore have the choice: we retain any of these properties to construct the state. The construction of coherent states which diagonalize the creation operators $\tilde F_{ab}$ has been performed in \cite{Dupuis:2011fz}. These states are actually tailored to solve the so-called holomorphic simplicity constraints. The other option, to keep a coherent action of the group, falls into the Gilmore-Perelomov program to construct coherent states \cite{perelomov_article, perelomov_book}.   The group $\SO^*(2N)$ being non-compact makes things a bit easier and in fact these coherent states were very succinctly studied in Perelomov's book \cite{perelomov_book}, albeit not for the intertwiner representation. We provide here the full details of their construction in a different representation than \cite{perelomov_book}. We determine the matrix elements of the generators $E_{ab}, F_{ab}, \tilde F_{ab}$ and their expectations values with respect to these states. 

The construction of a coherent state allows for the study of the semi-classical limit. We expect to recover a convex polyhedron with $N$ faces \cite{Bianchi:2010gc}. We show that this can be the case, with some extra subtleties depending on the matrix $\zeta$ parametrizing the coherent state. This  $N\times N$ matrix being antisymmetric, has a rank which is even, $\rank(\zeta)=2k$, and clearly bounded by $N$.  We will note $\lambda_\alpha^2$, $\alpha=1,..,k$ the eigenvalues of $\zeta^*\zeta$.  If all these eigenvalues are distinct, we obtain a (discrete) family of $k$ polyhedra with $N$ faces. In particular, if $\rank(\zeta)=2$, we recover one polyhedron with $N$ faces as we could expect.  $\lambda_\alpha$ (or more exactly a function of it) defines the total area of each  of the polyhedron $\alpha$.   However if some $\lambda_\alpha$ are identical, we actually get some \textit{continuous} families of polyhedra, each of the polyhedron  having a total area specified by $\lambda_\alpha$.

\medskip

It is interesting that the coherent states we have constructed already appeared  in the literature \cite{Bonzom:2012bn,Freidel:2012ji,Freidel:2013fia,Bonzom:2015ova, Bianchi:2016hmk} due to their nice features to perform calculations. Note however that they were always  defined in terms of a matrix $\zeta$ of rank 2, so that there is no issue with the semi-classical limit. Finally, many of the results presented here, especially regarding the construction of the coherent state, were also presented as part of the PhD thesis \cite{thesis}.

In Section I, we review the different parametrization of a classical convex polyhedron, introducing the classical spinorial formalism. In Section II, we introduce the quantum version of the spinorial formalism, ie the harmonic oscillators representation. We review the construction of the $\U(N)$ coherent states \textit{\`a la Perelomov} unlike what Freidel and Livine did in \cite{FreidelLivine2011}. We discuss in particular the semi-classical limit to identify the classical spinors which parametrize the semi-classical polyhedron. 
We will use the same approach to deal with the  $\SO^*(2N)$ coherent states which we define in Section III. We determine the expectation values of the basic observables and the variance of the (total) area with respect to these states. We also explain how these coherent states can be viewed as a specific class of squeezed states. Finally, we discuss how in the semi-classical limit, we can recover a discrete family of polyhedra and/or a continuous one, depending on the nature of coherent state.

\section{Polyhedron parametrization}\label{sec:spinor}\label{sec:polyhedra}
A polyhedron with $N$ faces in $\R^3$ can be reconstructed from  the $N$ normals  $\vec V_a\in\R^3$ of its faces \cite{minkowski} which satisfy what is called the \textit{closure condition}, 
\be\label{closure}
\cC=\sum_{a=1}^N \vec V_a=\vec 0.
\ee
Kapovich and Milson \cite{KM} introduced a phase space structure on the space of polyhedra for \textit{fixed} areas given by $|\vec V_a|=V_a$. The closure condition \eqref{closure} can then be seen as a momentum map implementing global rotations. Their phase space is  given by the symplectic reduction
\be
\cP_N^{KM}= (S^2\times ..\times S^2)/\!/\SO(3), \quad \vec V_a =V_a \hat v_a
\ee
with Poisson bracket on $S^2$
\be
\poi{V^i_a,V^j_b}=\delta_{ab}{\epsilon^{ij}}_k V^k_a, \quad \poi{V_a,V^j_b}=0, \quad \forall a,b.
\ee
$\cP_N^{KM}$ is a space with dimension $2N-6$. From the loop quantum gravity perspective, it is important to also have  the area as a variable. One of the strengths of the so-called \textit{spinor approach} is to provide such parametrization.   To have a phase space structure, one \textit{usually} extends the Kapovich-Milson phase space by replacing $S^2$ by $\C^2\sim \R^4\ni (\vec V,\phi )$. One of the extra degrees of freedom is the area (ie the norm of the vector) whereas the other\footnote{We  consider a space of even dimension as otherwise we cannot have a proper phase space. } one can be seen as a phase $\phi$. If we note the pair of complex numbers\footnote{We change notation with respect to the usual notation in order to avoid too many indices later.} which we call the spinors\footnote{We have  that $\la z|=(\ov x, \ov y)$ and we will also use $|z]= \left(\begin{array}{c}\ov y\\-\ov x\end{array}\right)$ as well as $[z|=(y,-x)$.}, $|z\ra=\left(\begin{array}{c}x\\y\end{array}\right)\in\C^2$, then the maps between the spinors and the vector/phase variables are the following:
\bes
 \vec V&=& \demi \la z|\vec \sigma|z\ra, \quad \vec \sigma \textrm{ being the Pauli matrices and } \quad |\vec V|=V,\nn\\
|z\ra &= & \frac{e^{i\theta}}{\sqrt{2}}\left(\begin{array}{c}
\sqrt{V+V_z}\\
e^{i\phi}\sqrt{V-V_z}
\end{array}\right), \quad e^{i\phi}=\frac{V_z+iV_y}{\sqrt{V^2-V_z^2}}.
\ees
Hence we see that given $\vec V$ we can reconstruct the spinor up to a phase $\theta$. In the spinorial approach, the polyhedron phase space \cite{Livine:2013tsa} is now given by 
\be
\cP^{spin}_N=\C^{2N}/\!/\SU(2), \qquad \textrm{ with } \poi{z_a,\ov z_b}=-i\delta_{ab}, \textrm{ the other brackets being 0}.
\ee 
The symplectic reduction by $\SU(2)$ is given by the closure constraint momentum map expressed in the spinor variables 
\be\label{closure spinor}
\sum_a^N|z_a\ra\la z_a|= \half\sum_a^N\la z_a| z_a\ra \one.
\ee
One of the key advantages of the spinor formalism is that it allows to construct a closed algebra of observables \cite{Girelli:2005ii}. We introduce the $\SU(2)$ invariant quantities 
\begin{itemize}
\item $e_{ab}=\la z_a|z_b\ra $ which changes the area of the  faces $a$ and $b$ while keeping the total area fixed.  If $a=b$ it provides the value of the area of the face $a$.
\item $\tilde f_{ab}=[z_a|z_b\ra$ which  changes the area of the  faces $a$ and $b$ while adding one unit to  the total area.
\item $ f_{ab}=\la z_a|z_b]$ which  changes the area of the  faces $a$ and $b$ while subtracting one unit to  the total area.
\end{itemize}
Any observable built in terms of the normals  $\vec V_a$ such as the norm $|\vec V_a|$ or the relative angle  $\vec V_a\cdot \vec V_b$ can be defined in terms of these observables. 
\be
|\vec V_a|^2= \frac{1}{4}e_{aa}^2, \quad \vec V_a\cdot \vec V_b= \demi e_{ab}e_{ba}-\frac14 e_{aa}e_{bb}.
\ee
Hence the spinor variables provide a finer parametrization of the polyhedron phase space, a parametrization which furthermore closes in terms of the Poisson bracket, unlike the observables expressed in terms of the normals such as $\vec V_a\cdot \vec V_b$.
\bes
\poi{e_{ab},e_{cd}} &= -i \left(\delta_{cb}e_{ad}-\delta_{ad}e_{cb}\right),
\quad 
\poi{e_{ab},f_{cd}} &= -i \left(\delta_{ad}f_{bc}-\delta_{ac}f_{bd} \right), \quad \poi{f_{ab},f_{cd}}=\poi{\tilde{f}_{\!ab},\tilde{f}_{\!cd}}=0 \nn
\\
\poi{e_{ab},\tilde{f}_{\!cd}} &= -i \left(\delta_{bc}\tilde{f}_{\!ad}-\delta_{bd}\tilde{f}_{\!ac}\right),
\quad 
\poi{f_{ab},\tilde{f}_{\!cd}} &= -i \left(\delta_{db}e_{ca}+\delta_{ca}e_{db}-\delta_{cb}e_{da}-\delta_{da}e_{cb} \right). \label{classical so*}
\ees
 The observables $e_{ab}$ form the classical version of the $\u(N)$ algebra, whereas the $e_{ab}$ together with the $f_{ab}$ and the $\widetilde f_{ab}$ form a $\so^*(2N)$ algebra. We will discuss in more  details these structures in  Section \ref{sec:CS}.

\section{Coherent states for the polyhedron with fixed area: a review}
\subsection{Harmonic oscillators and   intertwiner}
We consider  $2N$ quantum harmonic oscillators $(A_a,B_a)$, with the only non-zero commutators   
 \begin{equation} \label{HO comm}
[A_a,A_b\Dagger]=[B_a,B_b\Dagger]=\one \delta_{ab}, 
\end{equation}
which act on the Fock basis
\begin{equation}
\ket{n_A,n_B}_\text{HO}\equiv \ket{n_A}_\text{HO} \otimes \ket{n_B}_\text{HO},\quad n_A,n_b \in \N.
\end{equation}
These harmonic oscillators are the quantum version of the spinors of Section \ref{sec:polyhedra}. The observable generators are then obtained  by quantizing directly their classical definition. We choose the symmetric ordering so that $\ov z z \dr A^\dagger A + \half$ which leads to the following quantum observables\footnote{This ordering was also noticed in the first footnotes of \cite{FreidelLivine2011}. }.  

\begin{equation}\label{eq:ho_E }
 E_{ab} = A\Dagger_a A_b + B\Dagger_a B_b +\delta_{ab}\one, \quad  F_{ab} = B_a A_b - A_a B_b,\quad 
\widetilde F_{ab} = B\Dagger_a A\Dagger_b - A\Dagger_a B\Dagger_b.
\end{equation}
We emphasize the presence of the $\delta_{ab}\one$ term in the definition of $E_{ab}$ which is not usually present in the spinorial formalism where a different ordering is used. Using the harmonic oscillator commutation relations \eqref{HO comm} allows to recover 
\begin{subequations}
\begin{align}
[E_{ab},E_{cd}] &= \delta_{cb}E_{ad} - \delta_{ad}E_{cb}, \quad
[E_{ab},\widetilde F_{cd}]	= \delta_{bc}\widetilde F_{ad} - \delta_{bd}\widetilde F_{ac}, \quad
[E_{ab},F_{cd}]	= \delta_{ad}F_{bc} - \delta_{ac}F_{bd}, \\
[F_{ab},\widetilde F_{cd}]&= \delta_{db}E_{ca} + \delta_{ca}E_{db}  - \delta_{cb}E_{da} -\delta_{da}E_{cb}, \quad  
[F_{ab},F_{cd}]		= [\widetilde F_{ab},\widetilde F_{cd}] = 0,
\end{align}
\end{subequations}
It is essential to use this quantization scheme in order to recover this Lie algebra structure which we will identify to be the  $\so^*(2N)$ Lie algebra. 
\smallskip 

It will prove useful to also introduce the notation\footnote{We use the complex conjugate of \( \zeta \) in \( F_\zeta \) to ensure that  \( (F_\zeta)\Dagger = \widetilde F_\zeta \), which will happen when the ($\SO^*(2N)$) representation is unitary as we shall see later.}
\begin{equation}
E_\alpha:=\alpha^{ab}E_{ab},\quad \widetilde F_\zeta := \zeta^{ab}\widetilde F_{ab}, \quad F_\zeta := \ov \zeta^{ab} F_{ab},\quad \alpha,\zeta \in M_N(\C),
\end{equation}
These elements satisfy the commutation relations
\begin{equation}
[E_\alpha,E_\beta] = E_{[\alpha,\beta]},
\quad
[E_\alpha,\widetilde F_\zeta] = \widetilde F_{\alpha \zeta + \zeta \alpha\Transpose}, \quad 
[E_\alpha, F_\zeta] = - F_{\alpha\Star \zeta + \zeta \conj\alpha}, \quad 
[F_w,\widetilde F_\zeta] = E_{(\zeta-\zeta\Transpose)(w-w\Transpose)\Star}.
\end{equation}

\medskip

The action of observable generators on the intertwiner follows from the Schwinger-Jordan representation of $\su(2)$ representations. Explicitly, we realize an intertwiner  in terms of the harmonic oscillator representations, which allows in turns to have an action of  observable generators on the intertwiner space.

\smallskip

The $\su(2)$ generators are realized in terms of harmonic oscillators as
\begin{equation}
J_z = \demi(A\Dagger A - B\Dagger B),\quad J_+ = A\Dagger B,\quad J_- = B\Dagger A,
\end{equation}
while the $\su(2)$ irreps are 
\begin{equation}
\ket{j,m}=\ket{j+m,j-m}_\text{HO}= \ket{n_A,n_B}_{HO},\quad m\in\{-j,..,j\}.
\end{equation}
One can easily check that the Casimir can be expressed in terms of the $E$ operator.
\begin{equation}
J^2 = \tfrac14 (E-\one)(E+\one),\quad E:=A\Dagger A + B\Dagger B + \one,
\end{equation}
with
\begin{equation}
E\ket{j,m}=(2j+1)\ket{j,m},
\end{equation}
that is, in some sense, \( E \) provides (almost) a square root of the Casimir.
We  extend this construction to the intertwiner space as follows. We denote by \( \operatorname{Inv}_{\SU(2)}(H_{j_1}\otimes\dotsb \otimes H_{j_N}) \) the set of \( \SU(2) \) invariant vectors in the tensor product of \( N \) \( \SU(2) \) irreducible unitary representations, that is those that are annihilated by the \textit{total angular momentum}
\begin{equation}
\vec{J}:=\sum_{a=1}^N \vec{J}^{(a)},
\end{equation}
which we can identify with \( N \)-legged intertwiners. We then introduce the \JS\ representation for each leg, i.e., we use \( 2N \) harmonic oscillators\footnote{It is implicitly assumed that the operators with subscript \( a \) only act on \( H_{j_a} \).}
\begin{equation}
J^{(a)}_z=\half \left(A\Dagger_a A_a - B\Dagger_a B_a\right), \quad J^{(a)}_+ = A\Dagger_a B_a,\quad J^{(a)}_- = B\Dagger_a B_a.
\end{equation}
These vector operators can be seen as the quantization of the polyhedron normals $\vec V_a$.

\smallskip

The $E_{ab}$ satisfy the commutation relations
\begin{equation}
[E_{ab},E_{cd}] = \delta_{cb} E_{ad} - \delta_{ad} E_{cb},
\end{equation}
which are those of a \( \mathfrak{u}(n)_\C \) algebra. These operators can be used to construct all the usual LQG observables, namely
\begin{equation}
\vec{J}^{(a)}\cdot \vec{J}^{(b)} \equiv 2 \mathcal{A}_{ab}\mathcal{A}_{ba} - \mathcal{A}_a \mathcal{A}_b - (1-2\delta_{ab})\mathcal{A}_a,
\end{equation}
where
\begin{equation}
\mathcal{A}_{ab}:= \half (E_{ab}-\delta_{ab}\one),\quad \mathcal{A}_a:=\mathcal{A}_{aa}.
\end{equation}
 We are going to interpret the eigenvalues of the operator \( \mathcal{A}_a \)
\begin{equation}
\mathcal{A}_a\ket{j_a,m_a}= j_a \ket{j_a,m_a}
\end{equation}
as the \textit{area} associated to the leg \( a \), hence we will refer to  the \( \mathcal{A}_a \)'s as \textit{area operators}. The operator \( \mathcal{A}:= \sum_a \mathcal{A}_a \) gives  the total area of the intertwiner.

\subsection{Intertwiner as $\U(N)$ representation}

It was shown in \cite{freidel_fine_2010} that the space of intertwiners with a fixed total area\footnote{The fact that the total area must be an integer follows from the selection rules of the addition of angular momenta.} \( J \in \N\)
\begin{equation}
\mathcal{H}^J_N = \bigoplus_{\sum_{a}j_a=J} \operatorname{Inv}_{\SU(2)}(V_{j_1}\otimes\dotsb \otimes V_{j_N})
\end{equation}
has the structure of an irreducible unitary representation of \( \U(N) \), whose infinitesimal action is given by the \( E_{ab} \) operators we defined\footnote{We recall that our definition differs from that of \cite{freidel_fine_2010}, namely our \( E_{ab} \) have an additional \( \delta_{ab} \) term, which as we will see is essential to construct the \( \SO^*(2N) \) representation.}. Explicitly,
\begin{equation}
\mathcal{H}^J_N \equiv [J+1,J+1,1,\dotsc,1],
\end{equation}
where the \( [\lambda_1,\lambda_2,\dotsc,\lambda_N] \), with
\begin{equation}
\lambda_1\geq \lambda_2 \geq\dotsb \geq \lambda_N\geq 0,
\end{equation}
denotes the \( U(N) \) representation with highest weight vector \( \ket{\lambda} \), for which
\begin{equation}
E_{aa}\ket{\lambda}=\lambda_a\ket{\lambda} \quad\mbox{and}\quad E_{ab}\ket{\lambda}=0,\quad \forall a<b.
\end{equation}
This particular choice of \( \lambda \)'s is required  for the \( \SU(2) \) invariance. The dimension of \( \U(N) \) representations can be computed with the \textit{hook-length formula} \cite{Iachello_2015}
\begin{equation}
\dim [\lambda_1,\dotsc,\lambda_N]= \prod_{a<b}\frac{\lambda_a - \lambda_b + b-a}{b-a},
\end{equation}
which in our specific case gives
\begin{equation}
\dim [\lambda_1,\lambda_2,1,\dotsc,1]= \frac{\lambda_1 - \lambda_2 +1}{\lambda_1} \binom{\lambda_1+N-2}{\lambda_1-1} \binom{\lambda_2+N-3}{\lambda_2-1},
\end{equation}
so that
\begin{equation}
\dim \mathcal{H}^J_N = \frac{1}{J+1} \binom{J+N-1}{J}\binom{J+N-2}{J}=\frac{(N+J-1)!(N+J-2)!}{J!(J+1)!(N-1)!(N-2)!},
\end{equation}
which is indeed the dimension of the space of \( N \)-legged intertwiners with \textit{fixed total area}.

\subsection{$\U(N)$ coherent states}
We will now revisit the construction of \( \mathrm{U}(N) \) coherent states for the intertwiner representation, originally presented in \cite{FreidelLivine2011}. 

\subsubsection{$\U(N)$ coherent states \`a la Perelomov}

Working with the representation \( \mathcal{H}^J_N \), we will use the highest weight vector (the N legged intertwiner where only 2 legs have a non-zero area)
\begin{equation}\label{eq:highest_weight}
\ket{\psi_J}:=\frac{1}{\sqrt{J!(J+1)!}}\widetilde{F}_{12}^J\ket{0}
\end{equation}
as our fixed state. One can easily check that this is indeed the highest weight, i.e.,
\begin{equation}
E_{ab}\ket{\psi_J}=0,\quad \forall a<b. 
\end{equation}
The isotropy subgroup of \(\ket{\psi_J} \) is given by \( \mathrm{U}(2)\times \mathrm{U}(N-2) \), so that, following Perelomov, the coherent states are going to labelled by elements of the quotient space
\begin{equation}
\frac{\mathrm{U}(N)}{\mathrm{U}(2)\times \mathrm{U}(N-2)},
\end{equation}
which is isomorphic to the \textit{Grassmannian}
\begin{equation}
\mathrm{Gr}_2(\C^n) = \set{ \xi\in\so(N,\C) \setst \rank(\xi)=2 }/\sim
\textrm{ where } 
\xi \sim \chi \,\Leftrightarrow\, \xi = \lambda \chi,\quad 0 \neq \lambda\in\C.
\end{equation}
\( \mathrm{U}(N) \) acts on the equivalence classes \( [\xi]\in\mathrm{Gr}_2(\C^N) \) as
\begin{equation}
g \ract [\xi] = \bracks{g \xi g^\transpose}.
\end{equation}
The equivalence class with representative
\begin{equation}
\xi_0 = \begin{pmatrix}
\sigma & 0 \\ 0 & 0
\end{pmatrix}, \quad \textrm{with } \sigma=\begin{pmatrix}
0 & -1 \\ 1 & 0
\end{pmatrix}
\end{equation}
satisfies
\begin{equation}
g\ract [\xi_0] = [\xi_0]
\quad\Leftrightarrow\quad
g\in \mathrm{U}(2) \times \mathrm{U}(N-2).
\end{equation}
For every \( \xi \) there is a (non-unique) unitary matrix, which we will denote by \( g_{[\xi]} \), such that
\begin{equation}
\xi = \lambda \,g_{[\xi]}
\begin{pmatrix}
\sigma & 0 \\ 0 & 0
\end{pmatrix}
g_{[\xi]}^\transpose
\end{equation}
for some \( \lambda \). The notation \( g_{[\xi]} \) is consistent since for each \( \chi \in [\xi]\) we can use the same unitary matrix in the factorisation. We then have
\begin{equation}
[\xi] = \bracks{g_{[\xi]}\xi_0 g_{[\xi]}^\transpose} = g_{[\xi]}\ract [\xi_0].
\end{equation}
We are now in a position to define the coherent states. For each \( [\xi]\in \mathrm{Gr}_2(\C^N) \) we define the state
\begin{equation}
\ket{J,\xi} = \mathcal{N}_J(\xi) \paren{\half\widetilde{F}_{\xi}}^J \ket{0},\quad \mathcal{N}_J(\xi) = \frac{\paren{\shalf \tr(\xi^*\xi)}^{-\frac{J}{2}}}{\sqrt{J!(J+1)!}},
\end{equation}
which one can check to be normalised to \( 1 \), see the end of Appendix  \ref{proof:probability}. Note that the state does not depend on the representative \( \xi \), as
\begin{equation}
\ket{J,\lambda \xi} = \frac{\lambda^J}{\abs{\lambda}^J} \ket{J,\xi} = e^{\ii \theta(\lambda)} \ket{J,\xi},\quad \forall \lambda\neq 0.
\end{equation}
Moreover, we have
\begin{equation}
\ket{J,\xi_0} \equiv \ket{\psi_J}.
\end{equation}
To show that these states are indeed Perelomov coherent states, we have to show that they arise from the action of the group on the state \( \ket{\psi_J} \). To do so, we are going to show a more general result: instead of showing the coherence under the group $\U(N)$, we are going to show the coherence under  \( \GL(N,\C)\) which contains $\U(N)$ as a subgroup. 

\begin{proposition}\label{prop:action of U(n) on highest weight} \cite{FreidelLivine2011}
The action of \( \GL(N,\C)\cong \mathrm{U}(N)_\C \) on the highest weight vector \( \ket{\psi_J} \) is
\begin{equation*}
g\ket{\psi_J}=\frac{\det(g)}{\sqrt{J!(J+1)!}}\paren{\half \widetilde{F}_{g\xi_0 g^\transpose}}^J \ket{0}.
\end{equation*}
\end{proposition}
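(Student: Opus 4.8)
The plan is to realize the $\GL(N,\C)$-action on $\mathcal H^J_N$ as the (unique) holomorphic extension of the unitary $\U(N)$-representation generated by the $E_{ab}$, writing $U_g$ for the operator implementing $g$, and then to reduce the claim to two facts: how $U_g$ conjugates the raising operator $\widetilde F_\zeta$, and how it acts on the Fock vacuum $\ket{0}$. For the bookkeeping of the central term I would split $E_{ab}=\hat{E}_{ab}+\delta_{ab}\one$ with $\hat{E}_{ab}=A\Dagger_a A_b+B\Dagger_a B_b$, so that for $g=\exp\alpha$ (recall $\exp$ is surjective onto the connected group $\GL(N,\C)$) we have $U_g=\exp E_\alpha=e^{\tr(\alpha)}\exp\hat{E}_\alpha$, the scalar $e^{\tr(\alpha)}=\det(e^\alpha)=\det(g)$ being exactly the source of the $\det(g)$ prefactor.

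The heart of the argument is the conjugation rule $U_g\,\widetilde F_\zeta\,U_g^{-1}=\widetilde F_{g\zeta g\Transpose}$, obtained by exponentiating the bracket $[E_\alpha,\widetilde F_\zeta]=\widetilde F_{\alpha\zeta+\zeta\alpha\Transpose}$ recorded earlier. The linear map $\zeta\mapsto\alpha\zeta+\zeta\alpha\Transpose$ is a sum of commuting left- and right-multiplications, so its exponential is $\zeta\mapsto e^{\alpha}\zeta\,(e^{\alpha})\Transpose=g\zeta g\Transpose$; since $\zeta\mapsto\widetilde F_\zeta$ is linear and intertwines $\mathrm{ad}_{E_\alpha}$ with this map, $\mathrm{Ad}_{U_g}$ carries $\widetilde F_\zeta$ to $\widetilde F_{g\zeta g\Transpose}$. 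This is consistent with the fact that only the antisymmetric part of $\zeta$ is seen by $\widetilde F_\zeta$, because $\zeta\mapsto g\zeta g\Transpose$ preserves antisymmetry.

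The second fact is $U_g\ket{0}=\det(g)\ket{0}$: every annihilation operator kills the vacuum, so $\hat{E}_\alpha\ket{0}=0$ and $\exp\hat{E}_\alpha\ket{0}=\ket{0}$, leaving only the scalar $e^{\tr(\alpha)}=\det(g)$ (equivalently, $\ket{0}$ spans the one-dimensional weight space $[1,\dots,1]\cong\det$). Combining the two facts with the established identification $\ket{\psi_J}=\ket{J,\xi_0}=\tfrac{1}{\sqrt{J!(J+1)!}}\bigl(\half\widetilde F_{\xi_0}\bigr)^{J}\ket{0}$, I would insert $U_g^{-1}U_g=\one$ between the $J$ factors and read off
\begin{equation*}
g\ket{\psi_J}=\frac{1}{\sqrt{J!(J+1)!}}\,\bigl(U_g\,\tfrac12\widetilde F_{\xi_0}\,U_g^{-1}\bigr)^{J}\,U_g\ket{0}
=\frac{\det(g)}{\sqrt{J!(J+1)!}}\,\Bigl(\tfrac12\widetilde F_{g\xi_0 g\Transpose}\Bigr)^{J}\ket{0},
\end{equation*}
which is the asserted formula.

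Finally I would address the range of validity. Since $\mathcal H^J_N$ is a finite-dimensional irreducible $\U(N)$-representation, it extends uniquely to a holomorphic representation of $\U(N)_\C\cong\GL(N,\C)$, so $U_g$ is defined for every $g$ and both sides of the identity are polynomial, hence holomorphic, in the entries of $g$; having verified it for $g=\exp\alpha$, connectedness together with surjectivity of $\exp$ propagates it to all of $\GL(N,\C)$. I expect this last point to be the only genuinely delicate one: making precise that the non-compact complexified action exists as the analytic continuation of the compact unitary representation, and that exponentiating the infinitesimal conjugation rule on the representation space is legitimate. Once $U_g\widetilde F_\zeta U_g^{-1}=\widetilde F_{g\zeta g\Transpose}$ and $U_g\ket{0}=\det(g)\ket{0}$ are secured, the remaining manipulation is purely algebraic.
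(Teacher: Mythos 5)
Your proposal is correct and follows essentially the same route as the paper's own proof in Appendix \ref{proof-prop1}: write $g=e^{E_\alpha}$ using surjectivity of the exponential map, conjugate $\widetilde F_{\xi_0}$ through $e^{E_\alpha}$ via the bracket $[E_\alpha,\widetilde F_\zeta]=\widetilde F_{\alpha\zeta+\zeta\alpha\Transpose}$ to obtain $\widetilde F_{g\xi_0 g\Transpose}$, and use $E_\alpha\ket{0}=\tr(\alpha)\ket{0}$ to produce the $\det(g)$ factor. Your added remarks on the holomorphic extension and on summing the adjoint series as commuting left/right multiplications are just a more careful packaging of the same argument.
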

For the proof see Appendix \ref{proof-prop1}. It follows in particular that
\begin{equation}\label{bibi}
g_{[\xi]} \ket{\psi_J} = \frac{\det(g_{[\xi]})}{\sqrt{J!(J+1!)}} \paren{\half \widetilde{F}_{g_{[\xi]}\xi_0 g_{[\xi]}^\transpose}}^J |0\ra = 
\frac{\det(g_{[\xi]})}{\sqrt{J!(J+1!)}} \lambda^{-J} \paren{\half \widetilde{F}_{\xi}}^J|0\ra
\end{equation}
where
\begin{equation}
\abs{\lambda}^2 = \half \tr(\zeta^*\zeta),
\end{equation}
that is
\begin{equation}
\ket{J,\xi} = e^{\ii \theta(\xi)} g_{[\xi]}\ket{\psi_J}.
\end{equation}
The coherence under $\U(N)$, up to a phase, follows then naturally. 

\subsubsection{Matrix elements and semi-classical limit}

We will compute the matrix elements of the \( \mathfrak{u}(n)_\C \) generators in the coherent state basis following the procedure used in \cite{FreidelLivine2011, Livine:2013tsa}.
Let \( \ketp{J,\xi} \) be the unnormalised coherent state
\begin{equation}
\ketp{J,\xi} = \frac{1}{\mathcal{N}_J(\xi)}\ket{J,\xi}.
\end{equation}
We know from the proof of \eqref{prop:action of U(n) on highest weight} that, for any \( \alpha \in M_n(\C) \),
\begin{equation}
\praket{J,\eta|e^{E_\alpha}|J,\xi} = e^{\tr(\alpha)}\praket{J,\eta|J,e^\alpha\xi e^{\alpha^\transpose}}
=
J!(J+1)!\,e^{\tr(\alpha)}\bracks{\half \tr\paren{\eta^* e^\alpha \xi e^{\alpha^\transpose}}}^J,
\end{equation}
which we can use to find
\begin{equation}
\praket{J,\eta|E_\alpha|J,\xi} = {\der{}{\theta}}\left\{\praket{J,\eta|e^{\theta E_\alpha}|J,\xi}\right\}_{{\theta=0}} = 
J!(J+1)!{\der{}{\theta}} \left\{e^{\theta \tr(\alpha)} \bracks{\half \tr\paren{\eta^* e^{\theta\alpha} \xi e^{\theta\alpha^\transpose}}}^J \right\}_{{\theta=0}}.
\end{equation}
Computing the derivative we find that
\begin{equation}
\braket{J,\eta|E_\alpha|J,\xi} = \mathcal{N}_J(\eta)\mathcal{N}_J(\xi) \praket{J,\eta|E_\alpha|J,\xi} = \braket{J,\eta|J,\xi}\tr(\alpha) + 2J \braket{J-1,\eta|J-1,\xi} \frac{\tr(\eta^*\alpha \xi)}{\sqrt{\tr(\eta^*\eta)\tr(\xi^*\xi)}}.
\end{equation}
In particular, choosing \( \alpha = \Delta_{ab} \) we get
\begin{equation}
\braket{E_{ab}}=\braket{J,\xi|E_{ab}|J,\xi} = \delta_{ab} + 2J \frac{(\xi^*\xi)_{ab}}{\tr(\xi^*\xi)}.
\end{equation}
This expression can be simplified using the fact that any rank-\( 2 \) complex anti-symmetric matrix \( \xi \) can be written as
\begin{equation}\label{eq:xi=UMUt}
\xi = \lambda U
M
U^\transpose,
\textrm{ with } 
M=\lambda \, \sigma \oplus  0_{N-2},  \quad  \lambda = \sqrt{\half \tr(\xi^*\xi)}, \quad \sigma = 
\begin{pmatrix}
0 & -1 \\ 1 & 0
\end{pmatrix}
\end{equation}
and where \( U\in \mathrm{U}(N) \) is a unitary matrix. We can then introduce the \( 2N \) spinors
\begin{equation}
\ket{z_a} = \sqrt{J}
\begin{pmatrix}
U_{a1} \\ U_{a2}
\end{pmatrix}
\end{equation}
satisfying by construction
\begin{equation}
\sum_a \ket{z_a} \bra{z_a} = \sum_{a} \half \braket{z_a|z_a} \,\1_2,\quad \sum_{a}  \braket{z_a|z_a}= 2 J.
\end{equation}
Hence from their definition, the closure constraint is satisfied. Note that as the matrix $\xi$ has rank 2 and is antisymmetric, the equivalence class \( [\xi] \) can be parametrized in terms of \( N \) spinors in many different ways (all related to each other by \( \GL(2,\C) \) transformations). These spinors will however not necessarily satisfy the closure constraint.  This parametrization of $\xi$ was used a lot in \cite{FreidelLivine2011} for doing calculations; in particular, it was discussed how some $\SL(2\C)$ transformation can be used to get them to close. We emphasize that these spinors have nothing to do with the spinors that we used to define the semi-classical limit. The semi-classical spinors we obtained do satisfy the closure constraint, which is expected since after all we are dealing with an intertwiner or a polyhedron, hence  an object invariant under the global $\SU(2)$ transformations.

\medskip 
 
In terms of the spinors we have
\begin{equation}
\braket{J,\xi|E_{ab}|J,\xi} = \delta_{ab} + \braket{z_a|z_b}.
\end{equation}
In a similar fashion, we can compute
\begin{equation}
\braket{J,\xi | E_\alpha E_\beta |J,\xi} = {\der{}{\theta}\der{}{\varphi}}\left\{\braket{J,\xi | e^{\theta E_\alpha} e^{\varphi E_\beta} |J,\xi}\right\} _{{\theta=0, \,\varphi=0}}
\end{equation}
to find variances and covariances. We will concentrate on the area operators
\begin{equation}
\mathcal{A}_a = \half (E_{aa}-\1),
\end{equation}
for which we find
\begin{equation}
\Cov(\mathcal{A}_a,\mathcal{A}_b) = \braket{\mathcal{A}_a\mathcal{A}_b} - \braket{\mathcal{A}_a}\braket{\mathcal{A}_b} = 
\frac{\delta_{ab}}{4}\braket{z_a|z_a} + \frac{1}{4J}\braketrb{z_b|z_a}\braketlb{z_a|z_b} - \frac{1}{4J} \braket{z_a|z_a}\braket{z_b|z_b}
\end{equation}
and
\begin{equation}
\Var(\mathcal{A}_a) = \Cov(\mathcal{A}_a,\mathcal{A}_a) = 
\frac{1}{4}\braket{z_a|z_a}  - \frac{1}{4J} \braket{z_a|z_a}^2.
\end{equation}
Note that both \( \braket{\mathcal{A}_a} \) and \( \Var(\mathcal{A}_a) \) are of order \( 1 \) in \( J \), so that the coefficient of variation \( \frac{\sqrt{\Var(\mathcal{A}_a)}}{\braket{\mathcal{A}_a}} \) approaches~\( 0 \) when the total area \( J \) is large.
We can thus think of the coherent state \( \ket{J,\zeta} \) as being peaked, in the large \( J \) limit, on the classical geometry obtained by introducing the vectors
\begin{equation}
\vec{V}^a = \half \braket{z_a|{\vec \sigma}|z_a};
\end{equation}
these satisfy
\begin{equation}
\sum_a \vec{V}^a = 0,\quad \abs{\vec{V}^a} = \braket{\mathcal{A}_a},
\end{equation}
so we can think of them as the normal vectors to a polyhedron with \( N \) faces \( f_a \), with \( \operatorname{area}(f_a) = \braket{\mathcal{A}_a} \) and total surface area \( J = \braket{\mathcal{A}}\). Note that our spinors are not unique: the unitary matrix appearing in \eqref{eq:xi=UMUt} is defined up to a transformation
\begin{equation}
U\rightarrow U V,\quad V=
\begin{pmatrix}
X & 0 \\ 0 & Y
\end{pmatrix},
\quad X\in \SU(2), \quad Y \in \mathrm{U}(N-2);
\end{equation}
under the same transformation, the spinors change as
\begin{equation}
\ket{z_a} \rightarrow X^\transpose \ket{z_a},
\end{equation}
while the vectors undergo a global \( \SO(3) \) rotation. These are the natural symmetries of the polyhedron, which is defined only up to a global rotation.

\section{A new coherent state for the $\SU(2)$ intertwiner}\label{sec:CS}
As we have seen in Section \ref{sec:spinor}, the actual algebra of observables given in \eqref{classical so*} is bigger than $\u(N)$. The usual parametrization of the $\u(N)$ generators in the spinorial formalism does not contain the identity. By redefining these generators to include the identity we can identify the commutation relations of $\so(2N,\C)$. We then need to identify the real form of this algebra which contain $\u(N)$. Thankfully, there is only one candidate given by $\so^*(2N)$ \cite{boothby_symmetric_1972}. This Lie algebra and its associated (non-compact) Lie group have not been studied much. For example the full representation theory is not known to the best of our knowledge. As we are going to see in Section \ref{sec:Perelomov}, the intertwiner space provides an infinite-dimensional representation of \( \SO^*(2N) \), thanks to the realization in terms of harmonic oscillators.

After having identified the structure of the algebra of observables we can proceed in constructing the coherent states \`a la Perelomov, study some of their properties and check their semi-classical limit. Note that we can also construct different coherent states, not of the Perelomov type, by requiring not their coherence under the group action, but instead the "creation operators" $\widetilde F_{ab}$ to act diagonally on them \cite{Dupuis:2011fz}. Such states allow to solve the simplicity constraints to build some 4d (Euclidian) holomorphic spin foam model \cite{Dupuis:2011fz}.

\subsection{The Lie group $\SO^*(2N)$ and its Lie algebra $\so^*(2N)$}\label{sec:so*_def}
We summarize some of the features of the Lie group $\SO^*(2N)$ and of its Lie algebra that will be useful to construct the Perelomov coherent states.

\subsubsection{The Lie group $\SO^*(2N)$}

Recall that \(\SU(N,N)\) is the group of complex matrices with determinant \(1\) preserving the indefinite Hermitian form
\begin{equation}
\SU(N,N)=\left\{g\in \SL(2N,\C)  , \quad  g^*
\begin{pmatrix}
 \one_N &0 \\ 0&-\one_N 
\end{pmatrix}
g = 
\begin{pmatrix}
 \one_N &0 \\ 0&-\one_N
\end{pmatrix}
\right\}.
\end{equation}

The non-compact Lie group \(G=\SO^*(2N)\) is a subgroup of \(\SU(N,N)\)  such that 
\begin{equation}
\SO^*(2N) = \left\{ g \in \SU(N,N) , \quad  g^t
\begin{pmatrix}
0  & \one_N \\ \one_N & 0
\end{pmatrix}
g = 
\begin{pmatrix}
0  & \one_N \\ \one_N & 0
\end{pmatrix}
\right\}.
\end{equation}

Elements of \( \SO^*(2N) \) can be parametrised as \( 2\times 2 \) block matrices  \cite{perelomov_book}.
\begin{equation}
g=\begin{pmatrix}
A & B \\ -\ov B & \ov A
\end{pmatrix},\quad A,B\in M_N(\C), \quad \textrm{with }  \det(A)\neq 0 . 
\end{equation}
and
\begin{equation}
\label{eq:so*_conditions}
AA\Star-BB\Star =\one, \quad
A\Star A-B\Transpose\ov B =\one,\quad 
A\Star B =- B\Transpose \ov A,\quad 
BA\Transpose = - A B\Transpose,
\end{equation}
and with inverse
\begin{equation}
g^{-1}=
\begin{pmatrix}
A\Star & B\Transpose\\
-B\Star & A\Transpose
\end{pmatrix}.
\end{equation}
The maximal compact subgroup \(K\subseteq \SO^*(2N)\) is isomorphic to \(\mathrm{U}(N)\), and is given by the elements of the form
\begin{equation}
\begin{pmatrix}
U & 0 \\
0 & \ov{U}
\end{pmatrix},
\quad U\in \mathrm{U}(N).
\end{equation}
The group is non-compact for all \( N\geq 2 \), while \( \SO\Star(2)\cong U(1) \).

\subsubsection{The Lie algebra $\so^*(2N)$}

The Lie algebra of \( \SO\Star(2N) \) is
\begin{equation}
\so\Star(2N) = \left\{V \in \su(N,N) ,\quad  V\Transpose
\begin{pmatrix}
0  & \one_N \\ \one_N & 0
\end{pmatrix}
= -
\begin{pmatrix}
0  & \one_N \\ \one_N& 0
\end{pmatrix}
V
\right\}.
\end{equation}
Its elements are parametrised by \( 2\times 2 \) block matrices
\begin{equation}
V=
\begin{pmatrix}
X & Y \\ - \ov Y & \ov X
\end{pmatrix},\quad X,Y\in M_N(\C), \quad \textrm{with } X\Star=-X,\quad Y\Transpose =- Y.
\end{equation}
Hence \( \dim \so\Star(2N)=N(2N-1) \).
A basis for \( \so\Star(2N)_\C \cong \so(2N,\C) \) is given by the matrices
\begin{equation}
E_{ab}=
\begin{pmatrix}
\Delta_{ab} & 0 \\ 0 & -\Delta_{ba}
\end{pmatrix},\quad
F_{ab}=
\begin{pmatrix}
0 & 0 \\ \Delta_{ab} -\Delta_{ba} & 0
\end{pmatrix},\quad
\widetilde F_{ab}=
\begin{pmatrix}
0 & \Delta_{ab} -\Delta_{ba} \\ 0 & 0
\end{pmatrix},\quad
\end{equation}
where \( a,b=1,\dotsc,n \) and \( \Delta_{ab}\in M_N(\C) \) is the matrix with entries
\begin{equation}
(\Delta_{ab})_{cd}=\delta_{ac}\delta_{bd}.
\end{equation}
The \( E_{ab} \) matrices span the complexification of the subalgebra \( \mathfrak{u}(N) \).
The commutation relations of the \( \so\Star(2N) \) complexified generators are (cf \eqref{classical so*})
\begin{subequations}
\label{eq:so*_comm}
\begin{align}
[E_{ab},E_{cd}] &= \delta_{cb}E_{ad} - \delta_{ad}E_{cb}, \quad
[E_{ab},\widetilde F_{cd}]	= \delta_{bc}\widetilde F_{ad} - \delta_{bd}\widetilde F_{ac}, \quad
[E_{ab},F_{cd}]	= \delta_{ad}F_{bc} - \delta_{ac}F_{bd}, \\
[F_{ab},\widetilde F_{cd}]&= \delta_{db}E_{ca} + \delta_{ca}E_{db}  - \delta_{cb}E_{da} -\delta_{da}E_{cb}, \quad  
[F_{ab},F_{cd}]		= [\widetilde F_{ab},\widetilde F_{cd}] = 0,
\end{align}
\end{subequations}
and unitary representations are those for which
\begin{equation}
E\Dagger_{ab}=E_{ba},\quad F\Dagger_{ab}=\widetilde F_{ab}.
\end{equation}

\subsection{$\SO^*(2N)$ Perelomov coherent states for the intertwiner}\label{sec:Perelomov}
Following Perelomov (see appendix \ref{app:Perelomov} and \cite{perelomov_book}), we have the following definition.
\begin{definition}
The $\SO^*(2N)$ coherent states are parameterized by an (antisymmetric) matrix $\zeta$ such that $\zeta\Star\zeta<\one$. They are given by
\begin{equation}
\ket{\zeta}=\mathcal{N}(\zeta)\exp\paren{\half\widetilde F_\zeta}\ket{0},\quad \mathcal{N}(\zeta)=\det(1-\zeta\Star\zeta)^{\half},
\end{equation}
with the following scalar product  
\begin{equation}
\braket{\omega|\zeta}=\frac{\det(\1-\zeta\Star\zeta)^{\half}\det(\1-\omega\Star\omega)^{\half} }{\det(\1-\omega\Star\zeta)}.
\end{equation}
\end{definition}
In the following subsection, we are going to provide some justifications for this definition. Note that our calculations are different than Perelomov's since we use the harmonic oscillator representation. We will then compute the expectations values of the $\so^*(2N)$ generators in this basis. We will also explain how these coherent states can be understood as a specific class of squeezed vaccua.

\subsubsection{$\SO^*(2N)$ Perelomov coherent states} 
For the particular case of the intertwiner representation of \( \SO^*(2N) \), we will choose the harmonic oscillator vacuum \( \ket 0 \) as our fixed state. It is easy to see that the isotropy subgroup for \( \ket{0} \) is the maximal compact subgroup \( K=\mathrm{U}(N)\subset \SO^*(2N) \); the coset space \( \SO^*(2N)/\mathrm{U}(N) \) can be identified with one of the \textit{bounded symmetric domains} classified by Cartan  (see \cite{thesis} for further details), namely
\begin{equation}
\SO^*(2N)/\mathrm{U}(N)\cong\Omega_N:=\set{\zeta\in M_N(\C)\setst \zeta\Transpose=-\zeta \mbox{ and } \zeta\Star\zeta<\one},
\end{equation}
on which \(  \SO^*(2N)  \) acts holomorphically and transitively as
\begin{equation}\label{eq:so*_action_bounded}
g\act \zeta\equiv
\begin{pmatrix}
A & B \\ C & D
\end{pmatrix}
\act \zeta:=\paren{A\zeta+B}\paren{C\zeta+D}^{-1}.
\end{equation}
The isotropy subgroup\footnote{Here we mean the subgroup of all \( g\in G \) such that \( g(0)=0 \).} at \( \zeta=0 \) is given by \( K \), and the correspondence between \( \Omega_N \) and \( \SO^*(2N)/\mathrm{U}(N) \) is given by
\begin{equation}
\zeta\in\Omega_N \mapsto \set{g\in \SO^*(2N) \setst g\act 0=\zeta}\equiv g_\zeta K  \in \SO^*(2N)/\mathrm{U}(N),
\end{equation}
where\footnote{Here \( \sqrt{M} \) denotes the \textit{unique} positive semi-definite square root of a positive semi-definite matrix \( M \). Recall that, since the square root is unique, we have \((\sqrt{A})\Transpose\equiv \sqrt{A\Transpose}\) and analogous expressions for \(\ov A\), and~\(A\Star\).}
\begin{equation}
g_\zeta:=
\begin{pmatrix}
X_\zeta & \zeta \conj X_\zeta\\
\zeta\Star X_\zeta & \conj X_\zeta
\end{pmatrix},
\quad X_\zeta:=\sqrt{(\one-\zeta\zeta\Star)^{-1}}.
\end{equation}
The new coherent intertwiner states $\ket{\zeta}$ are then given by
\begin{equation}
\ket{\zeta}:=g_\zeta\ket{0}, \quad \zeta\in\Omega_N.
\end{equation}
Note how
\begin{equation}
\ket{\zeta}\equiv \ket{g_\zeta\act 0}
\,\Rightarrow\,
g\ket{\zeta}=e^{i\theta(g,\zeta)}\ket{g\act \zeta},\quad \forall g\in \SO^*(2N),\forall \zeta\in\Omega_N.
\end{equation}
A more explicit expression for these states can be obtained using the following lemma.
\begin{lemma}[Block \(UDL\) decomposition]\label{lem:UDL}
Any element of \(\SO^*(2N)\) can be decomposed as
\begin{equation}
\begin{pmatrix}
A & B\\-\ov B& \ov A
\end{pmatrix}
=
\begin{pmatrix}
\one & B\ov A^{-1}\\0&\one
\end{pmatrix}
\begin{pmatrix}
(A\Star)^{-1} &0\\0&\ov A
\end{pmatrix}
\begin{pmatrix}
\one &0\\ -\ov A^{-1}\ov B &\one
\end{pmatrix}
=\exp\paren{\half \widetilde F_{B\ov A^{-1}}}
\exp\paren{E_L}
\exp\paren{-\half F_{ A^{-1} B}}
\end{equation}
where 
\begin{equation}
\exp(E_L)=
\begin{pmatrix}
e^L & 0\\ 0 & e^{-L\Transpose}
\end{pmatrix}
=
\begin{pmatrix}
(A\Star)^{-1} &0\\0&\ov A
\end{pmatrix},
\end{equation}
\end{lemma}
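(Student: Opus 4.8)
The plan is to prove the two equalities separately: first the purely matricial block identity $g = UDL$ (where $U,D,L$ are the three factors on the middle of the display), and then the identification of $U$, $D$, $L$ with the three exponentials $\exp(\half\widetilde F_{B\ov A^{-1}})$, $\exp(E_L)$, $\exp(-\half F_{A^{-1}B})$. The first part is elementary block algebra that needs only $\det A\neq 0$ and a couple of the defining relations \eqref{eq:so*_conditions}; the second part is where the structure of $\so^*(2N)$ genuinely enters.

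For the block factorization I would simply multiply out $U(DL)$. Three of the four resulting blocks reproduce $B$, $-\ov B$ and $\ov A$ immediately, so the entire content sits in the top-left block, where one needs $(A\Star)^{-1} - B\ov A^{-1}\ov B = A$, i.e. $(A\Star)^{-1} = A + B\ov A^{-1}\ov B$. Multiplying this on the right by $A\Star$ and using the first relation $AA\Star = \one + BB\Star$ of \eqref{eq:so*_conditions} reduces it to $B\bigl(B\Star + \ov A^{-1}\ov B A\Star\bigr)=0$. The bracketed factor vanishes on its own: left-multiplying by $\ov A$ turns it into $\ov A B\Star + \ov B A\Star = 0$, which is precisely the complex conjugate of the fourth relation $BA\Transpose = -AB\Transpose$ (using $\ov{A\Transpose}=A\Star$). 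This establishes $g=UDL$ for any admissible $A,B$.

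For the exponentials I would first record the images of the abstract generators in the defining representation. From the explicit matrices for $E_{ab},\widetilde F_{ab},F_{ab}$ together with $(\Delta_{ab})_{cd}=\delta_{ac}\delta_{bd}$ one reads off
\begin{equation*}
E_\alpha = \begin{pmatrix} \alpha & 0 \\ 0 & -\alpha\Transpose\end{pmatrix},\quad \widetilde F_\zeta = \begin{pmatrix} 0 & \zeta-\zeta\Transpose \\ 0 & 0\end{pmatrix},\quad F_\zeta = \begin{pmatrix} 0 & 0 \\ \ov\zeta-\ov\zeta\Transpose & 0\end{pmatrix}.
\end{equation*}
The diagonal factor is then immediate: defining $L$ by $e^L=(A\Star)^{-1}$ (a logarithm exists since $A$, hence $A\Star$, is invertible), the identity $(A\Star)\Transpose=\ov A$ gives $e^{-L\Transpose}=\ov A$, so $\exp(E_L)=D$. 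For the two triangular factors the key observation, and the only subtle point, is that $\widetilde F$ and $F$ only see the \emph{antisymmetric part} of their label, so a single exponential can reproduce a full off-diagonal block only if that block is already antisymmetric. This is exactly what holds here: $B\ov A^{-1}$ is antisymmetric by the third relation $A\Star B=-B\Transpose\ov A$, and $A^{-1}B$ is antisymmetric by the fourth relation $BA\Transpose=-AB\Transpose$. Given antisymmetry, $\zeta-\zeta\Transpose = 2\zeta$, and since $\widetilde F_\zeta$ and $F_\zeta$ are nilpotent of order $2$ the exponentials truncate after the linear term, yielding $\exp(\half\widetilde F_{B\ov A^{-1}})=\one+\half\widetilde F_{B\ov A^{-1}}=U$ and $\exp(-\half F_{A^{-1}B})=\one-\half F_{A^{-1}B}=L$, the latter using $\ov{A^{-1}B}=\ov A^{-1}\ov B$.

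I expect the antisymmetry of $B\ov A^{-1}$ and $A^{-1}B$ to be the real crux of the argument: it is what upgrades a bookkeeping block-triangular factorization (valid for any invertible $A$) into a statement about exponentials of honest $\so^*(2N)$ generators, and it is precisely the place where the off-diagonal defining relations of $\SO^*(2N)$ are used. Everything else reduces to routine block multiplication and the truncation of nilpotent exponentials.
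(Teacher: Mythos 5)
Your proof is correct. Note that the paper itself states Lemma~\ref{lem:UDL} \emph{without} proof---there is no corresponding appendix and no inline argument, only the remark that the factors belong to the complexification $\SO(2N,\C)$---so your argument supplies a verification the paper omits rather than duplicating one. Both halves check out: multiplying the three factors reduces the matricial identity, exactly as you say, to $(A\Star)^{-1} = A + B\ov A^{-1}\ov B$, which follows from $AA\Star = \one + BB\Star$ together with the complex conjugate of $BA\Transpose = -AB\Transpose$; and the identification of the triangular factors with $\exp\paren{\half\widetilde F_{B\ov A^{-1}}}$ and $\exp\paren{-\half F_{A^{-1}B}}$ hinges precisely on the antisymmetry of $B\ov A^{-1}$ and $A^{-1}B$, which you correctly derive from the third and fourth relations of \eqref{eq:so*_conditions}. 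This antisymmetry is indeed the crux: since $\widetilde F_\zeta$ and $F_\zeta$ only see the antisymmetrized label $\zeta-\zeta\Transpose$, without it the coefficient $\half$ and the truncated (nilpotent) exponentials would fail to reproduce the off-diagonal blocks, and the factorization would remain a generic block-triangular identity with no relation to the $\so^*(2N)$ generators. Your treatment of the diagonal factor (existence of a logarithm of the invertible matrix $(A\Star)^{-1}$ and the identity $e^{-L\Transpose}=(A\Star)\Transpose=\ov A$) is likewise sound.
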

Note that, unless \( B=0 \), the factors do not belong to \(\SO^*(2N)\) anymore, but to its \textit{complexification} \( \SO(2N,\C) \) instead.

As a consequence of Lemma \ref{lem:UDL} we can rewrite \( g_\zeta \) as
\begin{equation}
g_\zeta=\exp\paren{\half \widetilde{F}_\zeta}\exp(E_L)\exp\paren{-\half F_{X_\zeta^{-1} \zeta \ov X_\zeta}}
\end{equation}
where \( L \) is such that
\begin{equation}
e^L=\sqrt{\1-\zeta\zeta\Star}.
\end{equation}
Since \( \ket{0} \) is annihilated by every \( F_{ab} \) and
\begin{equation}
e^{ E_L}\ket{0}=e^{\tr L}\ket{0}=\det(e^L)\ket{0}=\det(1-\zeta\Star\zeta)^{\half}\ket{0},
\end{equation}
we can eventually write the coherent states as
\begin{equation}
\ket{\zeta}=\mathcal{N}(\zeta)\exp\paren{\half\widetilde F_\zeta}\ket{0},\quad \mathcal{N}(\zeta)=\det(1-\zeta\Star\zeta)^{\half}.
\end{equation}
This parametrization allows to relate the $\SO^*(2N)$ coherent states to the $\U(N)$ coherent states, when $\rank(\zeta)=2$: they are just a linear superposition of $\U(N)$ coherent states.

\medskip

Let us now determine their scalar product. Using the fact that the representation is unitary, we can write the inner product between two coherent states as
\begin{equation}
\braket{\omega|\zeta}=\braket{0|g_\omega^{-1} g_\zeta|0},
\end{equation}
with
\begin{equation}
g_\omega^{-1} g_\zeta=
\begin{pmatrix}
X_\omega(\1-\omega \zeta\Star)X_\zeta & X_\omega (\zeta-\omega)\conj X_\zeta\\
\conj X_\omega (\zeta\Star-\omega\Star) X_\zeta & \conj X_\omega(\1-\omega\Star \zeta)\conj X_\zeta
\end{pmatrix}
\end{equation}
which automatically ensures
\begin{equation}
\det(\1-\omega\Star\zeta)\neq 0,
\end{equation}
as \(\conj X_\omega(\1-\omega\Star \zeta)\conj X_\zeta\) must be invertible. We know from Lemma \ref{lem:UDL} that the group element can be written as
\begin{equation}
g_\omega^{-1} g_\zeta=
\exp\paren{ \widetilde F_\alpha}
\exp\paren{ E_\Lambda}
\exp\paren{ F_\beta}
\end{equation}
for some \(\alpha\) and \(\beta\), with \( \Lambda \) such that
\begin{equation}
e^\Lambda=X_\omega^{-1}(\1-\zeta\omega\Star)^{-1}X_{\omega}^{-1}=\sqrt{\1-\zeta\zeta\Star}(\1-\zeta\omega\Star)^{-1}\sqrt{\1-\omega\omega\Star},
\end{equation}
so that
\begin{equation}
\braket{\omega|\zeta}=\det(e^\Lambda)\braket{0|0}=\frac{\det(\1-\zeta\Star\zeta)^{\half}\det(\1-\omega\Star\omega)^{\half} }{\det(\1-\omega\Star\zeta)};
\end{equation}
the Cauchy--Schwarz inequality ensures that
\begin{equation}
\abs{\braket{\omega|\zeta}}^2\leq 1,
\end{equation}
where the equality only holds when \(\omega=\zeta\), as by definition states labelled by different cosets are not proportional to each other. 

\subsubsection{Expectation values of observables}\label{sec:expectation}
Let us determine some properties of these coherent states by looking at the matrix elements of the observables and some of their implications.
\begin{proposition}\label{prop:coherent_matrix_elements}
The matrix elements of the \( \so\Star(2N) \) generators in the coherent state basis are given by
\bes
\braket{\omega|E_{ab}|\zeta}& =&\braket{\omega|\zeta}\big[ \1 + 2 \omega\Star\zeta (\1-\omega\Star\zeta)^{-1}\big]_{ab},\quad 
\braket{\omega|F_{ab}|\zeta}=\braket{\omega|\zeta}\big[2 \zeta(\1-\omega\Star\zeta)^{-1}\big]_{ab},\nn\\
\braket{\omega|\widetilde F_{ab}|\zeta}
&=&\braket{\omega|\zeta}\big[2 (\1-\omega\Star\zeta)^{-1}\conj \omega\big]_{ab}.\label{matrix elements}
\ees\end{proposition}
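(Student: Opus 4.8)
The plan is to mimic the one-parameter-subgroup differentiation trick already used for the $\U(N)$ matrix elements: to compute $\braket{\omega|G|\zeta}$ for a generator $G\in\{E_\alpha,\widetilde F_\mu,F_\mu\}$, I would evaluate $\braket{\omega|e^{\theta G}|\zeta}$ — where $e^{\theta G}$ is a (possibly complexified) group element whose action on the coherent states is explicit — and differentiate at $\theta=0$. Throughout I would work with the \emph{unnormalised} states $\ketp{\zeta}:=\exp(\half\widetilde F_\zeta)\ket 0$, for which the overlap is simply $\praket{\omega|\zeta}=\det(\1-\omega\Star\zeta)^{-1}$ (this is the scalar product of the Definition stripped of the two factors $\mathcal N(\omega),\mathcal N(\zeta)$), restoring the normalisations only at the very end through $\braket{\omega|\zeta}=\mathcal N(\omega)\mathcal N(\zeta)\praket{\omega|\zeta}$.

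First I would treat $\widetilde F$. Because the $\widetilde F_{ab}$ mutually commute, for antisymmetric $\mu$ one has $e^{\theta\widetilde F_\mu}\ketp{\zeta}=\ketp{\zeta+2\theta\mu}$ (in block form $\exp(\half\widetilde F_\nu)$ is the upper-triangular matrix with off-diagonal block $\nu$, as in Lemma \ref{lem:UDL}, so the exponentials just add their parameters). Pairing with $\bra\omega$ and using the unnormalised overlap gives $\praket{\omega|e^{\theta\widetilde F_\mu}|\zeta}=\det(\1-\omega\Star\zeta-2\theta\,\omega\Star\mu)^{-1}$, and Jacobi's formula for $\frac{d}{d\theta}\det(M_\theta)^{-1}$ yields $\praket{\omega|\widetilde F_\mu|\zeta}=2\,\praket{\omega|\zeta}\,\tr\!\big((\1-\omega\Star\zeta)^{-1}\omega\Star\mu\big)$. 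Reading off the coefficient of $\mu^{ab}$ — remembering $\widetilde F_{ab}=-\widetilde F_{ba}$, so only the antisymmetric part survives — produces the matrix $G:=(\1-\omega\Star\zeta)^{-1}\omega\Star$, and the stated closed form follows after rewriting $\omega\Star=-\conj\omega$.

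Next, for $E_\alpha$ I would commute it through the exponential using $\Ad_{e^{\theta E_\alpha}}\widetilde F_\zeta=\widetilde F_{e^{\theta\alpha}\zeta e^{\theta\alpha\Transpose}}$ together with $E_\alpha\ket 0=\tr(\alpha)\ket 0$ (the latter precisely because of the $\delta_{ab}\one$ term in $E_{ab}$), giving $\praket{\omega|e^{\theta E_\alpha}|\zeta}=e^{\theta\tr\alpha}\det(\1-\omega\Star e^{\theta\alpha}\zeta e^{\theta\alpha\Transpose})^{-1}$. Differentiating at $\theta=0$ and extracting the coefficient of $\alpha^{ab}$ leaves the $\delta_{ab}$ coming from $\tr\alpha$ together with the two terms $(G\zeta)_{ab}$ and $(\zeta G)_{ba}$. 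Finally $F$ comes for free by Hermitian conjugation: using $F_{ab}\Dagger=\widetilde F_{ab}$ one has $\braket{\omega|F_{ab}|\zeta}=\conj{\braket{\zeta|\widetilde F_{ab}|\omega}}$, so complex-conjugating the already-established $\widetilde F$ formula and simplifying gives the result.

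The main obstacle — essentially the only non-formal step — is the linear algebra collapsing these raw expressions into the advertised closed forms, and two facts do all the work: the push-through identity $X(\1+YX)^{-1}=(\1+XY)^{-1}X$, and the antisymmetry of $G=(\1-\omega\Star\zeta)^{-1}\omega\Star$. The latter I would prove by writing $G=-(\1+\conj\omega\,\zeta)^{-1}\conj\omega=-\conj\omega\,(\1+\zeta\conj\omega)^{-1}$ (push-through) and checking $G\Transpose=-G$ using $\zeta\Transpose=-\zeta$ and $(\conj\omega)\Transpose=-\conj\omega$. Antisymmetry of $G$ is exactly what turns $(\zeta G)_{ba}$ into $(G\zeta)_{ab}$ in the $E$-computation — so the two terms merge into $2\,\omega\Star\zeta(\1-\omega\Star\zeta)^{-1}$ — and what converts $G_{ba}$ into $\big[(\1-\omega\Star\zeta)^{-1}\conj\omega\big]_{ab}$ in the $\widetilde F$-computation; the same two identities reduce the conjugated $\widetilde F$ expression to $2\zeta(\1-\omega\Star\zeta)^{-1}$ for $F$.
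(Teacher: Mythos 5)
Your proposal is correct, but it takes a genuinely different route from the paper's own proof. The paper proves Proposition \ref{prop:coherent_matrix_elements} by inserting the resolution of the identity in terms of harmonic-oscillator (Heisenberg group) coherent states, so that each matrix element becomes a Gaussian integral, and then reading off moments with the source formula of Proposition \ref{prop:gaussian_integral}; for this the generators are first put in anti-normal order, e.g.\ \( E_{ab}=A_bA_a\Dagger+B_bB_a\Dagger-\delta_{ab}\1 \). You instead transplant the one-parameter-subgroup differentiation trick that the paper uses only in its \( \U(N) \) section: commutativity of the \( \widetilde F_{ab} \) gives \( e^{\theta \widetilde F_\mu}\ketp{\zeta}=\ketp{\zeta+2\theta\mu} \), the adjoint action together with \( E_\alpha\ket{0}=\tr(\alpha)\ket{0} \) (which indeed hinges on the \( \delta_{ab}\1 \) term) handles \( E \), unitarity \( F_{ab}\Dagger=\widetilde F_{ab} \) handles \( F \), and Jacobi's formula does the differentiation. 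Your steps check out: using the unnormalised overlap \( \praket{\omega|\zeta}=\det(\1-\omega\Star\zeta)^{-1} \) is legitimate because the paper establishes the normalised overlap before the proposition; antisymmetry of \( G=(\1-\omega\Star\zeta)^{-1}\omega\Star \) (proved via push-through together with \( \zeta\Transpose=-\zeta \) and \( \omega\Star=-\conj\omega \)) indeed merges \( (\zeta G)_{ba} \) with \( (G\zeta)_{ab} \) in the \( E \) case and converts \( G_{ba}-G_{ab} \) into \( 2\bracks{(\1-\omega\Star\zeta)^{-1}\conj\omega}_{ab} \) in the \( \widetilde F \) case; and the final rewriting uses that \( \omega\Star\zeta \) commutes with \( (\1-\omega\Star\zeta)^{-1} \), a function of itself. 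As for what each approach buys: yours is shorter, purely algebraic, and makes the \( \U(N) \) and \( \SO^*(2N) \) computations uniform; the paper's Gaussian machinery is heavier to set up, but once in place it computes matrix elements of arbitrary anti-normally ordered polynomials in the oscillators, and is reused verbatim for the quartic moments needed in the variances and covariances of Proposition \ref{prop:expected_values}, which your method would reach only through second, two-parameter derivatives.
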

The proof of this proposition can be found in the appendix \ref{app:mat elements}. From this proposition, we can determine the expectation value and variance of the area observables.
\begin{proposition}[Expectation values of areas]\label{prop:expected_values}
The expectation values of the area operators in a particular coherent state \( \ket{\zeta} \) are
\begin{equation*}
\braket{\mathcal{A}_a}=\bracks{\zeta\Star\zeta\paren{\1-\zeta\Star\zeta}^{-1}}_{aa},
\quad
\braket{\mathcal{A}}=\tr\bracks{\zeta\Star\zeta\paren{\1-\zeta\Star\zeta}^{-1}} ={\tr\bracks{\sigma-\1}} , \quad \textrm{with }  \sigma := \paren{\1-\zeta\Star\zeta}^{-1} 
\end{equation*}
and their variance is
\begin{equation*}
\Var(\mathcal{A}_a)=\half\braket{\mathcal{A}_a}\paren{\braket{\mathcal{A}_a}+1},
\quad
\Var(\mathcal{A})=\sum_{a,b}\braket{\mathcal{A}_{ab}}\paren{\braket{\mathcal{A}_{ab}}+\delta_{ab}}={\tr(\sigma(\sigma-\1))}.
\end{equation*}
Moreover, when the non-zero eigenvalues of \( \zeta\Star\zeta \) approach \( 1 \), although \(\Var(\mathcal{A})\) grows without bound, the {coefficient of variation} \( \frac{\sqrt{\Var(\mathcal{A})}}{\braket{\mathcal{A}}} \) approaches a value in \( (0,1] \).
\end{proposition}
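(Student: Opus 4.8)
The plan is to read off the first two cumulants of the area operators from a single generating function, $\braket{\zeta|e^{E_\alpha}|\zeta}$, specialising the matrix $\alpha$ to extract first the total area and then the individual legs. Write $\rho:=\zeta\Star\zeta$ and $\sigma:=(\1-\rho)^{-1}$. The first moments come essentially for free from Proposition~\ref{prop:coherent_matrix_elements}: setting $\omega=\zeta$ and using $\braket{\zeta|\zeta}=1$ gives $\braket{E_{ab}}=[\1+2\rho(\1-\rho)^{-1}]_{ab}=\delta_{ab}+2(\sigma-\1)_{ab}$, using the push-through identity $\rho(\1-\rho)^{-1}=\sigma-\1$. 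Since $\mathcal{A}_{ab}=\half(E_{ab}-\delta_{ab})$ this yields $\braket{\mathcal{A}_{ab}}=(\sigma-\1)_{ab}$, hence $\braket{\mathcal{A}_a}=(\sigma-\1)_{aa}$ and $\braket{\mathcal{A}}=\tr(\sigma-\1)$.

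For the variances I would first establish the closed form $\braket{\zeta|e^{E_\alpha}|\zeta}=e^{\tr\alpha}\det(\1-\rho)\,\det(\1-\zeta\Star e^{\alpha}\zeta e^{\alpha\Transpose})^{-1}$ for any $\alpha\in M_N(\C)$. This follows by pushing $e^{E_\alpha}$ through the squeezing exponential: the relation $[E_\alpha,\widetilde F_\zeta]=\widetilde F_{\alpha\zeta+\zeta\alpha\Transpose}$ exponentiates to $e^{E_\alpha}\widetilde F_\zeta e^{-E_\alpha}=\widetilde F_{e^\alpha\zeta e^{\alpha\Transpose}}$, and since $e^{E_\alpha}\ket0=e^{\tr\alpha}\ket0$, inserting the scalar-product formula of the Definition gives the stated expression. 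For the total area I take $\alpha=\theta\1$, so $e^\alpha\zeta e^{\alpha\Transpose}=e^{2\theta}\zeta$ and $\log\braket{\zeta|e^{\theta\sum_a E_{aa}}|\zeta}=N\theta+\sum_i[\log(1-\mu_i)-\log(1-e^{2\theta}\mu_i)]$, with $\mu_i\in[0,1)$ the eigenvalues of $\rho$. The first and second derivatives at $\theta=0$ give $\braket{\sum_a E_{aa}}=N+2\sum_i\mu_i/(1-\mu_i)$ and $\Var(\sum_a E_{aa})=4\sum_i\mu_i/(1-\mu_i)^2$; with $\mathcal{A}=\half(\sum_a E_{aa}-N)$ this reproduces $\braket{\mathcal{A}}=\tr(\sigma-\1)$ and $\Var(\mathcal{A})=\sum_i\mu_i/(1-\mu_i)^2=\tr\!\big(\sigma(\sigma-\1)\big)$. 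The displayed double sum is then the algebraic rewriting $\tr(\sigma(\sigma-\1))=\tr((\sigma-\1)^2)+\tr(\sigma-\1)=\sum_{a,b}\braket{\mathcal{A}_{ab}}\braket{\mathcal{A}_{ba}}+\sum_a\braket{\mathcal{A}_a}$.

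The genuinely delicate step is the single-leg variance. Here I take $\alpha=\theta\Delta_{aa}$ and compute $\Var(E_{aa})=\partial_\theta^2\log\braket{\zeta|e^{\theta E_{aa}}|\zeta}\big|_0=\tr(\sigma\dot K\sigma\dot K)-\tr(\sigma\ddot K)$, with $K(\theta)=\1-\zeta\Star e^{\theta\Delta_{aa}}\zeta e^{\theta\Delta_{aa}}$. The traces look messy, and naively they produce an extra contribution $\tfrac12\lvert(\sigma\zeta\Star)_{aa}\rvert^2$ to the area variance that would spoil the clean answer; the main obstacle is to show this term vanishes. Two facts do the job. First, because $\zeta_{aa}=0$ one checks $\ddot K|_0=\dot K|_0=-(\zeta\Star\Delta_{aa}\zeta+\rho\,\Delta_{aa})$. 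Second, and crucially, $\sigma\zeta\Star$ is antisymmetric, so $(\sigma\zeta\Star)_{aa}=0$: this follows from $\zeta\Transpose=-\zeta$ together with $\sigma$ being a function of $\rho$, via $\sigma\zeta\Star=\zeta\Star(\1-\zeta\zeta\Star)^{-1}$ and $(\sigma\zeta\Star)\Transpose=\bar\zeta(\1-\zeta\zeta\Star)^{-1}=-\zeta\Star(\1-\zeta\zeta\Star)^{-1}=-\sigma\zeta\Star$. Using these, together with $(\sigma-\1)_{aa}=(\zeta\sigma\zeta\Star)_{aa}$ (which comes from $\sigma\Transpose=(\1-\zeta\zeta\Star)^{-1}$), the remaining traces collapse to $\Var(E_{aa})=2(\sigma-\1)_{aa}\,\sigma_{aa}$, i.e. $\Var(\mathcal{A}_a)=\tfrac14\Var(E_{aa})=\half\braket{\mathcal{A}_a}\big(\braket{\mathcal{A}_a}+1\big)$, since $\braket{\mathcal{A}_a}+1=\sigma_{aa}$.

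For the final assertion I would use that, as $\zeta$ is antisymmetric of rank $2k$, the nonzero eigenvalues of $\rho$ come in equal pairs $\lambda_\alpha^2$, $\alpha=1,\dots,k$. Setting $x_\alpha:=\lambda_\alpha^2/(1-\lambda_\alpha^2)$, the formulas above give $\braket{\mathcal{A}}=2\sum_\alpha x_\alpha$ and $\Var(\mathcal{A})=2\sum_\alpha x_\alpha(x_\alpha+1)$, so the squared coefficient of variation is $\Var(\mathcal{A})/\braket{\mathcal{A}}^2=\big[\sum_\alpha x_\alpha(x_\alpha+1)\big]/\big[2(\sum_\alpha x_\alpha)^2\big]$. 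When the $\lambda_\alpha\to1$ the $x_\alpha\to\infty$, so $\Var(\mathcal{A})\to\infty$ while the $+x_\alpha$ in the numerator becomes subleading and the coefficient of variation tends to $\big(\tfrac12\sum_\alpha x_\alpha^2/(\sum_\alpha x_\alpha)^2\big)^{1/2}$. By Cauchy--Schwarz the ratio $\sum_\alpha x_\alpha^2/(\sum_\alpha x_\alpha)^2$ lies in $[1/k,1]$, so the limit lies in $[1/\sqrt{2k},1/\sqrt2]\subset(0,1]$, its precise value depending on the relative rates at which the $\lambda_\alpha$ approach $1$; this is in sharp contrast with the $\U(N)$ case, where the coefficient of variation vanishes.
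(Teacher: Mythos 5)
Your proof is correct, but it follows a genuinely different route from the paper's. The paper's Appendix C works entirely with Gaussian integrals: it anti-normal-orders \(4\mathcal{A}_a\mathcal{A}_b\), inserts the resolution of identity for the harmonic-oscillator coherent states, and extracts \(\braket{A_aA_bA\Dagger_aA\Dagger_b}\) etc.\ from derivatives of the Gaussian source term, obtaining the full covariance \(\Cov(\mathcal{A}_a,\mathcal{A}_b)=\half\sigma_{ab}\sigma_{ba}+\half(\sigma\zeta\Star)_{ab}(\zeta\sigma)_{ba}-\half\delta_{ab}\sigma_{ab}\), from which the variances follow. You instead build a closed-form cumulant generating function \(\braket{\zeta|e^{E_\alpha}|\zeta}=e^{\tr\alpha}\det(\1-\zeta\Star\zeta)\det(\1-\zeta\Star e^{\alpha}\zeta e^{\alpha\Transpose})^{-1}\) from the adjoint action \(e^{E_\alpha}\widetilde F_\zeta e^{-E_\alpha}=\widetilde F_{e^{\alpha}\zeta e^{\alpha\Transpose}}\) and the scalar-product formula, then differentiate its logarithm --- essentially transplanting to \(\SO^*(2N)\) the generating-function technique the paper itself uses in its \(\U(N)\) section. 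I verified your key steps: the generating function is correct (for small \(\theta\) the transformed matrix stays in \(\Omega_N\)); \(\ddot K|_0=\dot K|_0\) indeed follows from \(\zeta_{aa}=0\) and \(\Delta_{aa}^2=\Delta_{aa}\); and the trace collapse to \(\Var(E_{aa})=2(\sigma-\1)_{aa}\sigma_{aa}\) holds because the diagonal entries of the antisymmetric matrices \(\sigma\zeta\Star\) and \(\zeta\sigma\) vanish --- which is exactly the same structural fact the paper invokes at the corresponding point of its covariance computation. As for what each approach buys: the paper's route yields all off-diagonal covariances (more than the proposition states), while yours is leaner, gives all cumulants in principle, and is notably sharper on the final assertion. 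Using the pairing of the nonzero eigenvalues of \(\zeta\Star\zeta\), you pin the limiting coefficient of variation inside \([1/\sqrt{2k},\,1/\sqrt{2}]\subset(0,1]\); the paper only proves the upper bound via \(\tr\bracks{\sigma(\sigma-\1)}\le\tr(\sigma)\tr(\sigma-\1)\), which shows the bound tends to \(1\) but by itself does not exclude the coefficient of variation tending to \(0\), so your argument actually establishes the strict positivity claimed in the statement. One cosmetic remark: your rewriting \(\Var(\mathcal{A})=\sum_{a,b}\braket{\mathcal{A}_{ab}}\braket{\mathcal{A}_{ba}}+\sum_a\braket{\mathcal{A}_a}\), with transposed indices in the quadratic term, is the correct reading of the proposition's display, since \(\sigma\) is Hermitian but generally not symmetric.
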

The proof of this proposition can also be found in the appendix \ref{app:mat elements}.
Let us spend  few words on the last result of Proposition \ref{prop:expected_values}, regarding the coefficient of variation. This coefficient measures the \textit{relative} standard deviation, i.e., the amount of dispersion compared to the value of the mean. In our particular case, the result is telling us that, even though the dispersion gets bigger as the total area increases, the relative standard deviation is bounded by a value that approaches \( 1 \) for sufficiently large area. 
Note that the coefficient of variation does not provide any useful information when the area is very small, as\footnote{Using the fact that, as \( \sigma \geq 0  \), \(\tr(\one) \tr(\sigma^2)\geq \tr(\sigma)^2 \).}
\begin{equation}
\textrm{When }  \braket{\mathcal{A}} \rightarrow 0, \quad 
\frac{\sqrt{\Var{\mathcal{A}}}}{\braket{\mathcal{A}}}=\frac{\sqrt{\tr\bracks{\sigma(\sigma-\1)}}}{\tr(\sigma-\1)}\geq
\frac{\sqrt{\frac{1}{N}\tr(\sigma)\tr(\sigma-\1)}}{\tr(\sigma-\1)}\rightarrow \infty.
\end{equation}

In the specific case when \( \rank(\zeta)=2 \) we can do much more than computing expectation values and variances: in fact, we can produce the complete probability distribution of the total area as follows\footnote{When \( \rank(\zeta)>2 \) an important simplifying assumption is missing, namely,  that \( \zeta\zeta\Star\zeta  \) is proportional to \( \zeta \).}.
\begin{figure}[h]
\includegraphics[scale=.7]{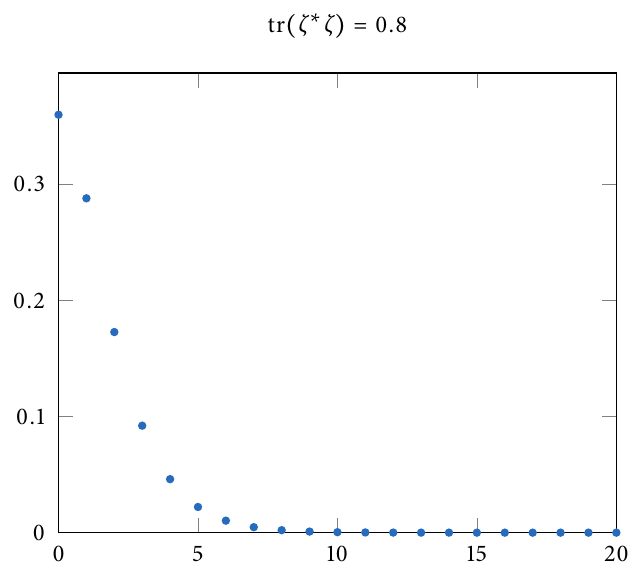}
\includegraphics[scale=.7]{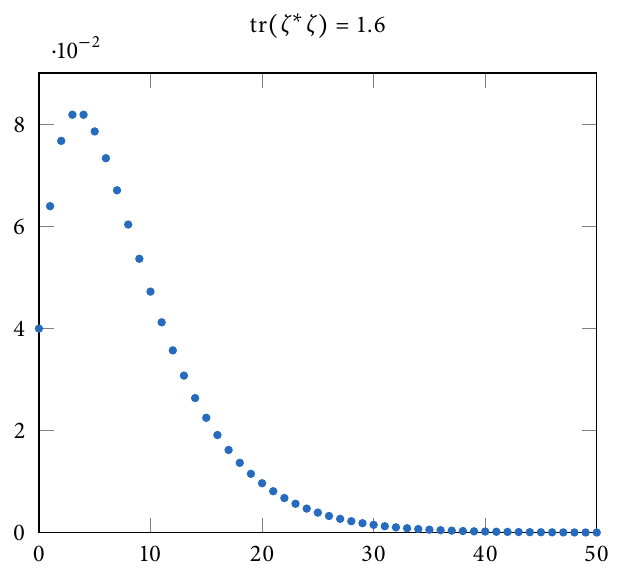}
\includegraphics[scale=.7]{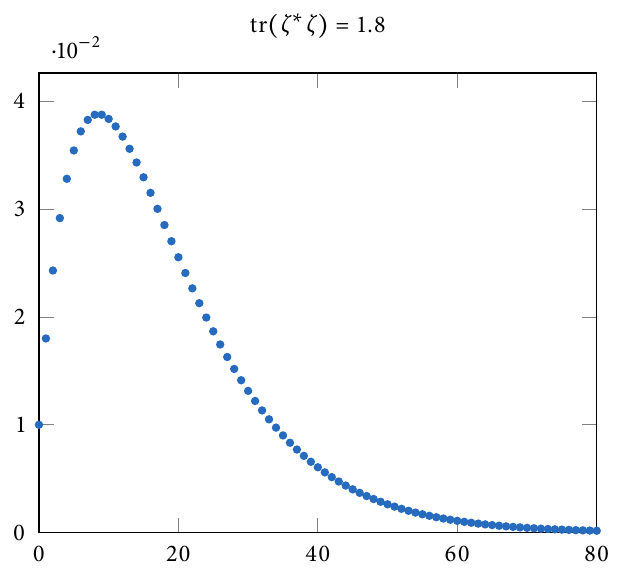}
\includegraphics[scale=.7]{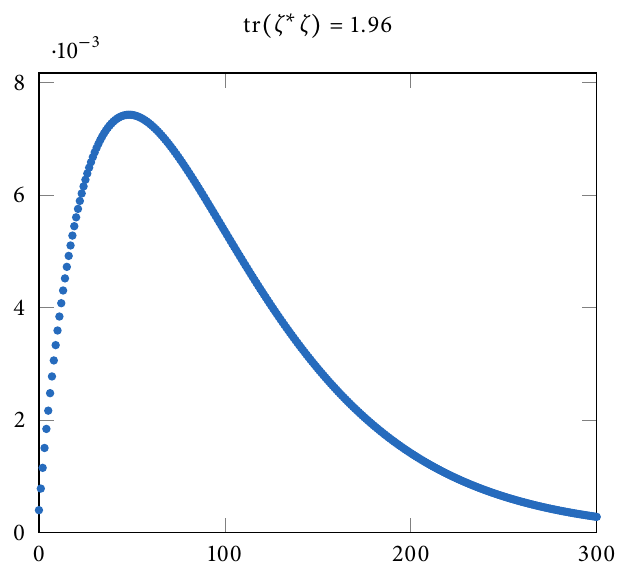}
\caption{Distribution of total area for different values of \( \tr(\zeta\Star\zeta)\) when $\rank(\zeta)=2$. }\label{fig1}
\end{figure}
\begin{proposition}[Probability distribution of total area]\label{prop:probability}
When \(\zeta\) is of rank \(2\) the probability distribution for the total area in the state \(\ket{\zeta}\) is
\[
P_\zeta(J)=\det(\1-\zeta\Star\zeta)\paren{\half\tr(\zeta\Star\zeta)}^J (J+1),\quad J\in \N_0.
\]
\end{proposition}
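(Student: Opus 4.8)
The plan is to read off $P_\zeta(J)$ as the squared norm of the component of $\ket{\zeta}$ lying in the fixed-total-area sector $\mathcal{H}^J_N$. First I would note that the total area operator is $\mathcal{A}=\half\sum_a\paren{A\Dagger_a A_a+B\Dagger_a B_a}$ and that each $\widetilde F_{ab}=B\Dagger_a A\Dagger_b-A\Dagger_a B\Dagger_b$ creates two oscillator quanta, hence raises $\mathcal{A}$ by one unit; consequently $\paren{\half\widetilde F_\zeta}^J\ket{0}\in\mathcal{H}^J_N$. Expanding the exponential,
\[
\ket{\zeta}=\mathcal{N}(\zeta)\sum_{J=0}^\infty\frac{1}{J!}\paren{\half\widetilde F_\zeta}^J\ket{0},
\]
decomposes $\ket{\zeta}$ into mutually orthogonal vectors (eigenvectors of the self-adjoint $\mathcal{A}$ with distinct eigenvalues $J$), so that
\[
P_\zeta(J)=\mathcal{N}(\zeta)^2\,\frac{1}{(J!)^2}\,\bigl\|\paren{\half\widetilde F_\zeta}^J\ket{0}\bigr\|^2 .
\]

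The crux is evaluating this norm, and here the hypothesis $\rank(\zeta)=2$ is what makes it explicit. For rank-$2$ $\zeta$ the vector $\paren{\half\widetilde F_\zeta}^J\ket{0}$ is, up to normalisation, precisely the $\U(N)$ coherent state $\ket{J,\zeta}$ constructed earlier. Since that state is normalised to $1$ with constant $\mathcal{N}_J(\zeta)=\paren{\shalf\tr(\zeta\Star\zeta)}^{-J/2}\big/\sqrt{J!(J+1)!}$, I can read off immediately
\[
\bigl\|\paren{\half\widetilde F_\zeta}^J\ket{0}\bigr\|^2=\mathcal{N}_J(\zeta)^{-2}=J!\,(J+1)!\,\paren{\half\tr(\zeta\Star\zeta)}^J .
\]
Alternatively, using $(\widetilde F_\zeta)\Dagger=F_\zeta$ one may compute $\braket{0|F_\zeta^J\,\widetilde F_\zeta^J|0}$ directly, commuting each $F_\zeta$ through the $\widetilde F_\zeta$'s by means of $[F_\zeta,\widetilde F_\zeta]=E_{(\zeta-\zeta\Transpose)(\zeta-\zeta\Transpose)\Star}$ together with $F_{ab}\ket{0}=0$ and $E_{ab}\ket{0}=\delta_{ab}\ket{0}$; the rank-$2$ identity $\zeta\zeta\Star\zeta\propto\zeta$ is exactly what makes the resulting recursion telescope into the same factorial.

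Substituting back, with $\mathcal{N}(\zeta)^2=\det(\1-\zeta\Star\zeta)$ and $(J+1)!/J!=J+1$, gives
\[
P_\zeta(J)=\det(\1-\zeta\Star\zeta)\paren{\half\tr(\zeta\Star\zeta)}^J(J+1),
\]
which is the claim. As a consistency check I would verify $\sum_J P_\zeta(J)=1$: for an antisymmetric $\zeta$ of rank $2$ the eigenvalues of $\zeta\Star\zeta$ are $(\mu,\mu,0,\dots,0)$, so $\half\tr(\zeta\Star\zeta)=\mu$ and $\det(\1-\zeta\Star\zeta)=(1-\mu)^2$, whence $\sum_{J\geq0}(J+1)\mu^J=(1-\mu)^{-2}$ restores unit normalisation. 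The main obstacle is precisely this norm: only the rank-$2$ assumption collapses the combinatorics of pushing $J$ copies of $F_\zeta$ past $J$ copies of $\widetilde F_\zeta$ into the clean $J!\,(J+1)!$ factor (equivalently, identifies the vector with an already-normalised $\U(N)$ coherent state). For higher rank $\zeta\zeta\Star\zeta$ fails to be proportional to $\zeta$, the intermediate terms no longer close, and no such closed form for $P_\zeta$ is expected — which is why the proposition is stated only for $\rank(\zeta)=2$.
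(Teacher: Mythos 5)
Your proof has the same skeleton as the paper's: expand \( \ket{\zeta} \) in eigenvectors of the self-adjoint total-area operator \( \mathcal{A} \), invoke orthogonality of the distinct-eigenvalue sectors to get \( P_\zeta(J)=\det(\1-\zeta\Star\zeta)\,\big\|\paren{\half\widetilde F_\zeta}^J\ket{0}\big\|^2/(J!)^2 \), and reduce the proposition to evaluating that norm. Up to this point everything you write matches the paper's argument.

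The problem is your primary evaluation of the norm. You ``read it off'' from the normalisation of the \( \U(N) \) coherent states \( \ket{J,\xi} \), but in this paper that normalisation is not an independently established fact: the main text explicitly defers the check that \( \ket{J,\xi} \) has norm \( 1 \) to the end of this very appendix, i.e.\ to the computation you are trying to avoid. Within the paper's logical structure your shortcut is therefore circular --- the identity \( \big\|\paren{\half\widetilde F_\zeta}^J\ket{0}\big\|^2=J!(J+1)!\paren{\half\tr(\zeta\Star\zeta)}^J \) is exactly what must be proven, and the \( \U(N) \) normalisation is its corollary rather than an available input. Fortunately, the route you offer as an ``alternative'' is precisely the paper's proof, and you have all its ingredients right: \( (\widetilde F_\zeta)\Dagger=F_\zeta \), the commutators \( \bracks{\half F_\zeta,\half\widetilde F_\zeta}=E_{\zeta\zeta\Star} \) and \( \bracks{E_{\zeta\zeta\Star},\half\widetilde F_\zeta}=\widetilde F_{\zeta\zeta\Star\zeta} \), the vacuum relations \( F_{ab}\ket{0}=0 \) and \( E_\alpha\ket{0}=\tr(\alpha)\ket{0} \), and the rank-\( 2 \) identity \( \zeta\zeta\Star\zeta=\half\tr(\zeta\Star\zeta)\,\zeta \), which together give the recursion \( \braket{J,\zeta|J,\zeta}=J(J+1)\,\half\tr(\zeta\Star\zeta)\,\braket{J-1,\zeta|J-1,\zeta} \) and hence the factorial formula. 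Promote that computation to the main argument (or, if you prefer the coherent-state shortcut, make it non-circular by first verifying \( \|\widetilde F_{12}^J\ket{0}\|^2=J!(J+1)! \) by a direct two-oscillator computation and then transporting the norm with the unitary \( g_{[\xi]} \) via Proposition \ref{prop:action of U(n) on highest weight}), and your proof is complete; your closing check \( \sum_J P_\zeta(J)=1 \) is correct and a good sanity test.
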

Proof of this proposition can be found in Appendix \ref{proof:probability}. Plots for the probability distribution can be found in Fig.~\ref{fig1}.
Note how, as  the non-zero eigenvalue of \( \zeta\Star\zeta \) approaches \( 1 \) (or equivalently \( \tr(\zeta\Star\zeta)\rightarrow 2 \)), the relative shape of the distribution remains the same, which is a consequence of Proposition \ref{prop:expected_values}.

\subsubsection{Relating  \( \SO^*(2N) \) and some Bogoliubov transformations}
Some recent works by Bianchi and collaborators have emphasized the use of the symplectic group \( \Sp(4N,\R) \) to  recover some interesting features for loop quantum gravity, such as anew parametrization of the loop variables \cite{Bianchi:2016hmk, Bianchi:2016tmw}. We would like now to  coherent relate our states to this approach.
  
In fact, these  states can be reinterpreted in terms of \textit{Bogoliubov transformations} by making use of the connection between \( \SO^*(2N) \) and the \textit{symplectic group} \( \Sp(4N,\R) \). Recall that, if we have a set of \( n \) decoupled harmonic oscillators
\begin{equation}
[C_a,C\Dagger_b]=\delta_{ab},\quad [C_a,C_b]=[C\Dagger_a,C\Dagger_b]=0,
\end{equation}
a Bogoliubov transformation is a a canonical transformation which maps them to a new set of harmonic oscillators,
\be
\begin{pmatrix}
\widetilde C \\ \widetilde C\Dagger
\end{pmatrix}
=
\begin{pmatrix}
U & V \\ \ov V & \ov U
\end{pmatrix}
\begin{pmatrix}
C \\ C\Dagger
\end{pmatrix} 
\ee 
The conditions on \( U \) and \( V \) such that
\begin{equation}
[\widetilde C_a,\widetilde C\Dagger_b]=\delta_{ab},\quad [\widetilde C_a,\widetilde C_b]=[\widetilde C\Dagger_a,\widetilde C\Dagger_b]=0
\end{equation}
are
\begin{equation}\label{eq:sp_conditions}
UU\Dagger - V V\Dagger = \1,\quad U V\Transpose = V U\Transpose,
\end{equation}
which automatically ensure that \( U \) is invertible and that
\begin{equation}
\begin{pmatrix}
U & V \\ \ov V & \ov U
\end{pmatrix} \in \Sp(2n,\R);
\end{equation}
as such, we can interpret \( \Sp(2n,\R) \) as the group of Bogoliubov transformations of \( n \) harmonic oscillators. The vacuum for the set of new harmonic oscillators is given by
\begin{equation}\label{vac}
\ket{\widetilde 0 } := \mathcal{N} \exp\paren{\half S^{ab} C\Dagger_a C\Dagger_b}\ket{0},
\end{equation}
also known as the \textit{squeezed vacuum}, where \( S \) is the symmetric matrix 
\begin{equation}
S= - U^{-1}V.
\end{equation}
In fact, it is easy to see that
\begin{equation}
\begin{split}
C_d \ket{\widetilde 0} &= \mathcal{N} \sum_{k=0}^\infty \frac{1}{k!}\bracks{C_d,\paren{\half S^{ab} C\Dagger_a C\Dagger_b}^k}\ket{ 0}
= \mathcal{N} \sum_{k=0}^\infty \frac{1}{k!} k \paren{\half S^{ab} C\Dagger_a C\Dagger_b}^{k-1}\paren{\half S^{cd} C\Dagger_c + \half S^{dc}C\Dagger_c} \ket{ 0}
\\
&= \mathcal{N} \sum_{k=0}^\infty \frac{1}{k!} \paren{\half S^{ab} C\Dagger_a C\Dagger_b}^{k} S^{dc} C\Dagger_c \ket{ 0}
= S^{dc} C\Dagger_c \ket{\widetilde 0},
\end{split}
\end{equation}
from which it follows that
\begin{equation}
\widetilde C_a \ket{\widetilde 0 } = 0.
\end{equation}
The fact that \( \ket{\widetilde 0} \) has finite norm can be proven by evaluating \( \braket{\widetilde 0|\widetilde 0} \) as a Gaussian integral, making use of the resolution of the identity in terms of the coherent states for the harmonic oscillators \( C_a \).

In our framework, we use $n=2N$ harmonic oscillators, so we expect to deal with \( \Sp(4N,\R) \).   To connect \( \SO^*(2N) \) to the Bogoliubov transformations, note that \( \SO^*(2N) \) can be embedded into \( \Sp(4N,\R) \) as
\begin{equation}
\varphi:
\begin{pmatrix}
X & Y \\ -\ov Y & \ov X
\end{pmatrix}
\in \SO\Star(2N)
\mapsto
\begin{pmatrix}
X & 0 & 0 & -Y \\
0 & X & Y & 0 \\
0 & -\ov Y & \ov X & 0\\
\ov Y & 0 & 0 &\ov X
\end{pmatrix}
\in \Sp(4N,\R).
\end{equation}
Indeed it is a simple exercise to show that the conditions \eqref{eq:so*_conditions} ensure that \eqref{eq:sp_conditions} hold. Hence we can interpret \( \SO^*(2N) \) as a subgroup of Bogoliubov transformations of the \( 2N \) harmonic oscillators \( A_a \), \( B_b \) that we use to construct the \JS\ representation. In particular, for the Bogoliubov transformation \( \varphi(g_\zeta^{-1}) \), with \( \zeta \in \Omega_N \) we get
\begin{equation}
S=\begin{pmatrix}
0 & -\zeta \\ \zeta & 0
\end{pmatrix},
\end{equation}
so that the associated squeezed vacuum \eqref{vac} is
\begin{equation}
\mathcal{N} \exp\paren{\demi S^{ab} B\Dagger_a A\Dagger_b}\ket{0} = \mathcal{N} \exp\paren{\half \zeta^{ab} \widetilde F_{ab}}\ket{0},
\end{equation}
which is exactly the coherent state \( \ket{\zeta} \).
To summarise, we can regard the new coherent intertwiners we defined  as the squeezed vacua associated to a subgroup of Bogoliubov transformations, isomorphic to \( \SO^*(2N) \). The particular Bogoliubov transformations are exactly those for which the squeezed vacuum is still \( \SU(2) \) invariant (i.e, an intertwiner), so that we can essentially regard \( \SO^*(2N) \) as the group of canonical transformations of \( 2N \) harmonic oscillators preserving \( \SU(2) \) invariance, where the \( \SU(2) \) action is implemented through the \JS\ representation.

\subsection{Semi-classical limit}
Let us now consider the semi-classical limit of our coherent intertwiners. Our goal is to obtain out of the expectation values of the algebra generators a set of variables that, endowed with the appropriate Poisson structure, we can interpret as a classical geometry (similarly to what we discussed in Section \ref{sec:polyhedra}). In particular, we want to be able to construct a set of vectors that sum to zero, and as such can be regarded as the normals to a convex polyhedron \cite{return_spinor}.

\subsubsection{Recovering the spinor variables}

In order to investigate the semi-classical limit, it will prove useful to  rewrite the expected values of the \( \so^*(2N) \) generators \eqref{matrix elements} in a different way. Note the similarity with the bra-ket notation we introduced in section \ref{sec:spinor} when working with classical spinors.
\begin{proposition}\label{prop:semiclassical}
The expectation values of the \(\so^*(2N)\) generators can be written in the form
\begin{equation*}
\braket{\zeta|F_{ab}|\zeta}=\sum_{\alpha=1}^{k}\frac{1}{\lambda_\alpha}\braketlb{z^\alpha_a|z^\alpha_b}
,\quad
\braket{\zeta|\widetilde{F}_{ab}|\zeta}=\sum_{\alpha=1}^{k}\frac{1}{\lambda_\alpha}\braketrb{z^\alpha_a|z^\alpha_b}
,\quad
\braket{\zeta|E_{ab}|\zeta}=\delta_{ab}+\sum_{\alpha=1}^{k} \braket{z^\alpha_a|z^\alpha_b},
\end{equation*}
where \( k=\half\rank(\zeta) \),  \(\lambda_\alpha^2\) is a non-zero eigenvalue of \( \zeta\Star\zeta \) and the spinors are specified in terms of a unitary matrix $U$,
\begin{equation}
\ket{z^\alpha_a}= \paren{\frac{2\lambda^2_\alpha}{1-\lambda_\alpha^2}}^{\half}
\begin{pmatrix}
U_{a,2\alpha-1} \\ U_{a,2\alpha}
\end{pmatrix} \textrm{ with } \zeta = U M U\Transpose \textrm{ and } M=\bigoplus_{\alpha=1}^k \lambda_\alpha 
\begin{pmatrix}
0 & -1 \\ 1 & 0
\end{pmatrix}
\oplus \0_{N-2k},\quad \lambda_\alpha>0.
\end{equation}
As such the spinors satisfy automatically some closure constraint
\begin{equation*}
\sum_{a=1}^{n}\ket{z^\alpha_a}\bra{z^\beta_a}=\delta_{\alpha\beta} \sum_{a=1}^n \half \braket{z^\alpha_a|z^\alpha_a}\1_2.
\end{equation*}
\end{proposition}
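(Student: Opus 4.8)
\emph{Strategy.} The plan is to evaluate the matrix elements of Proposition~\ref{prop:coherent_matrix_elements} on a single coherent state, i.e.\ to set $\omega=\zeta$ in \eqref{matrix elements} and use $\braket{\zeta|\zeta}=1$, so that $\braket{\zeta|E_{ab}|\zeta}=[\one+2\zeta\Star\zeta(\one-\zeta\Star\zeta)^{-1}]_{ab}$, $\braket{\zeta|F_{ab}|\zeta}=[2\zeta(\one-\zeta\Star\zeta)^{-1}]_{ab}$ and $\braket{\zeta|\widetilde F_{ab}|\zeta}=[2(\one-\zeta\Star\zeta)^{-1}\conj\zeta]_{ab}$, and then to diagonalise these rational expressions by inserting the canonical form $\zeta=UMU\Transpose$ of a complex antisymmetric matrix, with $U\in\U(N)$ and $M=\bigoplus_{\alpha}\lambda_\alpha\sigma\oplus\0_{N-2k}$, $\lambda_\alpha>0$, $2k=\rank(\zeta)$. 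This is the Youla/Takagi-type normal form, which I would simply quote.

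First I would compute $\zeta\Star\zeta$. Using $\conj M=M$, $M\Transpose=-M$ and $\sigma^2=-\one_2$, one finds $\zeta\Star\zeta=\conj U D U\Transpose$ with $D=\bigoplus_\alpha\lambda_\alpha^2\one_2\oplus\0_{N-2k}$; since $\conj U$ is unitary and $U\Transpose=\conj U\Dagger$, this is a genuine spectral decomposition, which both confirms that the $\lambda_\alpha^2$ are the nonzero eigenvalues of $\zeta\Star\zeta$ (each doubled) and yields $(\one-\zeta\Star\zeta)^{-1}=\conj U(\one-D)^{-1}U\Transpose$. Substituting these into the three matrix elements and repeatedly using the unitarity identities $U\Dagger U=U\Transpose\conj U=\one$, every product telescopes into the form $\conj U(\cdots)U\Transpose$ or $U(\cdots)U\Transpose$ with a \emph{block-diagonal} inner factor: $D(\one-D)^{-1}$ (blocks $\tfrac{\lambda_\alpha^2}{1-\lambda_\alpha^2}\one_2$) for $E$, and $M(\one-D)^{-1}=(\one-D)^{-1}M$ (blocks $\tfrac{\lambda_\alpha}{1-\lambda_\alpha^2}\sigma$) for $F$ and $\widetilde F$.

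The last step is to read off the spinors. Expanding the $ab$ entry as a sum over the $2\times2$ blocks, only the columns $2\alpha-1$ and $2\alpha$ of $U$ survive in block $\alpha$; defining $\ket{z^\alpha_a}$ as the stated rescaling of $(U_{a,2\alpha-1},U_{a,2\alpha})\Transpose$ absorbs exactly the scalar $\tfrac{2\lambda_\alpha^2}{1-\lambda_\alpha^2}$. The diagonal block $\tfrac{\lambda_\alpha^2}{1-\lambda_\alpha^2}\one_2$ then reproduces $\braket{z^\alpha_a|z^\alpha_b}$ in $E$, while the antisymmetric block $\propto\sigma$ reproduces $\braketlb{z^\alpha_a|z^\alpha_b}$ and $\braketrb{z^\alpha_a|z^\alpha_b}$ once the bracket conventions $[z|=(y,-x)$ and $|z]=(\conj y,-\conj x)\Transpose$ are used, the leftover $1/\lambda_\alpha$ being what remains of $\tfrac{2\lambda_\alpha}{1-\lambda_\alpha^2}$ after the factor $\tfrac{2\lambda_\alpha^2}{1-\lambda_\alpha^2}$ is pulled into the spinor norm. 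The closure constraint is then immediate and needs only orthonormality: since $\sum_a U_{ai}\conj U_{aj}=(U\Dagger U)_{ji}=\delta_{ij}$, the outer product $\sum_a\ket{z^\alpha_a}\bra{z^\beta_a}$ forces $\delta_{\alpha\beta}$ and is proportional to $\one_2$, with constant matching $\tfrac12\sum_a\braket{z^\alpha_a|z^\alpha_a}$.

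The main hazard is the conjugate/transpose bookkeeping: one must consistently track $\conj\zeta=\conj U M\conj U\Transpose$ and $\zeta\Star=-\conj U M\conj U\Transpose$, and keep the two distinct unitarity relations $U\Dagger U=\one$ and $U\Transpose\conj U=\one$ apart, since it is the latter that makes the $\conj U$--$U\Transpose$ sandwiches collapse. The one genuinely delicate point is the relative sign between the $F$ and $\widetilde F$ contractions, which has to be made compatible with the unitarity condition $F\Dagger_{ab}=\widetilde F_{ab}$ (equivalently $\braket{\zeta|\widetilde F_{ab}|\zeta}=\conj{\braket{\zeta|F_{ab}|\zeta}}$ together with $\conj{\braketlb{z_a|z_b}}=-\braketrb{z_a|z_b}$); fixing the conventions of $\sigma$ and of the square brackets so that these agree is where care is required. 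Everything else is mechanical.
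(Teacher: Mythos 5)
Your proposal is correct and follows essentially the same route as the paper's proof: evaluate the matrix elements of Proposition \ref{prop:coherent_matrix_elements} at $\omega=\zeta$ with $\braket{\zeta|\zeta}=1$, insert the antisymmetric normal form $\zeta=UMU^\transpose$, collapse all products via the unitarity identities $U^\dagger U=U^\transpose\ov U=\1$ into block-diagonal inner factors, read off the spinors, and obtain closure from column orthonormality of $U$. The sign subtlety you flag is genuine: the paper's own proof only computes the $F$ and $E$ elements explicitly, and with the stated conventions $[z|=(y,-x)$, $|z]=(\ov y,-\ov x)^\transpose$ a direct computation gives $\braket{\zeta|\widetilde F_{ab}|\zeta}=-\sum_\alpha\lambda_\alpha^{-1}\braketrb{z^\alpha_a|z^\alpha_b}$, consistent with $\ov{\braketlb{z_a|z_b}}=-\braketrb{z_a|z_b}$, so the $\widetilde F$ formula in the proposition holds only up to that choice of bracket convention.
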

\begin{proof}
Since $\zeta$ is an antisymmetric matrix, we know that $\zeta = U M U\Transpose$, where $U$ is unitary and
\begin{equation}
M=\bigoplus_{\alpha=1}^k \lambda_\alpha 
\begin{pmatrix}
0 & -1 \\ 1 & 0
\end{pmatrix}
\oplus \0_{N-2k},\quad \lambda_\alpha>0.
\end{equation}
As a consequence,
\begin{equation}
M\Star M=\bigoplus_{\alpha=1}^k \lambda_\alpha^2
\1_2 \oplus \0_{N-2k}
\textrm{ and } 
(\1-M\Star M)^{-1}=\bigoplus_{\alpha=1}^k (1-\lambda_\alpha^2)^{-1}\,
\1_2 \oplus \1_{N-2k}.
\end{equation}
It follows that
\bes
\braket{\zeta|F_{ab}|\zeta} &=& \left[ 2\zeta(\1-\zeta\Star\zeta)^{-1}\right]_{ab}
= \left[ 2 U M (\1-M\Star M)^{-1}U\Transpose \right]_{ab}
 =\sum_{\alpha=1}^{k}\sum_{c,d=1}^N \frac{2\lambda_\alpha}{1-\lambda_\alpha^2}
U_{ac}\paren{\delta_{c,2\alpha}\delta_{d,2\alpha-1}-\delta_{c,2\alpha-1}\delta_{d,2\alpha}}U_{bd}
\nn\\
&=& \sum_{\alpha=1}^{k} \frac{2\lambda_\alpha}{1-\lambda_\alpha^2} \paren{U_{a,2\alpha}U_{b,2\alpha-1}-U_{a,2\alpha-1}U_{b,2\alpha}}
\ees
and
\bes
\braket{\zeta|E_{ab}|\zeta} -\delta_{ab} &=& \left[ 2\zeta\Star\zeta(\1-\zeta\Star\zeta)^{-1}\right]_{ab}
= \left[ 2 \ov{U} M\Star M (\1-M\Star M)^{-1}U\Transpose \right]_{ab}\\
&=& \sum_{\alpha=1}^{k}\sum_{c,d=1}^N \frac{2\lambda_\alpha^2}{1-\lambda_\alpha^2}
\ov{U}_{ac}\paren{\delta_{c,2\alpha-1}\delta_{d,2\alpha-1}+\delta_{c,2\alpha}\delta_{d,2\alpha}}U_{bd}\\
&=& \sum_{\alpha=1}^{k} \frac{2\lambda_\alpha^2}{1-\lambda_\alpha^2} \paren{\ov{U}_{a,2\alpha-1}U_{b,2\alpha-1} + \ov{U}_{a,2\alpha}U_{b,2\alpha}}.
\ees
Choosing
\begin{equation}
\ket{z^\alpha_a}= \paren{\frac{2\lambda^2_\alpha}{1-\lambda_\alpha^2}}^{\half}
\begin{pmatrix}
U_{a,2\alpha-1} \\ U_{a,2\alpha}
\end{pmatrix}
\quad\Rightarrow\quad
\ketb{z^\alpha_a}= \paren{\frac{2\lambda^2_\alpha}{1-\lambda_\alpha^2}}^{\half}
\begin{pmatrix}
\ov{U}_{a,2\alpha} \\ -\ov{U}_{a,2\alpha-1}
\end{pmatrix}
\end{equation}
we find
\begin{equation}
\braket{\zeta|F_{ab}|\zeta}=\sum_{\alpha=1}^{k}\frac{1}{\lambda_\alpha}\braketlb{z^\alpha_a|z^\beta_b},\quad \braket{\zeta|E_{ab}|\zeta}=\delta_{ab} +\sum_{\alpha=1}^{n}  \braket{z^\alpha_a|z^\alpha_b}.
\end{equation}
Moreover, we have the closure constraints 
\bes
\sum_{a=1}^{N}\ket{z^\alpha_a}\bra{z^\beta_a} &=&
\paren{\frac{2\lambda^2_\alpha}{1-\lambda_\alpha^2}}^{\half}
\paren{\frac{2\lambda^2_\beta}{1-\lambda_\beta^2}}^{\half}
\sum_{a=1}^N
\begin{pmatrix}
U_{a,2\alpha-1}\conj{U}_{a,2\beta-1} & U_{a,2\alpha}\conj{U}_{a,2\beta-1}
\\ 
U_{a,2\alpha-1}\conj{U}_{a,2\beta} & U_{a,2\alpha}\conj{U}_{a,2\beta}
\end{pmatrix} \nn
\\
&=& \delta_{\alpha\beta} \frac{2\lambda^2_\alpha}{1-\lambda_\alpha^2} \1_2
= \delta_{\alpha\beta} \sum_{a=1}^N \half \braket{z^\alpha_a|z^\alpha_a}\1_2 \label{closure semi 1}
\ees
as expected.
\end{proof}
As consequence of this fact, in the limit
$
\lambda_\alpha\rightarrow 1,\quad \alpha=1,\dotsc,k
$
where the expected value of the total area
\begin{equation}
\braket{\mathcal{A}}=\sum_{\alpha=1}^{k}\frac{\lambda_\alpha^2}{1-\lambda_\alpha^2}\rightarrow \infty,
\end{equation}
we have
\begin{equation}
\braket{\zeta|F_{ab}|\zeta} \sim \sum_{\alpha=1}^{k} \braketlb{z^\alpha_a|z^\alpha_b}=\sum_{\alpha=1}^{k} f_{ab}^{\alpha},
\quad
\braket{\zeta|E_{ab}|\zeta} = \delta_{ab} + \sum_{\alpha=1}^{k}\braket{z^\alpha_a|z^\alpha_b}= \delta_{ab} +\sum_{\alpha=1}^{k} e_{ab}^{\alpha}.
\end{equation}
We can interpret the semi-classical limit as a classical geometry by introducing the canonical Poisson structure on \( \C^{2kN}\) which is, using the coordinates \( |z^\alpha_a\ra=\left(\begin{array}{c}x^\alpha_a\\y^\alpha_a\end{array}\right)\),
\begin{equation}\label{eq:poisson_ho}
\poi{x^\alpha_a,\ov{x}^\beta_b}=\poi{y^\alpha_a,\ov{y}^\beta_b}= -i \delta^{\alpha\beta}\delta_{ab}
\end{equation}
with all other brackets vanishing. With this Poisson structure, the functions 
\begin{equation}\label{coarsegrained}
e_{ab}:=\sum_{\alpha=1}^{k}\braket{z^\alpha_a|z^\alpha_b} =\sum_{\alpha=1}^{k} e_{ab}^{\alpha},\quad f_{ab}:=\sum_{\alpha=1}^{k}\braketlb{z^\alpha_a|z^\alpha_b}=\sum_{\alpha=1}^{k} f_{ab}^{\alpha}
\end{equation}
satisfy
\begin{subequations}
\begin{align}
\poi{e_{ab},e_{cd}}& = -i \paren{\delta_{cb}e_{ad}-\delta_{ad}e_{cb}}
,\quad 
\poi{e_{ab},f_{cd}} = -i \paren{\delta_{ad}f_{bc}-\delta_{ac}f_{bd} },\quad 
\poi{f_{ab},f_{cd}}=\poi{\conj{f}_{\!ab},\conj{f}_{\!cd}}=0,
\\
\poi{e_{ab},\conj{f}_{\!cd}} &= -i \paren{\delta_{bc}\conj{f}_{\!ad}-\delta_{bd}\conj{f}_{\!ac} }
,\quad 
\poi{f_{ab},\conj{f}_{\!cd}} = -i \paren{\delta_{db}e_{ca}+\delta_{ca}e_{db}-\delta_{cb}e_{da}-\delta_{da}e_{cb} }
,
\end{align}
\end{subequations}
which are the classical analogue of the \( \so^*(2N) \) commutation relations \eqref{eq:so*_comm}.

\subsubsection{Geometric interpretation of the semi-classical states }
We would like now to determine whether the spinorial formalism we have recovered actually allows the reconstruction of a polyhedron or of anything else.

The rank of the matrix parametrizing the coherent state actually determines the number of spinors we recover.  As we already recalled, the rank of  $\zeta$ is at most $N$, and since $\zeta$ is an antisymmetric matrix its rank has to be even. We noted it $2k$. In the semi-classical limit we have recovered  $2kN$ spinors. 

\smallskip

Let us consider the case $k=1$, that is  $\rank(\zeta)=2$ to start. In fact it is this case that was always considered until now in the literature \cite{Bonzom:2012bn,Freidel:2012ji,Freidel:2013fia,Bonzom:2015ova}.  In this case,  we recover exactly the standard spinorial parametrization of the convex polyhedron since we recover $2N$ spinors. Indeed there is no sum to consider in \eqref{coarsegrained} and  \eqref{closure semi 1} states that the closure constraint is satisfied. Hence we recover in this case,   a convex polyhedron in the semi-classical limit.  

\smallskip

The cases $k>1$ are more subtle. Indeed,   we  recover more spinors  than we started with. We can expect different interpretations of the resulting construction, bearing in mind that we have  $k$ closure constraints in \eqref{closure semi 1}.
 
\begin{itemize}
\item We  recover a polyhedron with $kN$ faces, such that the normal of the faces cancel by bundle of $N$ (think of the cube, where the six normals cancel by bundle of two).  Since we can define some total observables  as in \eqref{coarsegrained},  this $kN$ faces polyhedron could be coarse-grained as a polyhedron with $N$ faces.
\item We  recover a set of k polyhedra with $N$ faces, which could be coarse-grained as a polyhedron with $N$ faces thanks to \eqref{coarsegrained},\textit{ or not}. 
\item We do not really recover anything close to a finite set of polyhedra. 
\end{itemize} 
To assess what we really obtained, we need to check what are the symmetry transformations of the set of spinors we have reconstructed.

\begin{proposition}\label{prop:symmetries}
The unitary matrix \( U \) in the decomposition
\begin{equation*}
\zeta=UMU^\transpose
\end{equation*}
is defined up to a unitary transformation $W$
\begin{equation*}
U\rightarrow UW,\quad W\in \bigtimes_{i=1}^\ell \Sp(2\mu_i) \times \mathrm{U}(N-2k),
\end{equation*}
where \( \ell \) is the number of distinct \( \lambda_\alpha \) in the decomposition of \( \zeta \) and the \( \mu_i \) are the multiplicities of each distinct \( \lambda_\alpha \).  $\Sp(2\mu_i)$ is the \textbf{compact} symplectic group\footnote{The compact symplectic group $\Sp(2\mu_i)$ should not be confused with the real symplectic group $\Sp(2\mu_i,\R)$. $\Sp(2\mu_i)$ is the simply-connected, maximal compact real Lie subgroup of the complex symplectic group $\Sp(2\mu_i,\C)$. It also can be seen as $\Sp(2\mu_i) := \Sp(2\mu_i,\C)\cap \mathrm{U}(2\mu_i).$ }. When all the \( \lambda_\alpha \) are distinct this reduces to
\begin{equation*}
V\in \SU(2)^{\times k} \times \mathrm{U}(N-2k).
\end{equation*}
\end{proposition}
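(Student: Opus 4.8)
The plan is to recast the statement as the computation of a stabiliser subgroup. First I would observe that, since $U$ is unitary and hence invertible, two decompositions $\zeta=U_1 M U_1\Transpose=U_2 M U_2\Transpose$ with the same canonical $M$ are related by $W:=U_1\Star U_2\in\mathrm{U}(N)$, and that eliminating $U_1$ leaves exactly the condition $WMW\Transpose=M$. Conversely, any unitary $W$ with $WMW\Transpose=M$ turns $U_1$ into an equally admissible $U_1W$. Thus the entire content of the proposition is the identification of the stabiliser $S:=\{\,W\in\mathrm{U}(N)\setst WMW\Transpose=M\,\}$, and I would devote the rest of the argument to computing $S$.

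The key structural step is to show that every $W\in S$ commutes with $M\Star M$. I would obtain this from the intrinsic character of $\zeta\Star\zeta$: the identity $\zeta\Star\zeta=\conj U\,(M\Star M)\,U\Transpose$ already used in the proof of Proposition \ref{prop:semiclassical} shows that for two decompositions the unitary $\conj W$ conjugates $M\Star M$ to itself, hence $[\conj W,M\Star M]=0$; since $M\Star M=\bigoplus_\alpha\lambda_\alpha^2\,\1_2\oplus\0_{N-2k}$ is real and diagonal, taking complex conjugates gives $[W,M\Star M]=0$ as well. (Equivalently, purely algebraically: multiplying $WMW\Transpose=M$ on the right by $\conj W$ and using the unitarity relation $W\Transpose\conj W=\1$ gives $WM=M\conj W$; conjugating this and inserting $\conj M=-M\Star$ yields $\conj W\,M\Star=M\Star\,W$, and multiplying the first relation on the left by $M\Star$ while substituting the second produces $[\conj W,M\Star M]=0$.) Commuting with the Hermitian matrix $M\Star M$ forces $W$ to be block diagonal with respect to its eigenspaces, which are precisely the spans of the $2\times2$ blocks sharing a common value of $\lambda_\alpha$, together with $\ker M$.

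Having reduced $W$ to the block-diagonal form $W=\bigl(\bigoplus_{i=1}^{\ell}W_i\bigr)\oplus W_0$, I would feed this back into $WMW\Transpose=M$, which now decouples block by block because $M$ respects the same decomposition. On the eigenspace attached to a distinct eigenvalue $\nu_i=\lambda^2$ (dimension $2\mu_i$) the matrix $M$ restricts to $\sqrt{\nu_i}\,J_i$ with $J_i=\bigoplus^{\mu_i}\sigma$ a nondegenerate antisymmetric (hence symplectic) form, so the condition becomes $W_i J_i W_i\Transpose=J_i$; combined with $W_i\in\mathrm{U}(2\mu_i)$ this is exactly $W_i\in\Sp(2\mu_i,\C)\cap\mathrm{U}(2\mu_i)=\Sp(2\mu_i)$. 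On $\ker M$ (dimension $N-2k$) the restriction of $M$ vanishes, the condition is vacuous, and $W_0$ is an arbitrary element of $\mathrm{U}(N-2k)$. Assembling the blocks gives $S=\bigtimes_{i=1}^{\ell}\Sp(2\mu_i)\times\mathrm{U}(N-2k)$, and specialising to all-distinct $\lambda_\alpha$ (each $\mu_i=1$, $\ell=k$) uses $\Sp(2)=\Sp(2,\C)\cap\mathrm{U}(2)=\SU(2)$ to recover $\SU(2)^{\times k}\times\mathrm{U}(N-2k)$.

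I expect the main obstacle to be the careful conjugate/transpose bookkeeping in the commutation step, since $WMW\Transpose=M$ couples the \emph{transpose} (not the adjoint) of $W$ with the unitarity constraint; the cleanest route is the intrinsic $\zeta\Star\zeta$ argument above, which sidesteps the index gymnastics. A secondary point demanding care is the identification of the unitary stabiliser of the form $J_i$ with the compact symplectic group: this rests on $\Sp(2\mu_i,\C)$ being closed under transposition, which follows from $J_i\Transpose=-J_i$ and $J_i^{2}=-\1$, so that $W_i J_i W_i\Transpose=J_i$ and $W_i\Transpose J_i W_i=J_i$ define the same group.
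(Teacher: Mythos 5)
Your proof is correct, and it reaches the result by a genuinely different route from the paper's. Both arguments begin with the same reduction: the ambiguity in $U$ is exactly the stabiliser $S=\set{W\in\U(N)\setst WMW\Transpose=M}$. The paper then computes $S$ by brute force: it writes $W$ in $2\times2$ block form $(A_{\alpha\beta},B_\alpha,C_\alpha,D)$, combines $WMW\Transpose=M$ with the unitarity relations to force $B_\alpha=C_\alpha=0$ and $D\in\U(N-2k)$, derives the blockwise relation $A_{\alpha\beta}\sigma_\beta=\sigma_\alpha\conj{A}_{\alpha\beta}$ (which is your $WM=M\conj{W}$ written in coordinates), and then uses the $2\times2$ identity $X\sigma X\Transpose=\det(X)\sigma$ together with determinant bookkeeping to conclude $A_{\alpha\beta}=0$ unless $\lambda_\alpha=\lambda_\beta$. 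Your commutant argument replaces all of that in one stroke: from $WM=M\conj{W}$ and $\conj{W}M\Star=M\Star W$ you obtain $[\conj{W},M\Star M]=0$, and since $M\Star M$ is Hermitian with eigenspaces exactly the multiplicity blocks and $\ker M$, the spectral theorem forces $W$ to be block diagonal --- this single step subsumes the paper's $B_\alpha=0$, $C_\alpha=0$, and the vanishing of the off-diagonal $A_{\alpha\beta}$. From there the two proofs merge: the decoupled condition $W_iJ_iW_i\Transpose=J_i$ plus unitarity of each block gives $W_i\in\Sp(2\mu_i,\C)\cap\U(2\mu_i)=\Sp(2\mu_i)$, with $\Sp(2)=\SU(2)$ in the nondegenerate case, and the kernel block is unconstrained. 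What your route buys is brevity and conceptual clarity: the one real mechanism is that $W$ must commute with the normal form $\conj{U}\Star\zeta\Star\zeta\,\conj{U}$ of $\zeta\Star\zeta$, and all $2\times2$ index gymnastics disappear; you also make explicit a point the paper passes over silently, namely that $\set{X\setst XJX\Transpose=J}$ and $\set{X\setst X\Transpose JX=J}$ coincide (via $J\Transpose=-J$, $J^2=-\1$), which is needed to identify the block stabiliser with $\Sp(2\mu_i,\C)$. What the paper's computation buys is explicit intermediate information (e.g.\ that each $A_{\alpha\beta}$ satisfies $A\Star_{\alpha\beta}A_{\alpha\beta}\propto\1_2$, hence is invertible or zero), but none of it is needed for the statement itself.
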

Since the proof is a bit lengthy, we postpone it to  Appendix \ref{proof of prop sym}. From the definition of the spinors, the relevant symmetries we need to consider are given by $\Sp(2\mu_i)$. The left-over, given by $\U(N-2k)$, does not affect the definition of the spinors.  The symmetries we have identified, together with the closure constraints \eqref{closure semi 1}, allow us to interpret the nature of the geometric structures we recover in the semi-classical limit. 
\medskip

First note that when there is only one $\lambda$, that is $\rank(\zeta)=2$, we trivially recover only one copy of $\SU(2)$ which is the global symmetry of the polyhedron. Indeed, it is defined only up to a global rotation. Furthermore the observables $e_{ab}, f_{ab}, \tilde f_{ab}$ are invariant under such transformations. This is consistent with the fact that we recovered a single polyhedron.

\medskip 
Second, when all the $\lambda_\alpha$'s are distinct, we have $k$ global $\SU(2)$ symmetries, together with $k$ closure constraints. This indicates that we have in general not \textit{one} polyhedron but \textit{a set of} $k$ polyhedra with $N$ faces. The observables $e_{ab}^{\alpha}, f_{ab}^{\alpha}, \tilde f_{ab}^{\alpha}$ are invariant under each of these global $\SU(2)$ transformations.  For a given $\alpha$, these observables correspond to the observables for a given polyhedron $\alpha$. The new set of observables is hence generated by $k$ copies of  $\so^*(2N)$. The geometry of each of these polyhedra can be reconstructed from the knowledge of the observables as usual.

\smallskip

One might then wonder whether the definition of the diagonal $\so^*(2N)$ from \eqref{coarsegrained} allows to coarse-grain the set of $k$  polyhedra with $N$ faces to a single new $N$ faces polyhedron. As the following proposition shows, the answer is negative: the diagonal    $\so^*(2N)$ observables do not allow to reconstruct a polyhedron that is closed. 

\begin{proposition}\label{prop:no-go}
Consider $k$ convex $N$ faces polyhedra indexed by $\alpha$ and their associated algebra $\so^*(2N)$ of observables spanned by  $e_{ab}^{\alpha}, f_{ab}^{\alpha}, \tilde f_{ab}^{\alpha}$. The diagonal subalgebra $\so^*(2N)$  spanned by $e_{ab}=\sum_{\alpha=1}^k e_{ab}^{\alpha}, f_{ab}=\sum_{\alpha=1}^kf_{ab}^{\alpha}, \tilde f_{ab}=\sum_{\alpha=1}^k\tilde f_{ab}^{\alpha}$ is not the algebra of observables of a single \textit{closed} polyhedron.
\end{proposition}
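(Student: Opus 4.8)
The plan is to argue by contradiction, converting the demand that the diagonal observables come from a \emph{closed} polyhedron into a positivity statement about a squared norm that will fail as soon as $k\ge 2$. Suppose the diagonal $\so^*(2N)$ were the observable algebra of a single closed polyhedron with normals $\vec V_a\in\R^3$ obeying $\sum_a\vec V_a=\vec 0$. Matching observables forces its quadratic generator to be the diagonal $e_{ab}=\sum_{\alpha}\braket{z^\alpha_a|z^\alpha_b}$, and the normals must then be recovered through the single-polyhedron relation $\vec V_a\cdot\vec V_b=\demi e_{ab}e_{ba}-\tfrac14 e_{aa}e_{bb}$ of Section~\ref{sec:polyhedra}. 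Closure would then impose $|\sum_a\vec V_a|^2=\sum_{a,b}\vec V_a\cdot\vec V_b=0$.

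First I would rewrite this ``closure defect'' purely in terms of $e$. Because $e$ is Hermitian, $e_{ba}=\overline{e_{ab}}$, one has $\sum_{a,b}e_{ab}e_{ba}=\tr(e^2)$ and $\sum_{a,b}e_{aa}e_{bb}=(\tr e)^2$, so that $|\sum_a\vec V_a|^2=\demi\tr(e^2)-\tfrac14(\tr e)^2$. Next I would compute the spectrum of the diagonal $e$ from the individual closure constraints \eqref{closure semi 1}. Viewing $e$ as the Gram matrix $\mathcal W\Star\mathcal W$ of the $\C^{2k}$-valued rows $(\bra{z^1_a},\dotsc,\bra{z^k_a})$, its non-zero spectrum equals that of $\mathcal W\mathcal W\Star=\sum_a\ket{W_a}\bra{W_a}$, whose $(\alpha,\beta)$ block is $\sum_a\ket{z^\alpha_a}\bra{z^\beta_a}=\delta_{\alpha\beta}C_\alpha\1_2$ with $C_\alpha=\tfrac{2\lambda_\alpha^2}{1-\lambda_\alpha^2}>0$. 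Hence $\tr e=2\sum_\alpha C_\alpha$ and $\tr(e^2)=2\sum_\alpha C_\alpha^2$, which yields $|\sum_a\vec V_a|^2=\sum_\alpha C_\alpha^2-(\sum_\alpha C_\alpha)^2=-2\sum_{\alpha<\beta}C_\alpha C_\beta$. For $k\ge 2$ at least two $C_\alpha$ are strictly positive, so this quantity is strictly negative, contradicting not only $\sum_a\vec V_a=\vec 0$ but even the existence of real normals. This is the no-go.

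The main subtlety I expect to address carefully is the justification that a putative single closed polyhedron reproducing the diagonal algebra is genuinely \emph{forced} to carry exactly this $e$ (and therefore this spectrum), i.e.\ that matching the $\so^*(2N)$ observable values pins down the Gram matrix and that the reconstruction $\vec V_a\cdot\vec V_b=\demi e_{ab}e_{ba}-\tfrac14 e_{aa}e_{bb}$ is the correct identity to impose; the rest is the short linear-algebra computation above.

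A cleaner complementary route, which avoids the normals entirely, is a \textbf{rank obstruction}: for any single set of spinors the antisymmetric matrix $\tilde f_{ab}=[z_a|z_b\rangle$ (equivalently the Hermitian $e_{ab}$) has rank at most $2$, whereas the diagonal $\tilde f_{ab}=\sum_\alpha[z^\alpha_a|z^\alpha_b\rangle$ has rank $2k$. Indeed, by Proposition~\ref{prop:semiclassical} the diagonal $\tilde f$ is assembled from $k$ mutually orthogonal column-pairs of the unitary $U$ appearing in $\zeta=UMU\Transpose$, so its rank equals $\rank(\zeta)=2k$; orthonormality of the columns of $U$ prevents any collapse. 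Either argument shows the diagonal observables cannot originate from a single (closed) polyhedron once $k\ge 2$. I would present the closure-defect computation as the primary proof, since it exhibits the failure of closure directly, and record the rank argument as a quick alternative.
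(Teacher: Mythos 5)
Your primary argument is correct and is essentially the paper's own proof: both start from the single-polyhedron relation $\vec V_a\cdot\vec V_b=\demi e_{ab}e_{ba}-\tfrac14 e_{aa}e_{bb}$ applied to the coarse-grained $e_{ab}$, sum over $a,b$, and use the individual closure constraints $\sum_a\ket{z^\alpha_a}\bra{z^\beta_a}=\delta_{\alpha\beta}\Lambda_\alpha\1_2$ to evaluate the result as $\sum_\alpha\Lambda_\alpha^2-\bigl(\sum_\alpha\Lambda_\alpha\bigr)^2=-\sum_{\alpha\neq\beta}\Lambda_\alpha\Lambda_\beta<0$, which is exactly your $-2\sum_{\alpha<\beta}C_\alpha C_\beta$ with $C_\alpha=\Lambda_\alpha$; your packaging of the computation through $\tr(e^2)$ and the spectrum of $\mathcal{W}\mathcal{W}\Star$ is a cosmetic reorganization of the paper's index manipulation, not a different idea. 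What is genuinely different is your complementary rank obstruction, which the paper does not contain: a single set of $N$ spinors always produces $e$ (equivalently $\tilde f$) of rank at most $2$, whereas the diagonal observables have rank $2k$ because the relevant columns of the unitary $U$ in $\zeta=UMU\Transpose$ are orthonormal. That argument is valid and buys something slightly stronger — it shows no single spinor configuration whatsoever, closed or not, can reproduce the diagonal observables, and it bypasses the dot-product formula entirely — while the paper's (and your primary) computation buys the explicit geometric statement that the would-be coarse-grained normals fail to close, which is the physical point the paper wants to emphasize. Your worry about whether a putative polyhedron is ``forced'' to carry exactly this $e$ is an interpretational issue the paper resolves the same way you do, by taking the reconstruction formula as the definition of the coarse-grained geometry, so it is not a gap relative to the paper's own standard of rigor.
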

\begin{proof}
Since we deal with polyhedra, we have the closure constraints 
\bes
\sum_{a=1}^{N}\ket{z^\alpha_a}\bra{z^\beta_a} &=&
 \delta_{\alpha\beta}  \sum_{a=1}^N \half \braket{z^\alpha_a|z^\alpha_a}\1_2  =  \Lambda_\alpha \1_2 \label{closure semi 2}.
\ees
where $\Lambda_\alpha$ is half of the total area of the polyhedron $\alpha$, 
\be
\sum_{a=1}^N  \braket{z^\alpha_a|z^\alpha_a} = \sum_{a=1}^N  e_{aa}^{\alpha} = 2\Lambda_\alpha.
\ee
Let us now consider the normals $\vec V_a$ of the coarse-grained polyhedron. 
By construction, the relative angles between the normals are given by
\be
\vec V_a\cdot \vec V_b= \demi e_{ab}e_{ba}-\frac14 e_{aa}e_{bb}.
\ee
If the closure constraint $\sum_a \vec V_a=0$ is satisfied then we also have that $\sum_{ab}\vec V_a\cdot \vec V_b=0$. Hence we are supposed to show that 
\be
\sum_{ab}\left(\demi e_{ab}e_{ba}-\frac14 e_{aa}e_{bb}\right)=0.
\ee
Using the definition of the coarse-grained $e_{ab}$ in terms of the $e_{ab}^{\alpha}$ we have
\bes
\sum_{ab}e_{ab}e_{ba}&=& \sum_{ab\alpha\beta}\braket{z^\alpha_a|z^\alpha_b}\braket{z^\beta_b|z^\beta_a}=\sum_{a\alpha\beta} \delta_{\alpha\beta}\Lambda_\alpha\braket{z^\alpha_a|z^\beta_a}=2\sum_{\alpha}\Lambda_\alpha^2,\\
\sum_{ab}e_{aa}e_{bb}&=&\left(\sum_ae_{aa}\right)^2= \left(\sum_{a\alpha}e^{\alpha}_{aa}\right)^2=  \left(2\sum_{\alpha}\Lambda_\alpha\right)^2 = 4 \sum_{\alpha\beta}\Lambda_\alpha\Lambda_\beta.
\ees
We deduce then that 
\be\label{bip}
\sum_{ab}\left(\demi e_{ab}e_{ba}-\frac14 e_{aa}e_{bb}\right)=\sum_{\alpha}\Lambda_\alpha^2 - \sum_{\alpha\beta}\Lambda_\alpha\Lambda_\beta = -\sum_{\alpha\neq\beta} \Lambda_\alpha\Lambda_\beta <0,
\ee
since $\Lambda_\alpha>0$. Hence the closure constraint is not satisfied and we do not have a coarse-grained polyhedron. 
\end{proof}
Hence when all the $\lambda_\alpha$'s are distinct, we have a collection of $k$ polyhedra with $N$ faces, which cannot be seen as a unique polyhedron. One could try use a boost to make the spinors close as explained extensively in \cite{Livine:2013tsa}. This however would not close the polyhedron: an overall \( X\in\SL(2,\C) \) transformation of the spinors would preserve the matrix \( \zeta \), but it would only have the effect of transforming the areas as
\begin{equation}
\Lambda_\alpha \rightarrow \Lambda'_\alpha = \sum_{a=1}^n \braket{z^\alpha_a|\ov{X}X^\transpose|z^\alpha_a}>0.
\end{equation}
Repeating the steps from the proof of  Proposition \ref{prop:no-go}, in particular \eqref{bip},  shows that the transformed polyhedron does not close as well.

\medskip

Finally, when the $\lambda_\alpha$ appear with multiplicities, we do not have a discrete family of polyhedra. The compact symplectic group symmetries do not leave invariant the observables $e_{ab}^{\alpha}, f_{ab}^{\alpha}, \tilde f_{ab}^{\alpha}$, though the total observables $e_{ab}, f_{ab}, \tilde f_{ab}$ are invariant by construction. Given a fixed value of the $|z^\alpha_a\ra$, we can reconstruct a polyhedron $\alpha$ with $N$ faces, which will have a global symmetry given by $\SU(2)$. Hence we can get in this way $k$ polyhedra with $N$ faces. However performing a transformation in  $\bigtimes_{i=1}^\ell \Sp(2\mu_i) / \SU(2)^{\times k}$ will give different values for the $|z^\alpha_a\ra$ (essentially mixing spinors associated to different polyhedra) and the new  polyhedra will be totally different than the previous ones when the multiplicity $\mu_i$ is higher than 1. Hence the semi-classical limit in this case can be seen as a set of  families of polyhedra given by the coset 
\be\bigtimes_{i=1}^\ell \Sp(2\mu_i) / \SU(2)^{\times k}.\ee
Note that some families can be only with a unique element, whereas some others can have a \textit{continuum} of polyhedra.  For example, let us consider an intertwiner with $N$ legs, such that  $N>7$, with $\zeta$ of rank 8 (hence $k=4$) and $\lambda_1=\lambda_2$,  so $\mu=2$ we have (recalling that $\Sp(2)=\SU(2)$)
$$
\left(\Sp(2\mu)\times\Sp(2)\times \Sp(2) \right) / \SU(2)^{\times 4}= \left(\Sp(4)\times\Sp(2)\times \Sp(2)\right)  / \SU(2)^{\times 4} = \Sp(4)  / \SU(2)^{\times 2}.
$$
Hence  we have two  $N$ faces polyhedra, with total area specified by $\lambda_3$ and  $\lambda_4$, and a continuous family of 2 polyhedra, with $N$ faces,  with each polyhedron a total area specified by $\lambda_1$.  

\smallskip

As a more explicit example let us consider now a 4 legs intertwiner, with $\zeta$ of rank 4 (hence $k=2$) and $\lambda_1=\lambda_2=\sqrt\demi$. We take explicitly for $\zeta$  
\be
\zeta=\demi U \begin{pmatrix}
\sigma & 0 \\ 0 & \sigma
\end{pmatrix}U^\transpose \textrm{ with } U=\demi \left(\begin{array}{cccc}
1&1& \sqrt{\demi}(-1-i)&\sqrt{\demi}(-1+i)\\
1&-1&\sqrt{\demi}(1-i)&\sqrt{\demi}(1+i)\\
i&1&0&\sqrt{2}\\
-i&1&\sqrt2&0
\end{array}\right)
\ee
From the unitary $U$, we can identify the spinors
 \bes
\ket{z_1^1}= \left(\begin{array}{c}\demi\\\demi \end{array}\right), \, \ket{z_2^1}= \left(\begin{array}{c}\demi\\-\demi \end{array}\right), \,  \ket{z_3^1}= \left(\begin{array}{c}\frac i2\\\frac i2 \end{array}\right), \,  \ket{z_4^1}= \left(\begin{array}{c}-\frac i2\\\frac i2 \end{array}\right),\\
\ket{z_1^2}= \frac{\sqrt{2}}{4}\left(\begin{array}{c}-1-i\\-1+i \end{array}\right), \, \ket{z_2^2}= \frac{\sqrt{2}}{4}\left(\begin{array}{c}1-i\\1+i \end{array}\right), \,  \ket{z_3^2}= \left(\begin{array}{c}0\\\sqrt{\demi} \end{array}\right), \,  \ket{z_4^2}= \left(\begin{array}{c}0\\\sqrt\demi \end{array}\right),
\ees
which in turn generate the normals that tell us about the two polyhedra geometries.
 \bes
\vec V^1_1= \left(\begin{array}{c}\frac14\\0\\0 \end{array}\right), \, \vec V^1_2= \left(\begin{array}{c}-\frac14\\0\\0 \end{array}\right), \,  \vec V^1_3= \left(\begin{array}{c}0\\-\frac14\\0 \end{array}\right), \, \vec V^1_4= \left(\begin{array}{c}0\\\frac14\\0 \end{array}\right),\\
\vec V^2_1= \left(\begin{array}{c}0\\-\frac14\\0 \end{array}\right), \, \vec V^2_2= \left(\begin{array}{c}0\\\frac14\\0 \end{array}\right), \,  \vec V^2_3= \left(\begin{array}{c}0\\0\\-\frac14 \end{array}\right), \,  \vec V^2_4= \left(\begin{array}{c}0\\0\\-\frac14\end{array}\right),
\ees
We note that the the two tetrahedra are actually both degenerated. Let us now perform a unitary transformation given by $U\dr UW$, with $W\in \Sp(4)$, which leaves $M$ invariant as it is easy to check.
\be
W=\sqrt\demi\left(\begin{array}{cc}
\one_2&\one_2\\
\one_2&-\one_2
\end{array}\right), \quad \dr \quad  
\ket{\tilde z^\pm_a}= \sqrt\demi\left( \ket{z^1_a}\pm   \ket{z^2_a}\right) 
\ee
Skipping the explicit expression of the spinors, we get the normals given by
 \bes
\vec V_1^\pm=\frac18 \left(\begin{array}{c}1\mp\sqrt2\\-1\pm\sqrt2\\0 \end{array}\right), \, \vec V_2^\pm= \frac18\left(\begin{array}{c}-1\\1\\\pm\sqrt2 \end{array}\right), \,  \vec V^\pm_3= \frac18\left(\begin{array}{c}0\\-1\mp\sqrt2\\-1\mp\sqrt2 \end{array}\right), \, \vec V^\pm_4= \frac18\left(\begin{array}{c}\pm\sqrt2\\1\\1 \end{array}\right).
\ees
We note now that the two polyhedra, indexed by $\pm$ are non-degenerated. The area of the faces changed, as one can easily check by evaluating the norm of the normals, but not   their respective total area. 

\smallskip

So to summarize, when the $\lambda_\alpha$ have some multiplicity, we cannot associate a discrete set of polyhedra to a single coherent state, instead we  have to consider a continuum of polyhedra. In  a sense the standard picture of the semi-classical limit breaks down.  

\smallskip

It is quite interesting that the degenerate structure appears as soon as the semi-classical polyhedra have the same \textit{total} area specified by the $\lambda_\alpha$. The fact that they might look very different, with individual faces of different values for the area does not affect the degeneracy.

\section*{Discussion}

We have identified  the algebra of observables for a $\SU(2)$ intertwiner to be $\so^*(2N)$ and constructed the associated coherent states. We studied the semi-classical limit of these coherent states which happened to be more subtle than the previously constructed coherent  states of a $\SU(2)$ intertwiner \cite{Livine:2011gp}. According to the nature of the  matrix parametrizing the coherent state, the semi-classical limit can give rise to families of convex polyhedra with $N$ faces that can be discrete (ie with a single polyhedron) or continuous (ie with an infinite number of polyhedra). The nature of the family, being discrete of continuous is characterized by the value of the total area of the polyhedron. If at least two polyhedra have the same total area, we will get a continuum of polyhedra with the same total area. 

\medskip

The $\SO^*(2N)$ states we have introduced have already been discovered in the literature \cite{Bonzom:2012bn,Freidel:2012ji,Freidel:2013fia,Bonzom:2015ova, Bianchi:2016hmk}, motivated by different reasons. However these states were always defined in terms of a matrix of rank 2, which is in the semi-classical limit the best scenario, since we only get one polyhedron in this case.   It would be interesting to see  
how the results of \cite{Bonzom:2012bn,Freidel:2012ji,Freidel:2013fia,Bonzom:2015ova,Bianchi:2016hmk} are affected when dealing with a more general coherent state. 

\medskip

There are some obvious generalizations to consider. 
 \smallskip

First, the algebra of observables has been deformed to the quantum group case, to deal with $\cU_q(\su(2))$ intertwiners \cite{Dupuis:2013lka}. Hence the deformed algebra $\cU_q(\so^*(2N))$ has been identified (not the co-algebra sector though). It would be interesting to see how the coherent states are generalized in this case. When $q$ is real, since the representation theory of $\cU_q(\su(2))$ is very similar in a sense to the one of $\cU(\su(2))$, we might not expect some great differences, either to define the coherent states or in term of the semi-classical limit. The case $q$ root of unity might be more interesting since the representation theory then changes drastically. This could affect somehow the semi-classical limit. This is to be explored.

\smallskip

Another interesting generalization would be the Lorentzian case. The spinorial formalism has been generalized to deal with $\SU(1,1)$ intertwiners \cite{Sellaroli:2014ega,Girelli:2015ija}. This formalism is more subtle than the Euclidian case, since when dealing with continuous representations, the "observables" acting on an intetwiner defined in terms of unitary irreps might not give an intertwiner defined in terms of unitary irreps. Nevertheless the algebra of observables can be defined. It would be interesting to see whether we can do some similar calculations as done here. We leave this for later investigations.

\section*{Acknowledgements}
 We would like to thank Etera Livine for some interesting discussions and comments on the paper as well as for pointing us out some  references where the $\SO^*(2N)$ coherent states  already appeared.  

\appendix

\section{Perelomov coherent states}\label{app:Perelomov}
Recall that generalised coherent states for a unitary irreducible module \( V \) of a generic Lie Group \( G \) are defined as
\begin{equation}
\ket{g}:=g\ket{\psi_0},\quad g\in G,
\end{equation}
where \( \ket{\psi_0}\in V \) is a fixed state of norm \( 1 \). Note that, at this stage, there is no guarantee that two coherent states labelled by different group elements indeed describe physically different states (i.e., they are not the same vector up to a phase\footnote{Note that since the representation is unitary and \( \ket{\psi_0} \) has norm \( 1 \), so does every \( \ket{g} \).}). In fact, let  \( H\subseteq G \) be the maximal subgroup that leaves \( \ket{\psi_0} \) invariant up to a phase, that is
\begin{equation}
h\ket{\psi_0}=e^{i\theta(h)}\ket{\psi_0},\quad \forall h\in H,
\end{equation}
which will be called the \textit{isotropy subgroup} for \( \ket{\psi_0} \): it is obvious that if \( g_2\in g_1 H \) then
\begin{equation}
\ket{g_2}=e^{i\theta} \ket{g_1},
\end{equation}
i.e., the two states are equivalent. The inequivalent coherent states are labelled by elements of the \textit{left coset space}
\begin{equation}
G/H:=\set{gH\setst g\in G},
\end{equation}
and are given by
\begin{equation}
\ket{x}:=\ket{g_x}=g_x\ket{\psi_0}, \quad \forall x\in G/H,
\end{equation}
where \( g_x\in x \) is a representative of the equivalence class \( x \).

\section{Proof of Proposition \ref*{prop:action of U(n) on highest weight}}\label{proof-prop1}
First recall that the exponential map of \( \GL(N,\C)\) is surjective, so that any element of \( \GL(N,\C) \) can be written in the form \( e^{E_\alpha}\equiv e^\alpha \), for some \( \alpha\in M_n(\C) \). Using the fact that
\begin{equation}
e^{-E_\alpha}\ket{0}=e^{-\tr(\alpha)}\ket{0}=\frac{1}{\det(e^\alpha)}\ket{0}
\end{equation}
and
\begin{equation}
\widetilde F_{12} \equiv \half \widetilde F_{\xi_0},
\end{equation}
we can write
\begin{equation}
e^{E_\alpha}\ket{\psi_J}= \frac{\det(e^\alpha)}{\sqrt{J!(J+1)!}}e^{E_\alpha}(\half \widetilde F_{\xi_0})^J e^{-E_\alpha}\ket{0} \equiv \frac{\det(e^\alpha)}{\sqrt{J!(J+1)!}}\big(\half e^{E_\alpha} \widetilde F_{\xi_0} e^{-E_\alpha}\big)^J\ket{0}.
\end{equation}
Using the well known formula
\begin{equation}
e^A B e^{-A} = B + [A,B] + \frac{1}{2!}[A,[A,B]] + \frac{1}{3!}[A,[A,[A,B]]] + \dotsb
\end{equation}
and the commutation relation
\begin{equation}
[E_\alpha,\widetilde F_z] = \widetilde F_{\alpha z + z \alpha^\transpose}
\end{equation}
we eventually find that
\begin{equation}
e^{E_\alpha}\ket{\psi_J} = \frac{\det(e^\alpha)}{\sqrt{J!(J+1)!}}\paren{\half \widetilde{F}_{e^\alpha\xi_0 e^{\alpha^\transpose}}}^J \ket{0}
\end{equation}
as expected.

\section{Proof of the Propositions of Section \ref*{sec:expectation} }\label{app:mat elements}
The easiest way to compute the matrix elements of the \( \so\Star(2N) \) generators \( E_{ab} \), \( F_{ab} \) and \( \widetilde{F}_{ab} \) in the coherent state basis is to make use of the \( 2N \) harmonic oscillator operators \( A_a \), \( B_a \); in particular, we are going to project the states \( \ket{\zeta} \) on the well-known harmonic oscillator coherent states.
Recall that \cite[chap.~3]{perelomov_book} coherent states for the representation of the Heisenberg group \( \mathrm{H}_{2N} \) with generators  satisfying
\begin{equation}
\bracks{A_a,A\Dagger_b}=\bracks{B_a,B\Dagger_b}=\delta_{ab}\1,
\end{equation}
acting on the vector space spanned by the vectors\footnote{Here we use the \textit{multi-index notation}, that is we have  \( \paren{A\Dagger}^{\mu}:= \paren{A\Dagger_1}^{\mu_1}\dotsb\paren{A\Dagger_n}^{\mu_N} \) and \( \mu!:=\mu_1!\dotsc\mu_N! \), with \( \mu\in \N_0^N \).}
\begin{equation}
\ket{\mu,\nu}=\frac{\paren{A\Dagger}^{\mu}}{\sqrt{\mu!}}\frac{\paren{B\Dagger}^{\nu}}{\sqrt{\nu!}}\ket{0},\quad \mu,\nu\in \N_0,
\end{equation}
where \( \ket{0}\equiv \ket{0,0} \) is the harmonic oscillator vacuum
\begin{equation}
A_a\ket{0}=B_a\ket{0}=0,
\end{equation}
are the vectors
\begin{equation}
\ket{\alpha,\beta}:=e^{-\half \paren{\alpha\Star\alpha +\beta\Star\beta} }\sum_{\mu,\nu\in\N_0^N}\frac{\alpha^\mu}{\sqrt{\mu!}}\frac{\beta^\nu}{\sqrt{\nu!}}\ket{\mu,\nu},\quad \alpha,\beta\in\C^N
\end{equation}
satisfying
\begin{equation}
A_a\ket{\alpha,\beta}=\alpha_a\ket{\alpha,\beta},\quad B_a\ket{\alpha,\beta}=\beta_a\ket{\alpha,\beta}.
\end{equation}
The resolution of the identity in terms of these coherent states is given by
\begin{equation}
\int_{\C^{2N}}\rd\mu(\alpha,\beta)\ket{\alpha,\beta}\bra{\alpha,\beta}=\1,
\end{equation}
where the measure of integration is\footnote{Here \( \Re \) and \( \Im \) denote respectively the real and imaginary part of a complex number.}
\begin{equation}
\rd\mu(\alpha,\beta)= \frac{1}{\pi^{2n}}\rd^n\Re(\alpha)\,\rd^n\,\Im(\alpha) \, \rd^n\Re(\beta)\,\rd^n\,\Im(\beta).
\end{equation}

We can now use the fact that
\begin{equation}
\begin{split}
\braket{\alpha,\beta|\zeta}&=\mathcal{N}(\zeta)\,\braket{\alpha,\beta|\exp\paren{\half \widetilde{F}_\zeta}|0}
=\mathcal{N}(\zeta)\,\braket{\alpha,\beta|0}\, e^{\beta\Star\zeta\ov{\alpha}}
=\mathcal{N}(\zeta)\, e^{\beta\Star\zeta\ov{\alpha}-\half\paren{\alpha\Star\alpha+\beta\Star\beta}}
\end{split}
\end{equation}
to write
\bes\label{eq:gaussian_int1}
\braket{\omega|\zeta}&=&\int_{\C^{2n}} \rd\mu(\alpha,\beta)\braket{\omega|\alpha,\beta}\braket{\alpha,\beta|\zeta}
= \mathcal{N}(\omega)\mathcal{N}(\zeta)
\int_{\C^{2n}} \rd\mu(\alpha,\beta)e^{\beta\Star\zeta\ov\alpha+\beta\Transpose \ov\omega\alpha - \alpha\Star\alpha-\beta\Star\beta}\nn\\
&=&\mathcal{N}(\omega)\mathcal{N}(\zeta)
\int_{\C^{2n}}\rd\mu(\alpha,\beta)\,\exp \bracks{
-\half
\left(\begin{array}{cccc} 
\alpha\Transpose & \beta\Transpose & \ov \alpha\Transpose & \ov \beta\Transpose
\end{array}\right)
\left(\begin{array}{cc|cc}
0 & \ov\omega & \1 & 0\\
-\ov\omega & 0 & 0 & \1\\\hline
\1 & 0 & 0 & \zeta \\
0 & \1 & -\zeta & 0
\end{array}\right)
\left(\begin{array}{c}
\alpha \\ \beta \\ \ov \alpha \\ \ov\beta
\end{array}\right)}
,
\ees
which is a Gaussian integral\footnote{In fact, we could also have calculated \( \braket{\omega|\zeta} \) by evaluating this integral.}. Although we already know the value of \( \braket{\omega|\zeta} \), we can use this expression to calculate the matrix elements of any operator built as a polynomial in the harmonic oscillator operators thanks to the following well-known proposition.
\begin{proposition}\label{prop:gaussian_integral}
Let
\begin{equation*}
S=\int e^{-\half x\Transpose A x} \,\rd^n x,
\end{equation*}
with \(A\in M_{n}(\C)\) symmetric and invertible, be a convergent Gaussian integral\footnote{It is assumed that all the requirements on $A$ such that the integral converges are satisfied.}. Then
\begin{equation*}
\int x_{a_1}x_{a_2}\dotsb x_{a_k} e^{-\half x\Transpose A x} \,\rd^n x
=
S
\left(\frac{\partial}{\partial J_{a_1}}\frac{\partial}{\partial J_{a_2}}\dotsb \frac{\partial}{\partial J_{a_k}} \left( e^{\half J\Transpose A^{-1} J}\right)\right)_{J=0}
\end{equation*}
for any \(k\in \N\); in particular, the integral vanishes whenever \(k\) is odd.
\end{proposition}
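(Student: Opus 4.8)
The plan is to use the standard source-term (generating-function) trick, which reduces everything to a single completion of the square. First I would introduce the source-dependent integral
\[
Z(J):=\int e^{-\half x\Transpose A x + J\Transpose x}\,\rd^n x,\qquad J\in\C^n,
\]
and observe that differentiating under the integral sign pulls down factors of the coordinates, since each \( \partial/\partial J_{a}\) acting on \( e^{J\Transpose x} \) brings out one \( x_{a} \). Evaluating at \( J=0 \) removes the source exponential, so that
\[
\paren{\frac{\partial}{\partial J_{a_1}}\dotsb\frac{\partial}{\partial J_{a_k}}Z(J)}_{J=0}
=\int x_{a_1}\dotsb x_{a_k}\,e^{-\half x\Transpose A x}\,\rd^n x,
\]
which is exactly the left-hand side we want.

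The key computational step is to evaluate \( Z(J) \) in closed form by completing the square. Setting \( y:=x-A^{-1}J \) and using both the symmetry \( A\Transpose=A \) (so that \( (A^{-1})\Transpose=A^{-1} \)) and invertibility, the cross terms combine to give the identity
\[
-\half x\Transpose A x + J\Transpose x = -\half\, y\Transpose A y + \half\, J\Transpose A^{-1} J .
\]
By translation invariance of the (convergent) Gaussian integral, the shift \( x\mapsto y \) leaves the measure and the value of \( \int e^{-\half y\Transpose A y}\rd^n y = S \) unchanged, so that \( Z(J)=S\,e^{\half J\Transpose A^{-1}J} \). Substituting this into the differentiated expression above yields
\[
\int x_{a_1}\dotsb x_{a_k}\,e^{-\half x\Transpose A x}\,\rd^n x
= S\paren{\frac{\partial}{\partial J_{a_1}}\dotsb\frac{\partial}{\partial J_{a_k}}\,e^{\half J\Transpose A^{-1}J}}_{J=0},
\]
which is the claimed formula. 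The odd-\( k \) vanishing then follows immediately, since \( e^{\half J\Transpose A^{-1}J} \) is an even function of \( J \) and hence any product of an odd number of first derivatives vanishes at \( J=0 \).

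I expect the only genuine subtleties to be analytic rather than algebraic: justifying the interchange of differentiation and integration, and justifying the contour shift \( x\mapsto x-A^{-1}J \) when \( A \) is complex (so that \( A^{-1}J \) is not real). Both are covered by the standing hypothesis that the integral is convergent and that all the requirements on \( A \) ensuring convergence hold; under those assumptions the integrand and its \( J \)-derivatives are dominated uniformly for \( J \) in a neighbourhood of \( 0 \), legitimising differentiation under the integral, and the entire-function dependence on \( J \) allows the translation to be performed by analytic continuation from the real case. Since this is a well-known statement, I would state these points briefly rather than develop the full dominated-convergence and Cauchy-theorem estimates.
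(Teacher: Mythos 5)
Your proof is correct: the completion of the square $-\half x\Transpose A x + J\Transpose x = -\half y\Transpose A y + \half J\Transpose A^{-1} J$ with $y = x - A^{-1}J$ uses the symmetry of $A$ exactly where needed, and the evenness argument for odd $k$ is sound. Note that the paper itself gives no proof of this proposition at all --- it is invoked as ``well-known'' --- so your generating-function argument supplies precisely the standard derivation the authors are relying on, and your treatment of the analytic caveats (differentiation under the integral, the complex contour shift) is at the same level of rigor as the paper's own footnote, which simply assumes whatever conditions on $A$ are needed for convergence.
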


Proposition \ref{prop:gaussian_integral} can be used to find matrix elements by starting with \eqref{eq:gaussian_int1} and setting
\begin{equation}
J:=
\left(\begin{array}{c}
X \\ Y \\ \ov  X \\ \ov Y
\end{array}\right)
,\quad
A:=
\left(\begin{array}{cc|cc}
0 & \conj\omega & \1 & 0\\
-\conj\omega & 0 & 0 & \1\\\hline
\1 & 0 & 0 & \zeta \\
0 & \1 & -\zeta & 0
\end{array}\right) .
\end{equation}
One can easily check that
\begin{equation}
A^{-1}=
\left(\begin{array}{cc|cc}
0 & -\zeta(\1-\omega\Star\zeta)^{-1} & (\1-\zeta\omega\Star)^{-1} & 0 \\
\zeta(\1-\omega\Star\zeta)^{-1} & 0 & 0 & (\1-\zeta\omega\Star)^{-1}\\
\hline
(\1-\omega\Star\zeta)^{-1} & 0 & 0 & -(\1-\omega\Star\zeta)^{-1}\conj\omega\\
0 & (\1-\omega\Star\zeta)^{-1} & (\1-\omega\Star\zeta)^{-1}\conj\omega & 0
\end{array}\right),
\end{equation}
so that
\begin{equation}\label{eq:gaussian_source}
\begin{split}
S(\ov \omega,\zeta):=\half J\Transpose A^{-1} J =&\, Y\Transpose \zeta \paren{\1-\omega\Star\zeta}^{-1} X + \ov Y\Transpose \paren{\1-\omega\Star\zeta}^{-1}\ov\omega\ov X
\\
&+ \ov X\Transpose \paren{\1-\omega\Star\zeta}^{-1} X + \ov Y\Transpose \paren{\1-\omega\Star\zeta}^{-1} Y.
\end{split}
\end{equation}
Then, for any operator of the form\footnote{Here we use the multi-index notation again.}
\begin{equation}
p(A,B,A\Dagger,B\Dagger)=
A^{k_1} B^{k_2} \paren{A\Dagger}^{k_3} \paren{B\Dagger}^{k_4},
\quad k_1,k_2,k_3,k_4\in \N_0^N,
\end{equation}
where it is important that all the raising operators are on the right (anti-normal ordering)\footnote{If they are not, they can always be rewritten in this form up to some summands proportional to the identity, for which it is trivial to compute matrix elements.}, we have, as a consequence of Proposition \ref{prop:gaussian_integral},
\begin{equation}
\begin{split}
\braket{\omega|p(A,B,A\Dagger,B\Dagger)|\zeta} &= \int_{\C^{2N}}\rd\mu(\alpha,\beta)p(\alpha,\beta,\ov\alpha,\ov\beta)\braket{\omega|\alpha,\beta}\braket{\alpha,\beta|\zeta}
\\
&=\braket{\omega|\zeta}\left({p\paren{\nabla_X,\nabla_Y,\nabla_{\ov X},\nabla_{\ov Y}}}e^{S(\conj\omega,\zeta)}\right)_{X=Y=0} ,
\end{split}
\end{equation}
where
\begin{equation}
(\nabla_X)^k=\frac{\partial^{k_1}}{\partial X_1^{k_1}}\frac{\partial^{k_2}}{\partial X_2^{k_2}}\dotsb \frac{\partial ^{k_n}}{\partial X_n^{k_N}},\quad k\in\N^N.
\end{equation}

\subsection{Proof of Proposition \ref*{prop:coherent_matrix_elements}}
 First let us rewrite \( E_{ab} \) as
\begin{equation}
E_{ab}=A_b A\Dagger_a + B_b B\Dagger_a - \delta_{ab}
\end{equation}
using the commutation relations of the harmonic oscillators. Then we can insert the resolution of the identity for the harmonic oscillators coherent states to obtain
\begin{equation}
\begin{split}
\braket{\omega|A_b A\Dagger_a|\zeta} &= \int_{\C^{2N}}\rd\mu(\alpha,\beta)\braket{\omega|A_b|\alpha,\beta}\braket{\alpha,\beta|A\Dagger_a|\zeta}
\\
&= \int_{\C^{2N}}\rd\mu(\alpha,\beta)\,\ov{\alpha_a}\alpha_b\braket{\omega|\alpha,\beta}\braket{\alpha,\beta|\zeta}
\\
&= \mathcal{N}(\omega)\mathcal{N}(\zeta) \int_{\C^{2N}}\rd\mu(\alpha,\beta)\,\ov {\alpha_a}\alpha_b \,e^{\beta\Star\zeta\ov\alpha+\beta\Transpose \ov\omega\alpha - \alpha\Star\alpha-\beta\Star\beta};
\end{split}
\end{equation}
applying Proposition \ref{prop:gaussian_integral} together with  \eqref{eq:gaussian_int1} and\eqref{eq:gaussian_source}, we obtain
\begin{equation}
\begin{split}
\braket{\omega|A_b A\Dagger_a|\zeta} &= \braket{\omega|\zeta}\left({\frac{\partial}{\partial \ov X_a}\frac{\partial}{\partial X_b}} e^{\ov X\Transpose \paren{\1-\omega\Star\zeta}^{-1} X + \dotsb}\right)_{X=Y=0}
\\
&= \braket{\omega|\zeta}\bracks{\paren{\1-\omega\Star\zeta}^{-1}}_{ab}.
\end{split}
\end{equation}
Similarly
\begin{equation}
\braket{\omega|B_b B\Dagger_a|\zeta} = \braket{\omega|\zeta}\bracks{\paren{\1-\omega\Star\zeta}^{-1}}_{ab}
\end{equation}
so that
\begin{equation}
\braket{\omega|E_{ab}|\zeta}=\braket{\omega|\zeta}\bracks{2\paren{\1-\omega\Star\zeta}^{-1} -\1}_{ab}=\braket{\omega|\zeta}\bracks{\1 + 2\omega\Star\zeta\paren{\1-\omega\Star\zeta -\1}^{-1}}_{ab}
\end{equation}
as
\begin{equation}
\paren{\1-X}^{-1}=\1+ X\paren{\1-X}^{-1}.
\end{equation}
To obtain the matrix elements of \( F_{ab} \) we insert the resolution of the identity again, which gives
\begin{equation}
\begin{split}
\braket{\omega|B_a A_b|\zeta} &= \int_{\C^{2N}}\rd\mu(\alpha,\beta)\braket{\omega|B_a A_b|\alpha,\beta}\braket{\alpha,\beta|\zeta}
= \int_{\C^{2N}}\rd\mu(\alpha,\beta)\,\beta_a\alpha_b\braket{\omega|\alpha,\beta}\braket{\alpha,\beta|\zeta}
\\
&= \braket{\omega|\zeta} \left({\frac{\partial}{\partial Y_a}\frac{\partial}{\partial X_b}} e^{Y\Transpose \zeta\paren{\1-\omega\Star\zeta}^{-1} X + \dotsb}\right)_{X=Y=0}
= \braket{\omega|\zeta} \bracks{\zeta\paren{\1-\omega\Star\zeta}^{-1}}_{ab},
\end{split}
\end{equation}
leading to
\begin{equation}
\braket{\omega|F_{ab}|\zeta}=\braket{\omega|\zeta} \bracks{2\zeta\paren{\1-\omega\Star\zeta}^{-1}}_{ab}
\end{equation}
as
\begin{equation}
\bracks{\zeta\paren{\1-\omega\Star\zeta}^{-1}}\Transpose=-\paren{\1-\zeta\omega\Star}^{-1}\zeta=-\zeta\paren{\1-\omega\Star\zeta}^{-1}.
\end{equation}
 The matrix elements of \( \widetilde{F}_{ab} \) are easily obtained from the \( F_{ab} \) ones as
\begin{equation}
\begin{split}
\braket{\omega|\widetilde{F}_{ab}|\zeta} &= \ov{\braket{\zeta|F_{ab}|\omega}}
= \ov{\braket{\zeta|\omega}} \bracks{2\ov\omega\paren{\1-\zeta\omega\Star}^{-1}}_{ab}
= \braket{\omega|\zeta} \bracks{2\paren{\1-\omega\Star\zeta}^{-1}\ov\omega}_{ab}.
\end{split}
\end{equation}

\subsection{Proof of Proposition \ref*{prop:expected_values}}

The form of the expected values follows directly from Proposition \ref{prop:coherent_matrix_elements}. In order to calculate the variances, we will need the covariance\footnote{Note that the \( \mathcal{A}_a \) all commute, so there is no ordering ambiguity.}
\begin{equation}
\Cov(\mathcal{A}_a,\mathcal{A}_b):=\braket{\mathcal{A}_a\mathcal{A}_b}-\braket{\mathcal{A}_a}\braket{\mathcal{A}_b}.
\end{equation}
First note that
\begin{equation}
\begin{split}
4\mathcal{A}_a\mathcal{A}_b =&\, \paren{A_a A\Dagger_a + B_a B\Dagger_a-2}\paren{A_b A\Dagger_b + B_b B\Dagger_b-2}
\\
=& \, A_a A\Dagger_a A_b A\Dagger_b + B_a B\Dagger_a B_b B\Dagger_b + A_a A\Dagger_a B_b B\Dagger_b + B_a B\Dagger_a A_b A\Dagger_b
-4\mathcal{A}_a -4\mathcal{A}_b -4
\\
=&\, A_a A_b A\Dagger_a A\Dagger_b + B_a B_b B\Dagger_a B\Dagger_b + A_a B_b A\Dagger_a B\Dagger_b + A_b B_a A\Dagger_b B\Dagger_a
-4\mathcal{A}_a -4\mathcal{A}_b -2\delta_{ab}\mathcal{A}_a - 4 -2\delta_{ab}.
\end{split}
\end{equation}
Making use of the resolution of the identity for the \( \mathrm{H}_{2n} \) coherent states we get\footnote{To simplify notation we define \( \sigma := \paren{\1-\zeta\Star\zeta}^{-1} \).}
\begin{equation}
\begin{split}
\braket{A_a A_b A\Dagger_a A\Dagger_b} &= \int_{\C^{2N}}\rd\mu(\alpha,\beta)\braket{\zeta|A_a A_b|\alpha,\beta}\braket{\alpha,\beta| A\Dagger_a A\Dagger_b|\zeta}
= \int_{\C^{2N}}\rd\mu(\alpha,\beta)\,\alpha_a \alpha_b \ov\alpha_a \ov\alpha_b \braket{\zeta|\alpha,\beta}\braket{\alpha,\beta|\zeta}
\\
&= \left({\frac{\partial}{\partial X_a}\frac{\partial}{\partial X_b}\frac{\partial}{\partial \ov X_a}\frac{\partial}{\partial \ov X_b}}e^{\ov{X}\Transpose\sigma X+\dotsb}\right)_{X=Y=0}
= \frac{\partial}{\partial X_a}\frac{\partial}{\partial X_b} \paren{(\sigma X)_a + (\sigma X)_b}
=\sigma_{aa}\sigma_{bb}+\sigma_{ab}\sigma_{ba}
\end{split}
\end{equation}
and similarly
\begin{equation}
\braket{B_a B_b B\Dagger_a B\Dagger_b} = \sigma_{aa}\sigma_{bb}+\sigma_{ab}\sigma_{ba},
\end{equation}
while for the term with both harmonic oscillators we have
\begin{equation}
\begin{split}
\braket{A_a B_b A\Dagger_a B\Dagger_b} &= \int_{\C^{2N}}\rd\mu(\alpha,\beta)\braket{\zeta|A_a B_b|\alpha,\beta}\braket{\alpha,\beta| A\Dagger_a B\Dagger_b|\zeta}
= \int_{\C^{2N}}\rd\mu(\alpha,\beta)\,\alpha_a \beta_b \ov\alpha_a \ov\beta_b \braket{\zeta|\alpha,\beta}\braket{\alpha,\beta|\zeta}
\\
&= \left({\frac{\partial}{X_a}\frac{\partial}{\partial Y_b}\frac{\partial}{\partial \ov X_a}\frac{\partial}{\partial \ov Y_b}}e^{Y\Transpose \zeta \sigma X + \ov Y\Transpose \sigma \ov\zeta \ov X +\ov{X}\Transpose\sigma X+ \ov Y \sigma Y}\right)_{X=Y=0}
\\
&= \left({\frac{\partial}{\partial X_a}\frac{\partial}{\partial Y_b}\frac{\partial}{\partial \ov X_a}} \paren{(\sigma Y)_b + \paren{\sigma \ov\zeta \ov X}_b}
\, e^{Y\Transpose \zeta \sigma X +\ov{X}\Transpose\sigma X+ \dotsb}\right)_{X=Y=0}
\\
&= \left({\frac{\partial}{\partial X_a}\frac{\partial}{\partial Y_b}} \paren{(\sigma X)_a (\sigma Y)_b + \paren{\sigma \ov\zeta}_{ba}}
\, e^{Y\Transpose \zeta \sigma X + \dotsb}\right)_{X=Y=0}
= \sigma_{aa}\sigma_{bb}+\paren{\sigma \ov\zeta}_{ba} \paren{\zeta\sigma}_{ba}
\\
&= \sigma_{aa}\sigma_{bb} + \paren{\sigma\zeta\Star}_{ab}\paren{\zeta\sigma}_{ba}
= \sigma_{aa}\sigma_{bb} + \paren{\sigma\zeta\Star}_{ba}\paren{\zeta\sigma}_{ab}
\end{split}
\end{equation}
Eventually we can compute the covariance as\footnote{Recall that \( \zeta\Star\zeta\sigma=\sigma-\1 \), so that \( \braket{\mathcal{A}_a}=\sigma_{aa}-1 \).}
\begin{equation}
\begin{split}
\Cov(\mathcal{A}_a,\mathcal{A}_b) =& \sigma_{aa}\sigma_{bb}+\half\sigma_{ab}\sigma_{ba} + \half \paren{\sigma\zeta\Star}_{ab}\paren{\zeta\sigma}_{ba} -\sigma_{aa} - \sigma_{bb} 
\\
& - \half \delta_{ab}\sigma_{ab} + 1 - \sigma_{aa}\sigma_{bb}+\sigma_{aa}+\sigma_{bb}-1
\\
=& \half\sigma_{ab}\sigma_{ba} + \half \paren{\sigma\zeta\Star}_{ab}\paren{\zeta\sigma}_{ba} - \half \delta_{ab}\sigma_{ab},
\end{split}
\end{equation}
which leads to\footnote{Note that \( \zeta\sigma \) is antisymmetric.}
\begin{equation}
\Var(\mathcal{A}_a) := \Cov(\mathcal{A}_a,\mathcal{A}_a)=\half\sigma_{aa}(\sigma_{aa}-1)\equiv
\half\braket{\mathcal{A}_a}\paren{\braket{\mathcal{A}_a}+1}
\end{equation}
and
\begin{equation}
\Var(\mathcal{A}) :=\sum_{a,b}\Cov(\mathcal{A}_a,\mathcal{A}_b)=\tr(\sigma^2-\sigma) \equiv
\sum_{a,b}\braket{\mathcal{A}_{ab}}\paren{\braket{\mathcal{A}_{ab}}+\delta_{ab}}.
\end{equation}

The coefficient of variation for the total area is then given by
\begin{equation}
\frac{\sqrt{\Var{\mathcal{A}}}}{\braket{\mathcal{A}}}=\frac{\sqrt{\tr\bracks{\sigma(\sigma-\1)}}}{\tr(\sigma-\1)}\geq 0;
\end{equation}
making use of the fact that, as both \( \sigma \) and \( \sigma-\1 \) are positive semi-definite\footnote{Note that as \( (\1-\zeta\Star\zeta) \leq \1\), we must have  \( \sigma\geq \1 \).},
\begin{equation}
\tr\bracks{\sigma(\sigma-1)}\leq\tr(\sigma)\tr(\sigma-\1),
\end{equation}
we obtain an upper bound for the coefficient of variation,
\begin{equation}
\frac{\sqrt{\Var{\mathcal{A}}}}{\braket{\mathcal{A}}} \leq \paren{\frac{\tr(\sigma)}{\tr(\sigma)-N}}^{\half}.
\end{equation}
When the non-zero eigenvalues of \( \zeta\Star\zeta\) approach \( \1 \) we have \( \tr(\sigma)\rightarrow\infty \), so that
\begin{equation}
\frac{\sqrt{\Var{\mathcal{A}}}}{\braket{\mathcal{A}}} \lesssim 1 \quad \mbox{when} \quad \tr(\sigma)\rightarrow \infty,
\end{equation}
as expected.

\subsection{Proof of Proposition \ref*{prop:probability}}\label{proof:probability}

Let us define
\begin{equation}
\ket{J,\zeta}:=\left(\half \widetilde F_\zeta\right)^J\ket{0},\quad J\in \N,
\end{equation}
which  are eigenvectors of $\mathcal{A}$, with
\begin{equation}
\mathcal{A}\ket{J,\zeta}=J\ket{J,\zeta},
\end{equation}
as $\widetilde F_\zeta$ adds one quantum of area each time\footnote{In fact \( [\mathcal{A},\widetilde F_{\zeta}]=\widetilde F_{\zeta}\).}. 
The $\SO^*(2N)$ coherent states can then be written as
\begin{equation}
\begin{split}
\ket{\zeta}&=\det(\1-\zeta\Star\zeta)^{\half} \exp\paren{\half \widetilde F_\zeta} \ket{0}
=\det(\1-\zeta\Star\zeta)^{\half} \sum_{J=0}^\infty\frac{1}{J!}\ket{J,\zeta}.
\end{split}
\end{equation}
Since the $\ket{J,\zeta}$ states are mutually orthogonal\footnote{As they are eigenvectors of a self-adjoint operator, with different eigenvalues.}, the probability that $\ket{\zeta}$ is measured with total area $J$ is given by
\begin{equation}\label{eq:probability_from_projection}
P_\zeta(J)\equiv\frac{\abs{\braket{J,\zeta|\zeta}}^2}{\braket{J,\zeta|J,\zeta}}
=\frac{\det(\1-\zeta\Star\zeta)}{(J!)^2}\braket{J,\zeta|J,\zeta};
\end{equation}
it remains to calculate the norm squared of the state $\ket{J,\zeta}$.
Recall that
\begin{equation}
\bracks{\half F_\zeta,\half \widetilde  F_\zeta}=E_{\frac{1}{4}(\zeta-\zeta\Transpose)(\zeta-\zeta\Transpose)\Star}= E_{\zeta\zeta\Star}
\end{equation}
and
\begin{equation}
\bracks{E_{\zeta\zeta\Star},\half \widetilde  F_\zeta}=\half \widetilde F_{\zeta\zeta\Star\zeta+\zeta\ov{\zeta}\zeta\Transpose}=\widetilde F_{\zeta\zeta\Star\zeta};
\end{equation}
moreover, since $\zeta$ is of rank $2$, one has 
\begin{equation}
\zeta\zeta\Star\zeta=\half \tr(\zeta\zeta\Star)\zeta,
\end{equation}
so that
\begin{equation}
\begin{split}
\bracks{E_{\zeta\zeta\Star},\paren{\half \widetilde  F_\zeta}^k}&=\sum_{\ell=1}^{k}\paren{\half \widetilde  F_\zeta}^{\ell-1}
\,\bracks{E_{\zeta\zeta\Star},\half \widetilde  F_\zeta}\,
\paren{\half \widetilde  F_\zeta}^{k-\ell}
\\
&=\sum_{\ell=1}^{k}\paren{\half \widetilde  F_\zeta}^{\ell-1}
\,\widetilde F_{\zeta\zeta\Star\zeta}\,
\paren{\half \widetilde  F_\zeta}^{k-\ell}
\\
&=k\tr(\zeta\zeta\Star)\paren{\half \widetilde  F_\zeta}^{k}.
\end{split}
\end{equation}
It follows that\footnote{Recall that $F_\alpha \ket{0}=0$ and that $E_{\alpha}\ket{0}=\tr(\alpha)\ket{0}$.}
\begin{equation}
\begin{split}
\half F_\zeta\,\paren{\half \widetilde  F_\zeta}^{J}\ket{0}&=\bracks{\half  F_\zeta,\paren{\half \widetilde  F_\zeta}^{J}}\ket{0}
\\
&=\sum_{k=1}^J \paren{\half \widetilde  F_\zeta}^{k-1}
\,\bracks{\half F_\zeta,\half \widetilde  F_\zeta}\,
\paren{\half \widetilde  F_\zeta}^{J-k}\ket{0}
\\
&=\sum_{k=1}^J \paren{\half \widetilde  F_\zeta}^{k-1}
\,E_{\zeta\zeta\Star}\,
\paren{\half \widetilde  F_\zeta}^{J-k}\ket{0}
\\
&=J\tr(\zeta\zeta\Star)\paren{\half \widetilde  F_\zeta}^{J-1}\ket{0}+
\sum_{k=1}^J \paren{\half \widetilde  F_\zeta}^{k-1}
\,\bracks{E_{\zeta\zeta\Star},\paren{\half \widetilde  F_\zeta}^{J-k}}\ket{0}
\\
&=J(J+1)\half\tr(\zeta\Star\zeta)\paren{\half \widetilde  F_\zeta}^{J-1}\ket{0}
\end{split}
\end{equation}
and in particular
\begin{equation}
\begin{split}
\braket{J,\zeta|J,\zeta}&=\braket{0|\paren{\half  F_\zeta}^{J}\paren{\half \widetilde  F_\zeta}^{J}|0}\\
&=J(J+1)\tr(\zeta\Star\zeta)\braket{0|\paren{\half  F_\zeta}^{J-1}\paren{\half \widetilde  F_\zeta}^{J-1}|0}
\\
&=J(J+1)\half\tr(\zeta\Star\zeta)\braket{J-1,\zeta|J-1,\zeta}.
\end{split}
\end{equation}
Solving the recurrence relation with $\braket{0,\zeta|0,\zeta}=\braket{0|0}=1$ we obtain
\begin{equation}
\braket{J,\zeta|J,\zeta}=J!(J+1)!\paren{\half\tr(\zeta\Star\zeta)}^J
\end{equation}
which, plugged in \eqref{eq:probability_from_projection}, gives
\begin{equation}
P_\zeta(J)=\det(\1-\zeta\Star\zeta)\paren{\half\tr(\zeta\Star\zeta)}^J (J+1)
\end{equation}
as expected.

\section{Proof of proposition \ref*{prop:symmetries}}\label{proof of prop sym}
For any anti-symmetric matrix \( \zeta\in M_N(\C) \) of rank \( 2k \) there is \( U\in \mathrm{U}(N) \) such that
\begin{equation}
\zeta = U M U^\transpose,
\end{equation}
where
\begin{equation}
M = \bigoplus_{\alpha=1}^k \sigma_\alpha \oplus 0_{N-2k},\quad
\sigma_\alpha = \lambda_\alpha
\begin{pmatrix}
0 & -1 \\ 1 & 0
\end{pmatrix}
\end{equation}
and
\begin{equation}
\lambda_1 \geq \lambda_2\geq \dotsb\geq \lambda_k>0
\end{equation}
are the positive square roots of the eigenvalues of \( \zeta^*\zeta \). The unitary matrix \( U \) is not unique, as the matrix \( \widetilde U= UW \), \( W\in\mathrm{U}(N) \), satisfies
\begin{equation}
\zeta = \widetilde{U} M \widetilde{U}^\transpose
\end{equation}
whenever
\begin{equation}\label{eq:V}
WMW^\transpose = M.
\end{equation}
Let us find the generic form of the matrices \( W\in\mathrm{U}(N) \) satisfying \eqref{eq:V}. Let
\begin{equation}
W =
\begin{pmatrix}
A_{11} & A_{12} & \cdots & A_{1k} & B_1\\
A_{21} & A_{22} & \cdots & A_{2k} & B_2\\
\vdots & \vdots & \ddots & \vdots & \vdots\\
A_{k1} & A_{k2} & \cdots & A_{kk} & B_k\\
C_1    & C_2    & \cdots & C_k    & D\\
\end{pmatrix},
\end{equation}
with \( A_{\alpha\beta}\in M_2(\C)\), \(B_\alpha\in M_{2,N-2k}(\C) \), \(C_\alpha\in M_{N-2k,2}(\C) \) and \(D\in M_{N-2k}(\C) \). Then \( V \) satisfies \eqref{eq:V} if and only if
\begin{subequations}
\begin{gather}
\label{eq:V_AA}
\sum_{\gamma=1}^k A_{\alpha\gamma} \sigma_\gamma A_{\beta\gamma}^\transpose = \delta_{\alpha\beta} \sigma_\alpha
\\\label{eq:V_AC-CC}
\sum_{\beta=1}^k A_{\alpha\beta} \sigma_\beta C_\beta^\transpose =
\sum_{\alpha=1}^k C_\alpha \sigma_\alpha C_\alpha^\transpose  = 0.
\end{gather}
\end{subequations}
Moreover, since \( W \) is unitary it must satisfy the condition \( W^* W = \1 \), which is equivalent to
\begin{subequations}
\begin{gather}
\label{eq:unitary_AA}
\sum_{\gamma=1}^k A_{\gamma\alpha}^* A_{\gamma\beta} + C_\alpha^* C_\beta = \delta_{\alpha\beta} \1_2
\\\label{eq:unitary_AB}
\sum_{\beta=1}^k A_{\beta\alpha}^* B_\beta + C_\alpha^* D = 0
\\\label{eq:unitary_BB}
\sum_{\alpha=1}^k B_\alpha^* B_\alpha + D^* D  = \1_{N-2k}.
\end{gather}
\end{subequations}
Putting together \eqref{eq:unitary_AA} and \eqref{eq:V_AC-CC} we get
\begin{equation}
\sum_{\gamma,\varepsilon} A^*_{\varepsilon\beta} A_{\varepsilon\gamma} \sigma_\gamma A_{\alpha\gamma}^\transpose = \sum_{\gamma}\paren{\delta_{\beta\gamma} \sigma_\gamma A^\transpose_{\alpha\gamma} - C^*_\beta C_\gamma \sigma_\gamma A^\transpose_{\alpha\gamma} } = \sigma_\beta A^\transpose_{\alpha\beta},
\end{equation}
while we know from \eqref{eq:V_AA} that
\begin{equation}
\sum_{\gamma,\varepsilon} A^*_{\varepsilon\beta} A_{\varepsilon\gamma} \sigma_\gamma A_{\alpha\gamma}^\transpose = \sum_{\varepsilon} \delta_{\varepsilon\alpha} A^*_{\varepsilon\beta} \sigma_\varepsilon = A^*_{\alpha\beta} \sigma_\alpha.
\end{equation}
It follows that
\begin{equation}\label{eq:Az_zAbar}
A_{\alpha\beta} \sigma_\beta = \sigma_\alpha \ov A_{\alpha\beta}.
\end{equation}
Similarly, we can show that
\begin{equation}
0=\sum_{\beta,\varepsilon} A^*_{\varepsilon\alpha} A_{\varepsilon\beta} \sigma_\beta C_{\beta}^\transpose = \sum_{\beta} \paren{\delta_{\alpha\beta} \sigma_\beta C^\transpose_\beta - C^*_{\alpha}C_\beta \sigma_\beta C^\transpose_\beta } = \sigma_\alpha C^\transpose_\alpha,
\end{equation}
which means that, as each \( \sigma_\alpha \) is invertible, that
\begin{equation}\label{eq:C=0}
C_\alpha = 0.
\end{equation}
Since \( W \) is unitary, it also satisfies \( WW^* = \1 \), which in particular implies that
\begin{equation}
\sum_{\alpha} C_\alpha C^*_\alpha + D D^* \equiv D D^* = \1_{N-2k},
\end{equation}
i.e., \( D\in \mathrm{U}(N-2k) \), from which it follows that \( D^* D = \1_{N-2k} \) as well. Plugging in this result in \eqref{eq:unitary_BB} we see that
\begin{equation}
\sum_{\alpha} B^*_\alpha B_\alpha = 0.
\end{equation}
Since each \( B^*_\alpha B_\alpha \) is positive semi-definite we must have  $B_\alpha =0$.
Now, using the fact that for any \( X\in M_2(\C) \)
\begin{equation}
X
\begin{pmatrix}
0 & -1 \\ 1 & 0
\end{pmatrix}
X^\transpose = \det(X)
\begin{pmatrix}
0 & -1 \\ 1 & 0
\end{pmatrix},
\end{equation}
we see from \eqref{eq:Az_zAbar} that
\begin{subequations}
\begin{align}
A^*_{\alpha\beta} A_{\alpha\beta} \sigma_\beta &= \ov{a}_{\alpha\beta} \sigma_\alpha\\
\sigma_\alpha \ov{A}_{\alpha\beta} A^\transpose_{\alpha\beta} &= a_{\alpha\beta} \sigma_\beta
\end{align}
\end{subequations}
where \(  a_{\alpha\beta} := \det(A_{\alpha\beta}) \), that is
\begin{subequations}\label{eq:A*A_AA*}
\begin{align}
A^*_{\alpha\beta} A_{\alpha\beta} &= \frac{\lambda_\alpha}{\lambda_\beta} \ov{a}_{\alpha\beta} \1_2 \equiv \frac{\lambda_\alpha}{\lambda_\beta} {a}_{\alpha\beta} \1_2
\\
A_{\alpha\beta} A^*_{\alpha\beta} &= \frac{\lambda_\beta}{\lambda_\alpha} {a}_{\alpha\beta} \1_2,
\end{align}
\end{subequations}
as we must have  \( a_{\alpha\beta} \in \R \).

Eqs. \eqref{eq:A*A_AA*} have two important consequences. First, notice that
\begin{equation}
\det(A_{\alpha\beta})=0
\quad \Rightarrow \quad
A^*_{\alpha\beta} A_{\alpha\beta}=0
\quad \Rightarrow \quad
A_{\alpha\beta}=0,
\end{equation}
that is each \( A_{\alpha\beta} \) is either invertible or zero. Secondly, taking the determinant on both sides of the two equations in \eqref{eq:A*A_AA*}, we get
\begin{equation}
a_{\alpha\beta}^2 = \frac{\lambda_\alpha^2}{\lambda_\beta^2} a^2_{\alpha\beta} = \frac{\lambda_\beta^2}{\lambda_\alpha^2} a_{\alpha\beta}^2,
\end{equation}
which means that, \textit{unless} \( \lambda_\alpha = \lambda_\beta \), we must have  \( a_{\alpha\beta} = 0 \), which as we have seen implies \( A_{\alpha\beta}=0 \).

\subsection*{{Case 1: all \( \lambda_\alpha \) are distinct}}

If all the \( \lambda_\alpha \) are different from each other, we must have 
\begin{equation}
A_{\alpha\beta} = 0,\quad \alpha\neq \beta.
\end{equation}
Then it follows from \eqref{eq:unitary_AA} that
\begin{equation}
A^*_{\alpha\alpha} A_{\alpha\alpha} = \1_2,
\end{equation}
i.e., \( A_{\alpha\alpha} \in \mathrm{U}(2) \), and from \eqref{eq:V_AA} that
\begin{equation}
a_{\alpha\alpha} \sigma_\alpha = A_{\alpha\alpha} \sigma_\alpha A_{\alpha\alpha}^\transpose = \sigma_\alpha,
\end{equation}
so \( A_{\alpha\alpha} \in \SU(2) \). Thus the generic form of \( V \) is
\begin{equation}
W=
\begin{pmatrix}
W_1 & 0   & \cdots & 0 & 0\\
0 & W_2 & \cdots & 0 & 0\\
\vdots & \vdots & \ddots & \vdots & \vdots\\
0 & 0 & \cdots & W_k & 0\\
0 & 0 & \cdots & 0    & D\\
\end{pmatrix},
\end{equation}
with \( W_\alpha \in \SU(2)\) and \( D\in \mathrm{U}(N-2k) \).

\subsection*{{Case 2: some \( \lambda_\alpha \) are identical}}

When some of the \( \lambda_\alpha \) are the same, \( M \) has some additional invariance. Suppose there are \( \ell\leq k \) distinct \( \lambda_\alpha \), and let us denote them by
\begin{equation}
\Lambda_1\geq \Lambda_2 \geq\dotsb \geq \Lambda_\ell >0.
\end{equation}
Moreover, let \( \mu_i \) be the multiplicity of \( \Lambda_i \). As we have seen, \( A_{\alpha\beta}= 0 \)  when \( \lambda_\alpha\neq \lambda_\beta \), so that \( W \) must be of the form
\begin{equation}
W=
\begin{pmatrix}
W_1 & 0   & \cdots & 0 & 0\\
0 & W_2 & \cdots & 0 & 0\\
\vdots & \vdots & \ddots & \vdots & \vdots\\
0 & 0 & \cdots & W_\ell & 0\\
0 & 0 & \cdots & 0    & D\\
\end{pmatrix},
\quad W_i \in M_{2\mu_i}(\C),
\end{equation}
while
\begin{equation}
M=
\begin{pmatrix}
M_1 & 0   & \cdots & 0 & 0\\
0 & M_2 & \cdots & 0 & 0\\
\vdots & \vdots & \ddots & \vdots & \vdots\\
0 & 0 & \cdots & M_\ell & 0\\
0 & 0 & \cdots & 0    & 0\\
\end{pmatrix},
\quad M_i \in M_{2\mu_i}(\C),
\end{equation}
with
\begin{equation}
M_i = \Lambda_i \Omega_i,\quad \Omega_i=
\begin{pmatrix}
\sigma & 0   & \cdots & 0 \\
0 & \sigma & \cdots & 0 \\
\vdots & \vdots & \ddots & \vdots \\
0 & 0 & \cdots & \sigma \\
\end{pmatrix}, \quad \sigma=\begin{pmatrix}
0 & -1 \\ 1 & 0
\end{pmatrix}.
\end{equation}
It follows that \( W M W^\transpose = M \) if and only if, for each \( i=1,\dotsc,\ell \),
\begin{equation}
W_i \Omega_i W_i^\transpose = \Omega_i,
\end{equation}
that is \( W_i\in \Sp(2\mu_i,\C) \), the complex symplectic group. Since each \( W_i \) has to be unitary as well, we conclude that \( W \) leaves \( M \) invariant if and only if \( W_i \) belongs to the \textit{compact symplectic group}, i.e.,
\begin{equation}
W_i\in \Sp(2\mu_i) := \Sp(2\mu_i,\C)\cap \mathrm{U}(2\mu_i).
\end{equation}
Note that if \( \mu_i=1 \), then
\begin{equation}
W_i\in\Sp(2)\equiv \SU(2),
\end{equation}
as expected.

\bibliography{so*(2n)}

\begin{thebibliography}{25}%
\makeatletter
\providecommand \@ifxundefined [1]{%
 \@ifx{#1\undefined}
}%
\providecommand \@ifnum [1]{%
 \ifnum #1\expandafter \@firstoftwo
 \else \expandafter \@secondoftwo
 \fi
}%
\providecommand \@ifx [1]{%
 \ifx #1\expandafter \@firstoftwo
 \else \expandafter \@secondoftwo
 \fi
}%
\providecommand \natexlab [1]{#1}%
\providecommand \enquote  [1]{``#1''}%
\providecommand \bibnamefont  [1]{#1}%
\providecommand \bibfnamefont [1]{#1}%
\providecommand \citenamefont [1]{#1}%
\providecommand \href@noop [0]{\@secondoftwo}%
\providecommand \href [0]{\begingroup \@sanitize@url \@href}%
\providecommand \@href[1]{\@@startlink{#1}\@@href}%
\providecommand \@@href[1]{\endgroup#1\@@endlink}%
\providecommand \@sanitize@url [0]{\catcode `\\12\catcode `\$12\catcode
  `\&12\catcode `\#12\catcode `\^12\catcode `\_12\catcode `\%12\relax}%
\providecommand \@@startlink[1]{}%
\providecommand \@@endlink[0]{}%
\providecommand \url  [0]{\begingroup\@sanitize@url \@url }%
\providecommand \@url [1]{\endgroup\@href {#1}{\urlprefix }}%
\providecommand \urlprefix  [0]{URL }%
\providecommand \Eprint [0]{\href }%
\providecommand \doibase [0]{http://dx.doi.org/}%
\providecommand \selectlanguage [0]{\@gobble}%
\providecommand \bibinfo  [0]{\@secondoftwo}%
\providecommand \bibfield  [0]{\@secondoftwo}%
\providecommand \translation [1]{[#1]}%
\providecommand \BibitemOpen [0]{}%
\providecommand \bibitemStop [0]{}%
\providecommand \bibitemNoStop [0]{.\EOS\space}%
\providecommand \EOS [0]{\spacefactor3000\relax}%
\providecommand \BibitemShut  [1]{\csname bibitem#1\endcsname}%
\let\auto@bib@innerbib\@empty
\bibitem [{\citenamefont {Freidel}\ and\ \citenamefont
  {Livine}(2010)}]{freidel_fine_2010}%
  \BibitemOpen
  \bibfield  {author} {\bibinfo {author} {\bibfnamefont {Laurent}\ \bibnamefont
  {Freidel}}\ and\ \bibinfo {author} {\bibfnamefont {Etera~R.}\ \bibnamefont
  {Livine}},\ }\bibfield  {title} {\enquote {\bibinfo {title} {The fine
  structure of $\mathrm{SU}(2)$ intertwiners from $\mathrm{U}(n)$
  representations},}\ }\href {\doibase 10.1063/1.3473786} {\bibfield  {journal}
  {\bibinfo  {journal} {J. Math. Phys.}\ }\textbf {\bibinfo {volume} {51}},\
  \bibinfo {pages} {082502} (\bibinfo {year} {2010})},\ \Eprint
  {http://arxiv.org/abs/0911.3553} {arXiv:0911.3553 [gr-qc]} \BibitemShut
  {NoStop}%
\bibitem [{\citenamefont {Freidel}\ and\ \citenamefont
  {Livine}(2011)}]{FreidelLivine2011}%
  \BibitemOpen
  \bibfield  {author} {\bibinfo {author} {\bibfnamefont {Laurent}\ \bibnamefont
  {Freidel}}\ and\ \bibinfo {author} {\bibfnamefont {Etera~R.}\ \bibnamefont
  {Livine}},\ }\bibfield  {title} {\enquote {\bibinfo {title} {{$\mathrm{U}(N)$
  Coherent States for Loop Quantum Gravity}},}\ }\href {\doibase
  10.1063/1.3587121} {\bibfield  {journal} {\bibinfo  {journal} {J. Math.
  Phys.}\ }\textbf {\bibinfo {volume} {52}},\ \bibinfo {pages} {052502}
  (\bibinfo {year} {2011})},\ \Eprint {http://arxiv.org/abs/1005.2090}
  {arXiv:1005.2090 [gr-qc]} \BibitemShut {NoStop}%
\bibitem [{\citenamefont {Girelli}\ and\ \citenamefont
  {Livine}(2005)}]{Girelli:2005ii}%
  \BibitemOpen
  \bibfield  {author} {\bibinfo {author} {\bibfnamefont {Florian}\ \bibnamefont
  {Girelli}}\ and\ \bibinfo {author} {\bibfnamefont {Etera~R.}\ \bibnamefont
  {Livine}},\ }\bibfield  {title} {\enquote {\bibinfo {title} {{Reconstructing
  quantum geometry from quantum information: Spin networks as harmonic
  oscillators}},}\ }\href {\doibase 10.1088/0264-9381/22/16/011} {\bibfield
  {journal} {\bibinfo  {journal} {Class. Quant. Grav.}\ }\textbf {\bibinfo
  {volume} {22}},\ \bibinfo {pages} {3295--3314} (\bibinfo {year} {2005})},\
  \Eprint {http://arxiv.org/abs/gr-qc/0501075} {arXiv:gr-qc/0501075 [gr-qc]}
  \BibitemShut {NoStop}%
\bibitem [{\citenamefont {Livine}\ and\ \citenamefont
  {Tambornino}(2012)}]{Livine:2011gp}%
  \BibitemOpen
  \bibfield  {author} {\bibinfo {author} {\bibfnamefont {Etera~R.}\
  \bibnamefont {Livine}}\ and\ \bibinfo {author} {\bibfnamefont {Johannes}\
  \bibnamefont {Tambornino}},\ }\bibfield  {title} {\enquote {\bibinfo {title}
  {{Spinor Representation for Loop Quantum Gravity}},}\ }\href {\doibase
  10.1063/1.3675465} {\bibfield  {journal} {\bibinfo  {journal} {J. Math.
  Phys.}\ }\textbf {\bibinfo {volume} {53}},\ \bibinfo {pages} {012503}
  (\bibinfo {year} {2012})},\ \Eprint {http://arxiv.org/abs/1105.3385}
  {arXiv:1105.3385 [gr-qc]} \BibitemShut {NoStop}%
\bibitem [{\citenamefont {Dupuis}\ and\ \citenamefont
  {Girelli}(2014)}]{Dupuis:2013lka}%
  \BibitemOpen
  \bibfield  {author} {\bibinfo {author} {\bibfnamefont {Ma{\"\i}t{\'e}}\
  \bibnamefont {Dupuis}}\ and\ \bibinfo {author} {\bibfnamefont {Florian}\
  \bibnamefont {Girelli}},\ }\bibfield  {title} {\enquote {\bibinfo {title}
  {{Observables in Loop Quantum Gravity with a cosmological constant}},}\
  }\href {\doibase 10.1103/PhysRevD.90.104037} {\bibfield  {journal} {\bibinfo
  {journal} {Phys. Rev.}\ }\textbf {\bibinfo {volume} {D90}},\ \bibinfo {pages}
  {104037} (\bibinfo {year} {2014})},\ \Eprint {http://arxiv.org/abs/1311.6841}
  {arXiv:1311.6841 [gr-qc]} \BibitemShut {NoStop}%
\bibitem [{\citenamefont {Dupuis}\ and\ \citenamefont
  {Livine}(2011)}]{Dupuis:2011fz}%
  \BibitemOpen
  \bibfield  {author} {\bibinfo {author} {\bibfnamefont {Maite}\ \bibnamefont
  {Dupuis}}\ and\ \bibinfo {author} {\bibfnamefont {Etera~R.}\ \bibnamefont
  {Livine}},\ }\bibfield  {title} {\enquote {\bibinfo {title} {{Holomorphic
  Simplicity Constraints for 4d Spinfoam Models}},}\ }\href {\doibase
  10.1088/0264-9381/28/21/215022} {\bibfield  {journal} {\bibinfo  {journal}
  {Class. Quant. Grav.}\ }\textbf {\bibinfo {volume} {28}},\ \bibinfo {pages}
  {215022} (\bibinfo {year} {2011})},\ \Eprint {http://arxiv.org/abs/1104.3683}
  {arXiv:1104.3683 [gr-qc]} \BibitemShut {NoStop}%
\bibitem [{\citenamefont {Bianchi}\ \emph
  {et~al.}(2016{\natexlab{a}})\citenamefont {Bianchi}, \citenamefont
  {Guglielmon}, \citenamefont {Hackl},\ and\ \citenamefont
  {Yokomizo}}]{Bianchi:2016tmw}%
  \BibitemOpen
  \bibfield  {author} {\bibinfo {author} {\bibfnamefont {Eugenio}\ \bibnamefont
  {Bianchi}}, \bibinfo {author} {\bibfnamefont {Jonathan}\ \bibnamefont
  {Guglielmon}}, \bibinfo {author} {\bibfnamefont {Lucas}\ \bibnamefont
  {Hackl}}, \ and\ \bibinfo {author} {\bibfnamefont {Nelson}\ \bibnamefont
  {Yokomizo}},\ }\bibfield  {title} {\enquote {\bibinfo {title} {{Squeezed
  vacua in loop quantum gravity}},}\ }\href@noop {} {\  (\bibinfo {year}
  {2016}{\natexlab{a}})},\ \Eprint {http://arxiv.org/abs/1605.05356}
  {arXiv:1605.05356 [gr-qc]} \BibitemShut {NoStop}%
\bibitem [{\citenamefont {Knapp}(1972)}]{boothby_symmetric_1972}%
  \BibitemOpen
  \bibfield  {author} {\bibinfo {author} {\bibfnamefont {Anthony~W.}\
  \bibnamefont {Knapp}},\ }\bibfield  {title} {\enquote {\bibinfo {title}
  {{Bounded Symmetric Domains and Holomorphic Discrete Series}},}\ }in\
  \href@noop {} {\emph {\bibinfo {booktitle} {{Symmetric spaces; short courses
  presented at Washington University}}}},\ \bibinfo {series} {Pure and applied
  mathematics}, Vol.~\bibinfo {volume} {8},\ \bibinfo {editor} {edited by\
  \bibinfo {editor} {\bibfnamefont {William~M.}\ \bibnamefont {Boothby}}\ and\
  \bibinfo {editor} {\bibfnamefont {Guido~L.}\ \bibnamefont {Weiss}}}\
  (\bibinfo  {publisher} {M. Dekker},\ \bibinfo {address} {New York},\ \bibinfo
  {year} {1972})\BibitemShut {NoStop}%
\bibitem [{\citenamefont {Barut}\ and\ \citenamefont
  {Bracken}(1990)}]{so*_applications}%
  \BibitemOpen
  \bibfield  {author} {\bibinfo {author} {\bibfnamefont {A.~O.}\ \bibnamefont
  {Barut}}\ and\ \bibinfo {author} {\bibfnamefont {A.~J.}\ \bibnamefont
  {Bracken}},\ }\bibfield  {title} {\enquote {\bibinfo {title} {{The remarkable
  algebra \( \mathfrak{so}^*(2n)\), its representations, its {Clifford} algebra
  and potential applications}},}\ }\href {\doibase 10.1088/0305-4470/23/5/011}
  {\bibfield  {journal} {\bibinfo  {journal} {Journal of Physics A:
  Mathematical and General}\ }\textbf {\bibinfo {volume} {23}},\ \bibinfo
  {pages} {641} (\bibinfo {year} {1990})}\BibitemShut {NoStop}%
\bibitem [{\citenamefont {Livine}(2013)}]{Livine:2013tsa}%
  \BibitemOpen
  \bibfield  {author} {\bibinfo {author} {\bibfnamefont {Etera~R.}\
  \bibnamefont {Livine}},\ }\bibfield  {title} {\enquote {\bibinfo {title}
  {{Deformations of Polyhedra and Polygons by the Unitary Group}},}\ }\href
  {\doibase 10.1063/1.4840635} {\bibfield  {journal} {\bibinfo  {journal} {J.
  Math. Phys.}\ }\textbf {\bibinfo {volume} {54}},\ \bibinfo {pages} {123504}
  (\bibinfo {year} {2013})},\ \Eprint {http://arxiv.org/abs/1307.2719}
  {arXiv:1307.2719 [math-ph]} \BibitemShut {NoStop}%
\bibitem [{\citenamefont {Perelomov}(1972)}]{perelomov_article}%
  \BibitemOpen
  \bibfield  {author} {\bibinfo {author} {\bibfnamefont {A.~M.}\ \bibnamefont
  {Perelomov}},\ }\bibfield  {title} {\enquote {\bibinfo {title} {Coherent
  states for arbitrary lie group},}\ }\href {\doibase 10.1007/bf01645091}
  {\bibfield  {journal} {\bibinfo  {journal} {Commun. Math. Phys.}\ }\textbf
  {\bibinfo {volume} {26}},\ \bibinfo {pages} {222--236} (\bibinfo {year}
  {1972})},\ \Eprint {http://arxiv.org/abs/math-ph/0203002}
  {arXiv:math-ph/0203002} \BibitemShut {NoStop}%
\bibitem [{\citenamefont {Perelomov}(1986)}]{perelomov_book}%
  \BibitemOpen
  \bibfield  {author} {\bibinfo {author} {\bibfnamefont {Askold~Mikhailovich}\
  \bibnamefont {Perelomov}},\ }\href {\doibase 10.1007/978-3-642-61629-7}
  {\emph {\bibinfo {title} {Generalized Coherent States and Their
  Applications}}},\ Texts and Monographs in Physics\ (\bibinfo  {publisher}
  {Springer-Verlag},\ \bibinfo {year} {1986})\BibitemShut {NoStop}%
\bibitem [{\citenamefont {Bianchi}\ \emph {et~al.}(2011)\citenamefont
  {Bianchi}, \citenamefont {Dona},\ and\ \citenamefont
  {Speziale}}]{Bianchi:2010gc}%
  \BibitemOpen
  \bibfield  {author} {\bibinfo {author} {\bibfnamefont {Eugenio}\ \bibnamefont
  {Bianchi}}, \bibinfo {author} {\bibfnamefont {Pietro}\ \bibnamefont {Dona}},
  \ and\ \bibinfo {author} {\bibfnamefont {Simone}\ \bibnamefont {Speziale}},\
  }\bibfield  {title} {\enquote {\bibinfo {title} {{Polyhedra in loop quantum
  gravity}},}\ }\href {\doibase 10.1103/PhysRevD.83.044035} {\bibfield
  {journal} {\bibinfo  {journal} {Phys. Rev.}\ }\textbf {\bibinfo {volume}
  {D83}},\ \bibinfo {pages} {044035} (\bibinfo {year} {2011})},\ \Eprint
  {http://arxiv.org/abs/1009.3402} {arXiv:1009.3402 [gr-qc]} \BibitemShut
  {NoStop}%
\bibitem [{\citenamefont {Bonzom}\ and\ \citenamefont
  {Livine}(2013)}]{Bonzom:2012bn}%
  \BibitemOpen
  \bibfield  {author} {\bibinfo {author} {\bibfnamefont {Valentin}\
  \bibnamefont {Bonzom}}\ and\ \bibinfo {author} {\bibfnamefont {Etera~R.}\
  \bibnamefont {Livine}},\ }\bibfield  {title} {\enquote {\bibinfo {title}
  {{Generating Functions for Coherent Intertwiners}},}\ }\href {\doibase
  10.1088/0264-9381/30/5/055018} {\bibfield  {journal} {\bibinfo  {journal}
  {Class. Quant. Grav.}\ }\textbf {\bibinfo {volume} {30}},\ \bibinfo {pages}
  {055018} (\bibinfo {year} {2013})},\ \Eprint {http://arxiv.org/abs/1205.5677}
  {arXiv:1205.5677 [gr-qc]} \BibitemShut {NoStop}%
\bibitem [{\citenamefont {Freidel}\ and\ \citenamefont
  {Hnybida}(2013)}]{Freidel:2012ji}%
  \BibitemOpen
  \bibfield  {author} {\bibinfo {author} {\bibfnamefont {Laurent}\ \bibnamefont
  {Freidel}}\ and\ \bibinfo {author} {\bibfnamefont {Jeff}\ \bibnamefont
  {Hnybida}},\ }\bibfield  {title} {\enquote {\bibinfo {title} {{On the exact
  evaluation of spin networks}},}\ }\href {\doibase 10.1063/1.4830008}
  {\bibfield  {journal} {\bibinfo  {journal} {J. Math. Phys.}\ }\textbf
  {\bibinfo {volume} {54}},\ \bibinfo {pages} {112301} (\bibinfo {year}
  {2013})},\ \Eprint {http://arxiv.org/abs/1201.3613} {arXiv:1201.3613 [gr-qc]}
  \BibitemShut {NoStop}%
\bibitem [{\citenamefont {Freidel}\ and\ \citenamefont
  {Hnybida}(2014)}]{Freidel:2013fia}%
  \BibitemOpen
  \bibfield  {author} {\bibinfo {author} {\bibfnamefont {Laurent}\ \bibnamefont
  {Freidel}}\ and\ \bibinfo {author} {\bibfnamefont {Jeff}\ \bibnamefont
  {Hnybida}},\ }\bibfield  {title} {\enquote {\bibinfo {title} {{A Discrete and
  Coherent Basis of Intertwiners}},}\ }\href {\doibase
  10.1088/0264-9381/31/1/015019} {\bibfield  {journal} {\bibinfo  {journal}
  {Class. Quant. Grav.}\ }\textbf {\bibinfo {volume} {31}},\ \bibinfo {pages}
  {015019} (\bibinfo {year} {2014})},\ \Eprint {http://arxiv.org/abs/1305.3326}
  {arXiv:1305.3326 [math-ph]} \BibitemShut {NoStop}%
\bibitem [{\citenamefont {Bonzom}\ \emph {et~al.}(2016)\citenamefont {Bonzom},
  \citenamefont {Costantino},\ and\ \citenamefont {Livine}}]{Bonzom:2015ova}%
  \BibitemOpen
  \bibfield  {author} {\bibinfo {author} {\bibfnamefont {Valentin}\
  \bibnamefont {Bonzom}}, \bibinfo {author} {\bibfnamefont {Francesco}\
  \bibnamefont {Costantino}}, \ and\ \bibinfo {author} {\bibfnamefont
  {Etera~R.}\ \bibnamefont {Livine}},\ }\bibfield  {title} {\enquote {\bibinfo
  {title} {{Duality between Spin networks and the 2D Ising model}},}\ }\href
  {\doibase 10.1007/s00220-015-2567-6} {\bibfield  {journal} {\bibinfo
  {journal} {Commun. Math. Phys.}\ }\textbf {\bibinfo {volume} {344}},\
  \bibinfo {pages} {531--579} (\bibinfo {year} {2016})},\ \Eprint
  {http://arxiv.org/abs/1504.02822} {arXiv:1504.02822 [math-ph]} \BibitemShut
  {NoStop}%
\bibitem [{\citenamefont {Bianchi}\ \emph
  {et~al.}(2016{\natexlab{b}})\citenamefont {Bianchi}, \citenamefont
  {Guglielmon}, \citenamefont {Hackl},\ and\ \citenamefont
  {Yokomizo}}]{Bianchi:2016hmk}%
  \BibitemOpen
  \bibfield  {author} {\bibinfo {author} {\bibfnamefont {Eugenio}\ \bibnamefont
  {Bianchi}}, \bibinfo {author} {\bibfnamefont {Jonathan}\ \bibnamefont
  {Guglielmon}}, \bibinfo {author} {\bibfnamefont {Lucas}\ \bibnamefont
  {Hackl}}, \ and\ \bibinfo {author} {\bibfnamefont {Nelson}\ \bibnamefont
  {Yokomizo}},\ }\bibfield  {title} {\enquote {\bibinfo {title} {{Loop
  expansion and the bosonic representation of loop quantum gravity}},}\ }\href
  {\doibase 10.1103/PhysRevD.94.086009} {\bibfield  {journal} {\bibinfo
  {journal} {Phys. Rev.}\ }\textbf {\bibinfo {volume} {D94}},\ \bibinfo {pages}
  {086009} (\bibinfo {year} {2016}{\natexlab{b}})},\ \Eprint
  {http://arxiv.org/abs/1609.02219} {arXiv:1609.02219 [gr-qc]} \BibitemShut
  {NoStop}%
\bibitem [{\citenamefont {Sellaroli}(2016)}]{thesis}%
  \BibitemOpen
  \bibfield  {author} {\bibinfo {author} {\bibfnamefont {Giuseppe}\
  \bibnamefont {Sellaroli}},\ }\emph {\bibinfo {title} {{Non-compact groups,
  tensor operators and applications to quantum gravity}}},\ \href
  {http://inspirehep.net/record/1488069/files/arXiv:1609.07795.pdf} {Ph.D.
  thesis},\ \bibinfo  {school} {Waterloo U.} (\bibinfo {year} {2016}),\ \Eprint
  {http://arxiv.org/abs/1609.07795} {arXiv:1609.07795 [math-ph]} \BibitemShut
  {NoStop}%
\bibitem [{\citenamefont {Alexandrov}(2005)}]{minkowski}%
  \BibitemOpen
  \bibfield  {author} {\bibinfo {author} {\bibfnamefont {A.~D.}\ \bibnamefont
  {Alexandrov}},\ }\href {\doibase 10.1007/b137434} {\emph {\bibinfo {title}
  {Convex Polyhedra}}},\ Springer Monographs in Mathematics\ (\bibinfo
  {publisher} {Springer-Verlag},\ \bibinfo {year} {2005})\BibitemShut {NoStop}%
\bibitem [{\citenamefont {Kapovich}\ and\ \citenamefont {Millson}(1996)}]{KM}%
  \BibitemOpen
  \bibfield  {author} {\bibinfo {author} {\bibfnamefont {Michael}\ \bibnamefont
  {Kapovich}}\ and\ \bibinfo {author} {\bibfnamefont {John}\ \bibnamefont
  {Millson}},\ }\bibfield  {title} {\enquote {\bibinfo {title} {The symplectic
  geometry of polygons in euclidean space},}\ }\href@noop {} {\bibfield
  {journal} {\bibinfo  {journal} {J. Differential Geom}\ }\textbf {\bibinfo
  {volume} {44}},\ \bibinfo {pages} {479--513} (\bibinfo {year}
  {1996})}\BibitemShut {NoStop}%
\bibitem [{\citenamefont {Iachello}(2015)}]{Iachello_2015}%
  \BibitemOpen
  \bibfield  {author} {\bibinfo {author} {\bibfnamefont {Francesco}\
  \bibnamefont {Iachello}},\ }\href {\doibase 10.1007/978-3-662-44494-8} {\emph
  {\bibinfo {title} {Lie Algebras and Applications}}},\ Lecture Notes in
  Physics\ (\bibinfo  {publisher} {Springer},\ \bibinfo {year}
  {2015})\BibitemShut {NoStop}%
\bibitem [{\citenamefont {Borja}\ \emph {et~al.}(2011)\citenamefont {Borja},
  \citenamefont {Freidel}, \citenamefont {Garay},\ and\ \citenamefont
  {Livine}}]{return_spinor}%
  \BibitemOpen
  \bibfield  {author} {\bibinfo {author} {\bibfnamefont {Enrique~F.}\
  \bibnamefont {Borja}}, \bibinfo {author} {\bibfnamefont {Laurent}\
  \bibnamefont {Freidel}}, \bibinfo {author} {\bibfnamefont {Inaki}\
  \bibnamefont {Garay}}, \ and\ \bibinfo {author} {\bibfnamefont {Etera~R.}\
  \bibnamefont {Livine}},\ }\bibfield  {title} {\enquote {\bibinfo {title}
  {{U(N) tools for Loop Quantum Gravity: The Return of the Spinor}},}\ }\href
  {\doibase 10.1088/0264-9381/28/5/055005} {\bibfield  {journal} {\bibinfo
  {journal} {Class. Quant. Grav.}\ }\textbf {\bibinfo {volume} {28}},\ \bibinfo
  {pages} {055005} (\bibinfo {year} {2011})},\ \Eprint
  {http://arxiv.org/abs/1010.5451} {arXiv:1010.5451 [gr-qc]} \BibitemShut
  {NoStop}%
\bibitem [{\citenamefont {Sellaroli}(2015)}]{Sellaroli:2014ega}%
  \BibitemOpen
  \bibfield  {author} {\bibinfo {author} {\bibfnamefont {Giuseppe}\
  \bibnamefont {Sellaroli}},\ }\bibfield  {title} {\enquote {\bibinfo {title}
  {{Wigner--Eckart theorem for the non-compact algebra 𝔰𝔩(2, ℝ)}},}\
  }\href {\doibase 10.1063/1.4916889} {\bibfield  {journal} {\bibinfo
  {journal} {J. Math. Phys.}\ }\textbf {\bibinfo {volume} {56}},\ \bibinfo
  {pages} {041701} (\bibinfo {year} {2015})},\ \Eprint
  {http://arxiv.org/abs/1411.7467} {arXiv:1411.7467 [math-ph]} \BibitemShut
  {NoStop}%
\bibitem [{\citenamefont {Girelli}\ and\ \citenamefont
  {Sellaroli}(2015)}]{Girelli:2015ija}%
  \BibitemOpen
  \bibfield  {author} {\bibinfo {author} {\bibfnamefont {Florian}\ \bibnamefont
  {Girelli}}\ and\ \bibinfo {author} {\bibfnamefont {Giuseppe}\ \bibnamefont
  {Sellaroli}},\ }\bibfield  {title} {\enquote {\bibinfo {title} {{3d
  Lorentzian loop quantum gravity and the spinor approach}},}\ }\href {\doibase
  10.1103/PhysRevD.92.124035} {\bibfield  {journal} {\bibinfo  {journal} {Phys.
  Rev.}\ }\textbf {\bibinfo {volume} {D92}},\ \bibinfo {pages} {124035}
  (\bibinfo {year} {2015})},\ \Eprint {http://arxiv.org/abs/1506.07759}
  {arXiv:1506.07759 [gr-qc]} \BibitemShut {NoStop}%
\end{thebibliography}%

\end{document}